\newtheorem{theorem}{Theorem}
\newtheorem{lemma}[theorem]{Lemma}
\newtheorem{definition}{Definition}
\title{\vspace{-1cm} Stochastic Model Predictive Control for Networked Systems with Random Delays and Packet Losses in All Channels}
\renewcommand\footnotemark{}
\author{Marijan Palmisano $(1)$, Martin Steinberger $(1)$, Martin Horn $(1,2)$% <-this % stops a space
	\thanks{\small $(1)$ Marijan Palmisano, Martin Steinberger and Martin Horn are with the Institute of Automation and Control, Graz University of Technology, Graz, Austria, (e-mail: marijan.palmisano@student.tugraz.at).}%
}
\date{}
\begin{document}
	\maketitle

%%%%%%%%%%%%%%%%%%%%%%%%%%%%%%%%%%%%%%%%%%%%%%%%%%%%%%%%%%%%%%%%%%%%%%%%%%%%%%%%%%%%%%%%%%%%%%%%%%%%%%%%
%%%%%%%%%%%%%%%%%%%%%%%%%%%%%%%%%%%%%%%%%%%%%%%%%%%%%%%%%%%%%%%%%%%%%%%%%%%%%%%%%%%%%%%%%%%%%%%%%%%%%%%%
\begin{abstract}
A stochastic Model Predictive Control strategy for control systems with communication networks between the sensor node and the controller and between the controller and the actuator node is proposed. Data packets are subject to random delays and packet loss is possible; acknowledgments for received packets are not provided.

The expected value of a quadratic cost is minimized subject to linear constraints; the set of all initial states for which the resulting optimization problem is guaranteed to be feasible is provided. The state vector of the controlled plant is shown to converge to zero with probability one.
\end{abstract}

%%%%%%%%%%%%%%%%%%%%%%%%%%%%%%%%%%%%%%%%%%%%%%%%%%%%%%%%%%%%%%%%%%%%%%%%%%%%%%%%%%%%%%%%%%%%%%%%%%%%%%%%
%%%%%%%%%%%%%%%%%%%%%%%%%%%%%%%%%%%%%%%%%%%%%%%%%%%%%%%%%%%%%%%%%%%%%%%%%%%%%%%%%%%%%%%%%%%%%%%%%%%%%%%%
\section{Introduction}
A typical setup considered in Model Predictive Controller (MPC) design consists of a discrete-time plant and a discrete-time controller where the output of the controller at time step $k$ is the actuating variable of the plant $u_k$ and the input of the controller at time step $k$ is the state variable of the plant $x_k$. The controller output is determined by minimizing a cost function which depends on current and predicted values of state and actuating variable. For details, see, e.g., \cite{Mayne00}.

In a networked control system (NCS), controller and plant are connected via communication networks. Data packets are subject to random delays and packet losses which are two of the major challenges in networked control; see \cite{ZHANG2017} or \cite{Xia2015}. 

While MPCs are also developed for NCS where no random delays can occur between controller and actuator node like in \cite{Bahraini2021}, \cite{Munoz2008}, \cite{Trodden2023} or \cite{Loma-Marconi2024}, such delays appear to be particularly challenging. There are numerous works on designing MPCs for NCS with such delays. For example, delayed packets are discarded in \cite{Quevedo2011} which effectively turns random delays into packet losses.

In \cite{Varutti2009}, buffers are used to compensate for the randomness of the delays by ensuring that the evolution of the actuating variable applied to the plant is equivalent to the evolution of the actuating variable used in the prediction. Effectively, this replaces the random delay with the maximal delay.

Another possible approach, e.g. presented in \cite{Reble2013}, \cite{Quevedo2010} and \cite{Quevedo2007}, involves predicting future values of state and actuating variable using the nominal plant model, i.e. without including the random effects. This introduces a mismatch between the actual setup and the prediction model which can be expected to deteriorate the performance of the closed control loop. 

An approach which avoids such a mismatch is to include the random effects in the prediction model and to minimize the expected value of the cost. It is typically assumed that information about data received by the plant is instantly available to the controller; e.g. $(a)$ via 'error free receipt acknowledgments' in \cite{Mishra2020}, $(b)$ via 'TCP-like protocols' in \cite{Li2018} or $(c)$ implicitly by using such information in the optimization problem to be solved by the MPC in \cite{Wu2009}  and \cite{Liu2011}. 

While minimizing the expected value of the cost function results in the best expected performance, it is not always possible to provide 'error free receipt acknowledgments'. In \cite{Rosenthal2019}, an MPC is presented which minimizes the expected value of the cost function while taking into account that the acknowledgments might be subject to random delays and packet losses. However, this MPC is designed for plants without constraints on the states and inputs; this also applies to the MPCs presented in \cite{Zou2010}. A different strategy is presented in \cite{Mao2020} where the MPC provides 'correction terms' to a local controller which is connected directly to the plant.

Another approach which avoids a mismatch between the actual setup and the prediction model is minimizing the worst-case cost instead of the expected value of the cost like in \cite{Pan2023} where the robust MPC design method presented in \cite{Kothare1996} is extended to NCS.

\subsubsection*{Main Contribution}
To our knowledge, no Model Predictive Control Strategy $(i)$ minimizing the expected value of a cost function has yet been developed $(ii)$ for NCS with random delays and packet losses between controller  and actuator node where $(iii)$ acknowledgments are also subject to random delays and packet losses and $(iv)$ constraints on the states and inputs have to be satisfied. In this work, one such control strategy is presented for NCS with a linear time-invariant (LTI) plant for a quadratic cost and linear inequality constraints.

Consider the common control problem of designing an MPC for an LTI plant which is connected directly to the controller in order to minimize a quadratic cost while satisfying linear equality constraints. The only difference between this control problem and the control problem considered in this work is that plant and controller are separated, i.e. they can only communicate via communication networks. 

The assumptions on the properties of these networks are quite common as well: (i) there are no instantly available acknowledgments about received packets, i.e. 'UDP-like protocols' are used, (ii) packet are timestamped and (iii) random delays and/or packet loss can occur where (iv) delays and number of consecutive packet losses are bounded. The the considered random network effects are handled by minimizing the expected value of the cost in order to achieve the best possible expected performance.

\subsubsection*{Notation}
$\succeq,\succ$ denote (semi-) definiteness; $\geq,>$ denote element-wise inequalities. $\otimes$ denotes the Kronecker product. $I_n$ is the $n\times n$ identity matrix, $1_{n\times m}$ is a $n\times m$ matrix where all elements are $1$, $0_{n\times m}$ is a $n\times m$ matrix where all elements are $0$, $1_n = 1_{n\times 1}$ and $0_{n} = 0_{n\times 1}$. $(a_k)_{i\leq k\leq j}$ with $i,j\in\mathbb{Z}$ denotes the sequence $(a_i, a_{i+1}, ..., a_j)$. $\frown$ denotes appending an element to a sequence, e.g. $(a_1,a_2)^\frown a_3 = (a_1,a_2,a_3)$. $\mathbb{P}[A]$ denotes the probability of event $A$ and $\mathbb{P}[A|B]$ denotes the conditional probability of event $A$ given event $B$.

%%%%%%%%%%%%%%%%%%%%%%%%%%%%%%%%%%%%%%%%%%%%%%%%%%%%%%%%%%%%%%%%%%%%%%%%%%%%%%%%%%%%%%%%%%%%%%%%%%%%%%%%
%%%%%%%%%%%%%%%%%%%%%%%%%%%%%%%%%%%%%%%%%%%%%%%%%%%%%%%%%%%%%%%%%%%%%%%%%%%%%%%%%%%%%%%%%%%%%%%%%%%%%%%%
\section{ Considered Feedback Loop}\label{sec_feedback_loop}
The considered networked control systems consist of one plant and one controller; data packets transmitted from plant to controller and vice versa are subject to random delays and packet loss is possible. Delays and number of consecutive packet losses are bounded and packets are timestamped. Information about packets received at the actuator node is transmitted to the controller using a network connection and is therefore subject to random delays and packet loss is possible. 

This setup is depicted in Fig.~\ref{fig1}, where the plant with the state $x_{k} \in \mathbb{R}^n$ and the actuating variable $u_k\in\mathbb{R}^m$ is a discrete-time LTI system
\begin{align}
	x_{k+1} &= Ax_k+Bu_k, \quad k\in\mathbb{N}_0\label{eq_LTI}
\end{align}
with the initial state $x_0$, where $(A,B)$ is stabilizable. State and actuating variable have to satisfy linear constraints
\begin{subequations}\label{constraints}
	\begin{align}
		M_{x}x_k &\leq n_x \quad \forall k\in\mathbb{N}_0,\quad M_{x}\in\mathbb{R}^{a_x \times n},\ n_x\in\mathbb{R}^{a_x} \\
		M_{u}u_k &\leq n_u \quad \forall k\in\mathbb{N}_0,\quad M_{u}\in\mathbb{R}^{a_u \times m},\ n_u\in\mathbb{R}^{a_u}
	\end{align}
\end{subequations}
where $\left\{x_{k}\in\mathbb{R}^n|M_{x}x_{k} \leq n_x\right\}$ and $\left\{u_{k}\in\mathbb{R}^m|M_{u}u_{k} \leq n_u\right\}$ are closed sets containing the origin as an inner point.
\begin{figure}[t]
	\centerline{\includegraphics[width=0.8\columnwidth]{./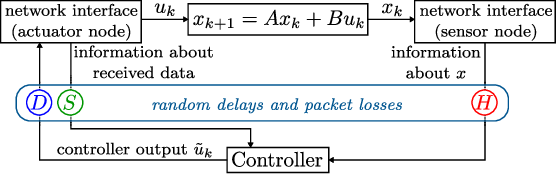}}
	\caption{Considered class of networked control systems; random network effects are represented by the stochastic processes $\mathbf{D}$, $\mathbf{H}$ and $\mathbf{S}$.}
	\label{fig1}
\end{figure}

\subsection{Network Model}
The network used to transmit data from the controller to the actuator node is represented by a discrete-time stochastic process $\mathbf{D} = \left(D_{k}\right)_{k\in\mathbb{N}_0}$ where $D_k$ is the age of the most recently transmitted available packet like in \cite{9115012} for $D_k \leq k$. The first packet is transmitted at time step $k = 0$ and $D_k > k$ means that no packet has been received until time step $k$. The random variable $D_k$ can only increase by $1$ each time step and is bounded by $\underline{d} \leq D_k\leq \overline{d}$ with $\underline{d},\overline{d}\in \mathbb{N}_0$; for $\delta\in\mathbb{N}$, this can be written as
\begin{align}
	&
	\mathbb{P}\left[D_{k+1}\!=\! \delta|D_k\! = \! d_k\right] > 0 \, \Rightarrow\, \underline{d}\leq \delta\leq\min\!\left(\overline{d},d_k \! +\! 1\right).
	\label{eq_non-zero_probabilities}
\end{align}

The other networks are modeled analogously: the network used to transmit data from the sensor node to the controller is represented by $\mathbf{H} = \left(H_{k}\right)_{k\in\mathbb{N}_0}$ with $\underline{h} \leq H_{k} \leq \overline{h}$; the network used to transmit data from the actuator node to the controller is represented by $\mathbf{S}=\left(S_{k}\right)_{k\in\mathbb{N}_0}$ with $\underline{s} \leq S_{k} \leq \overline{s}$.

While there are no further assumptions on $\mathbf{H}$ and $\mathbf{S}$, $\mathbf{D}$ is assumed to be a homogeneous Markov Process, i.e.
\begin{align}
	&\mathbb{P}\big[D_{k+1}\!=\! d_{k+1} \big| (D_{\tilde{k}})_{0 \leq\tilde{k}\leq k} \!=\! (d_{\tilde{k}})_{0 \leq\tilde{k}\leq k}\big]
	=\mathbb{P}\left[D_{k+1}\!=\! d_{k+1}|D_k\! = \! d_k\right] \ \forall k \in \mathbb{N}_0,
	\nonumber\\
	&\mathbb{P}\left[D_{k+1}\!=\! \beta|D_k\! = \! \alpha\right] = \mathbb{P}\left[D_{\tilde{k}+1}\!=\! \beta|D_{\tilde{k}}\! = \! \alpha\right]  \ \forall k,\tilde{k} \in \mathbb{N}_0\label{eq_Markov}.
\end{align}
This Markov Process is described by the known initial distribution $\mu$ and the known transition probabilities $\Phi$ given by
\begin{subequations}
	\begin{align}
		\mu(d_0) &= \mathbb{P}\left[D_{0}=d_{0}\right] > 0\quad \forall d_0\in\left[\underline{d},\overline{d}\right]\label{eq_mu}\\
		\Phi(d_{k},d_{k+1}) &= \mathbb{P}\left[D_{k+1}=d_{k+1}|D_k=d_k\right], \ k\in\mathbb{N}_0; \label{eq_Phi}
	\end{align}
\end{subequations}
the resulting $n$-step transition probabilities $\Phi_n$ are denoted as
\begin{align}
	\Phi_n(d_{k},d_{k+n}) &= \mathbb{P}\left[D_{k+n}=d_{k+n}|D_k=d_k\right],\ k,n\! \in\! \mathbb{N}_0. \label{eq_Phi_n}
\end{align}
In order to avoid case distinctions, it is assumed that $\overline{d}>\underline{d}$, i.e. that controller outputs are actually subject to random effects.

\subsection{Transmitted Data Packets}
\subsubsection{Sensor Node}
Let $K_h$ be the first time step at which a packet from the sensor node is received at the controller. Once $k \geq K_h$, the controller has access to $K_h$ and $(H_{\tilde{k}})_{K_h \leq \tilde{k}\leq k}$. By transmitting $(x_{\tilde{k}})_{\min(0, k-\overline{h}+\underline{h}) \leq \tilde{k} \leq k}$ at each time step $k \geq 0$, it is ensured that the controller has access to the entire sequence $(x_{\tilde{k}})_{0 \leq \tilde{k}\leq k-H_k}$ at each time step $k\geq K_h$.

\subsubsection{Controller}
The controller transmits the prediction
\begin{align} \label{eq_uk_tilde_general}
	\tilde{u}_k &= \big[\begin{matrix}
		\tilde{u}_k^{(\overline{d})^T}\!\!&
		\tilde{u}_k^{(\overline{d}-1)^T}& \hdots& \tilde{u}_k^{(\underline{d})^T}
	\end{matrix}\big]^T
	\in\mathbb{R}^{\tilde{m}}
\end{align}
where $\tilde{m} = (\overline{d}-\underline{d}+1)m$ to the actuator node at all time steps $k \geq 0$. Let $K_d$ be the first time step at which a data packet is received at the actuator node. Once $k \geq K_d$, $u_k$ is determined via
\begin{align}
	u_{k} = \tilde{u}_{k-D_k}^{(D_k)}. \label{eq_uk}
\end{align}
from the most recent available packet. While $k < K_d$, $u_k = 0$ is applied which can also be written as \eqref{eq_uk} by defining $\tilde{u}_k = 0\ \forall\, k < 0$.

Due to \eqref{eq_uk}, $\tilde{u}_k^{(d)}$ in \eqref{eq_uk_tilde_general} is the value applied as $u_{k+d}$ if $\tilde{u}_k$ is the most recent available controller output at time step $k+d$.

\subsubsection{Actuator Node} Once $k \geq K_d$, the actuator node transmits $\left(D_{\tilde{k}}\right)_{\max(K_{d},k-\overline{s}+\underline{s})\leq \tilde{k} \leq k}$; empty packets are transmitted while $k < K_d$. Let $K_s$ be the first time step at which a packet from the actuator node is received at the controller and let $K_{sd}\geq K_s$ be the first time step at which a non-empty data packet is received. Once $k \geq K_s$, the controller has access to $K_s$ and $(S_{\tilde{k}})_{K_s \leq \tilde{k}\leq k}$. Once $k \geq K_{sd}$, the controller has access to $K_{sd}$, $K_d$ and the entire sequence $\left(D_{\tilde{k}}\right)_{K_{d}\leq \tilde{k} \leq k-S_k}$. While $K_{s} \leq k <K_{sd}$, the controller has the information that $K_d > k-S_k$. Note that this implies that $\left(u_{\tilde{k}}\right)_{0\leq \tilde{k} \leq k-S_k}$ can be determined by the controller for all $k \geq K_s$.

\subsubsection{Information Set}
All data available to the controller at time step $k$ assuming that all data is maintained is referred to as information set $\mathcal{I}_k$. This implies $\mathcal{I}_{k}\subseteq \mathcal{I}_{k+i}\, \forall i\geq0$. The amount of storage required for maintaining this set is not bounded so it is not actually maintained by the controller. While the controller is designed assuming that $\mathcal{I}_k$ is available at time step $k$, maintaining a finite subset of $\mathcal{I}_k$ is sufficient for implementing the resulting control law.

\section{Model Predictive Controller}
Consider an MPC where an optimization problem
\begin{subequations}\label{eq_optimization_problem}
	\begin{align}
		&\underset{(\hat{u}_{k}^{(i)})_{\underline{d}\leq i\leq N-1}}{min} \!\!\!\! \mathbb{E}\left\{\left. \sum_{i=0}^{\infty} \Big(x_{k+i}^TQx_{k+i} + u_{k+i}^TRu_{k+i}\Big) \right|\mathcal{I}_k\right\}\label{eq_cost_stochastic}
		\\ 
		&s.t.\ u_{k+i} = \begin{cases}
			\tilde{u}_{k+i-D_{k+i}}^{(D_{k+i})}& i < D_{k+i}\\
			\hat{u}_{k}^{(i)}& D_{k+i} \leq i \leq N-1\\
			\kappa_{k}^{(i)}& i\geq N
		\end{cases}\label{eq_uki_stochastic}\\
		&\quad\ \ \ M_{u}u_{k+i} \leq n_u, \, M_{x}x_{k+i} \leq n_x \quad \forall i\in [0,N-1]\label{constr_stochastic}\\
		&\quad\ \ \ x_{k+N} \in \mathcal{X}_N(\mathcal{I}_k)\label{terminal_stochastic}
	\end{align}
\end{subequations}
with the finite optimization horizon $N\in\mathbb{N}$ is solved at each time step $k\geq K_h$ where $R = R^T \succ 0$ and $Q$ can be written as $Q=C^TC \succeq 0$ with some $C$ such that $(A,C)$ is detectable. 

Once $k\geq K_h$, the controller output \eqref{eq_uk_tilde_general} is chosen via $\tilde{u}_{k}^{(d)} = \hat{u}_{k}^{(d)}$ which requires that $N\geq \overline{d}+1$. While $k < K_h$, i.e. while the controller has not received any information about the state vector, the controller output is set to zero, i.e.
\begin{align}
	&\tilde{u}_k
	= \begin{cases}
		\begin{bmatrix}
			\hat{u}_{k}^{{(\overline{d})}^T}&\hat{u}_{k}^{{(\overline{d}-1)}^T}&  \hdots &  \hat{u}_{k}^{{(\underline{d})}^T}
		\end{bmatrix}& k \geq K_h\\
		0 & k < K_h.
	\end{cases}
	\label{eq_uk_tilde}\\[-7mm]\nonumber
\end{align}

In \eqref{eq_optimization_problem}, the conditional expected value of a cost given all information available at time step $k$ is minimized in order to achieve the best possible expected performance. The constraint \eqref{eq_uki_stochastic} takes into account that $(a)$ $u_{k+i}$ is chosen from previous controller outputs while no packet transmitted at time step $k$ or later is received at the actuator node, $(b)$ the optimization variables are the actuating variables while $D_{k+i} \leq i \leq N-1$ and $(c)$ a control law $u_{k+i} = \kappa_{k}^{(i)}$ has to be chosen for $i \geq N$ in order to evaluate the cost function. %Note that all optimization problems in this work are solved subject to \eqref{eq_LTI}.

\begin{theorem}\label{th_1}
	Consider the feedback loop presented in Section~\ref{sec_feedback_loop}. For any $N \geq \overline{d}+1$, a stabilizing control law $u_{k+i} = \kappa_k^{(i)}\ \forall\, i \geq N$, a terminal constraint set $\mathcal{X}_N(\mathcal{I}_k)$ and a non-empty set $\mathcal{X}_0 = \left\{ x_0\in \mathbb{R}^n \big|M_{0}x_0\leq n_{0}\right\}$ exist such that choosing the controller output via \eqref{eq_uk_tilde} for all $k\geq 0$ where $(\hat{u}_{k}^{(i)})_{\underline{d}\leq i\leq \overline{d}}$ is determined by solving \eqref{eq_optimization_problem} for $k \geq K_h$ yields the following properties if the initial state satisfies $x_0 \in \mathcal{X}_0$:
	\begin{enumerate}
		\item[a)] \eqref{eq_optimization_problem} is feasible at all time steps $k \geq K_h$.
		\item[b)] The constraints \eqref{constraints} are guaranteed to be satisfied.
		\item[c)] The state vector converges to zero with probability one, i.e. \begin{align*}\mathbb{P}\big[\lim\nolimits_{k\to\infty}x_k = 0\big] = 1.\end{align*}
		\item[d)] There exists an optimization problem that $(i)$ is equivalent to \eqref{eq_optimization_problem} in the sense that the resulting optimal value for $\hat{u}_k = \begin{bmatrix}
			\hat{u}_{k}^{(N-1)^T}& \hdots& \hat{u}_{k}^{(\underline{d})^T}
		\end{bmatrix}^T$ is the same which $(ii)$ can be written in the form
		\begin{align}
			&\underset{\hat{u}_{k}}{min} \left(\hat{u}_k^TV_{k}\hat{u}_k + \hat{u}_k^Tv_{k}\right)\quad s.t. \quad W_{k}\hat{u}_k \leq w_{k} \label{eq_final_opt_pr_form}
		\end{align}
		where $V_{k}$, $v_{k}$, $W_{k}$ and $w_{k}$ can be computed from $(a)$ values that can be computed offline in advance and $(b)$ probabilities and an extended state vector.% which all can be computed from a finite subset of $\mathcal{I}_k$. The data that has to be maintained by the controller to ensure that this subset is available is also finite.
	\end{enumerate}
\end{theorem}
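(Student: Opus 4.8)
The plan is to follow the classical terminal-ingredient route for MPC—terminal feedback, terminal cost, terminal set—but to carry everything through in expectation over the delay process while demanding that the constraints hold robustly over every realization the information set cannot exclude. I would treat part d) first, since its reformulation supplies the machinery for the rest.

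For d), fix a gain $K$ with $A+BK$ Schur (possible because $(A,B)$ is stabilizable) and set $\kappa_k^{(i)}=Kx_{k+i}$ for $i\ge N$; the infinite tail of \eqref{eq_cost_stochastic} then collapses to a terminal quadratic $x_{k+N}^TPx_{k+N}$ with $P$ solving the Lyapunov equation $P=(A+BK)^TP(A+BK)+Q+K^TRK$. I would introduce an augmented state collecting $x_k$ together with the buffer of previously transmitted predictions $(\tilde u_{k-1},\dots,\tilde u_{k-\overline d})$ that can still be applied while $i<D_{k+i}$—this is the \emph{extended state vector} of the statement. For a fixed realization of $(D_{k+i})_{0\le i\le N-1}$, constraint \eqref{eq_uki_stochastic} makes each applied input an affine function of $\hat u_k$ and the augmented state, so the predicted states and the resulting cost are quadratic in $\hat u_k$. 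Taking $\mathbb E[\,\cdot\,|\,\mathcal I_k]$ amounts to summing over all realizations weighted by the Markov transition probabilities $\Phi_n$ issuing from the conditional law of $D_k$; since a probability-weighted sum of quadratics is again quadratic, this yields $\hat u_k^TV_k\hat u_k+\hat u_k^Tv_k$, with $V_k$ built from offline quantities and the delay probabilities and $v_k$ additionally carrying the extended state. The constraints \eqref{constr_stochastic} and \eqref{terminal_stochastic} are affine in $\hat u_k$ for each realization, and enforcing them on every realization of positive probability produces the finite linear family $W_k\hat u_k\le w_k$.

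For a) I would use the shifted candidate: given the optimizer at step $k$, at step $k+1$ shift the decision variables by one and append $Kx_{k+N}$. The delicacy is the UDP-like setting—the controller never learns which packet was applied—so the candidate must be feasible for \emph{every} delay realization consistent with $\mathcal I_{k+1}$. I would verify that the shifted sequence keeps each predicted state in the constraint set for all admissible delays and that $\mathcal X_N(\mathcal I_k)$ is chosen robustly positively invariant under $A+BK$ while respecting \eqref{constraints}, so invariance delivers $x_{k+N+1}\in\mathcal X_N(\mathcal I_{k+1})$ for the shifted candidate. Part b) is then immediate along the realized trajectory, because the true realization has positive probability by \eqref{eq_mu} and \eqref{eq_non-zero_probabilities} and the optimization enforces the constraints on all such realizations. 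The start-up phase $k<K_h$, where $u_k=0$ is applied, is handled by defining $\mathcal X_0$ as the set of $x_0$ whose open-loop iterates $A^jx_0$ satisfy \eqref{constraints} up to the maximal waiting time—bounded by $\overline h,\overline d$ and the maximal number of consecutive losses—and land inside the feasible region of \eqref{eq_optimization_problem} at $k=K_h$; this is exactly where boundedness of delays and losses is indispensable.

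For c) I would use the optimal value $J_k^\star$ of \eqref{eq_optimization_problem} as a stochastic Lyapunov function. Evaluating the cost at the shifted candidate and taking $\mathbb E[\,\cdot\,|\,\mathcal I_k]$ telescopes the terminal cost against the stage cost through the Lyapunov equation and gives $\mathbb E[J_{k+1}^\star\,|\,\mathcal I_k]\le J_k^\star-\mathbb E[x_k^TQx_k+u_k^TRu_k\,|\,\mathcal I_k]$. As $J_k^\star\ge0$, this is a nonnegative supermartingale, so it converges almost surely and the accumulated stage costs are summable almost surely; hence $Cx_k\to0$ and $u_k\to0$ a.s., and detectability of $(A,C)$ together with \eqref{eq_LTI} upgrades this to $x_k\to0$ a.s. The hard part throughout is the coupling in a) and d) between the unknown, correlated delays and the absence of acknowledgments: the cost must be averaged over the conditional law of $\mathbf D$ while the constraints must hold for every realization $\mathcal I_k$ cannot rule out, and the single shifted candidate must be certified feasible for all of them simultaneously. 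Keeping the augmented-state bookkeeping and the realization-dependent input map correct—so the expectation stays quadratic and the constraint family stays finite—and designing $\mathcal X_N(\mathcal I_k)$ to be invariant for this delay pipeline rather than for the nominal model, is where I expect the real work to lie.
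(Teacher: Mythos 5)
Your proposal contains a genuine gap, and it sits exactly at the point you yourself flag as ``where the real work lies.'' You choose the tail law $\kappa_k^{(i)} = Kx_{k+i}$ (pure state feedback) for $i \geq N$. In this UDP-like setting that choice is not implementable, and the paper explicitly rejects it for that reason: the actuator only ever applies values that were placed in packets transmitted at least $\underline{d}$ (and possibly $\overline{d}$) steps earlier, so the entry that will be applied as $u_{k+i}$ must be computable \emph{before} $x_{k+i}$ is realized --- and, worse, the controller never learns $x_{k+i}$ exactly even afterwards, since without acknowledgments it cannot reconstruct which packet the actuator applied. Consequently the constraint $u_{k+i}=\kappa_k^{(i)}$ in \eqref{eq_uki_stochastic} with $\kappa_k^{(i)}=Kx_{k+i}$ does not correspond to any realizable sequence of controller outputs: the cost you minimize is that of a fictitious policy, and the mismatch propagates into both a) and c). Concretely, your shifted candidate appends $\hat{u}_{k+1}^{(N-1)} = Kx_{k+N}$, which is a random variable, not a vector computable from $\mathcal{I}_{k+1}$; replacing it by $K\mathbb{E}\{x_{k+N}|\mathcal{I}_{k+1}\}$ introduces an expectation error, after which robust positive invariance of $\mathcal{X}_N$ under $A+BK$ no longer delivers $x_{k+N+1}\in\mathcal{X}_N(\mathcal{I}_{k+1})$, and the supermartingale inequality $\mathbb{E}[J_{k+1}^\star|\mathcal{I}_k]\le J_k^\star - \mathbb{E}[x_k^TQx_k+u_k^TRu_k|\mathcal{I}_k]$ loses its justification because the candidate is neither feasible nor cost-comparable as stated. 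The paper's resolution is precisely the machinery you are missing: the tail law \eqref{eq_kappa_ki_proposed}, which is state feedback \emph{corrected} by $L\sum_j A^{i-1-j}B\big(u_{k+j}-\mathbb{E}\{u_{k+j}|\mathcal{I}_k\}\big)$ so that it is implementable from delayed information (Appendix~\ref{Appendix_compute_x_Hk_hat}) and collapses to $-Lx_{k+i}$ for $i\ge\hat N$; a terminal set defined not by invariance under $A-BL$ but by requiring \eqref{eq_7} for \emph{all convex combinations} \eqref{eq_form_uki_feas} of the realization-dependent laws $\bar{\kappa}_k^{(i)}(\vec{d}_k)$, which is exactly what makes the shifted candidate (whose tail conditions on $\mathcal{I}_{k+1}$ rather than $\mathcal{I}_k$) land back inside the terminal set (Appendix~\ref{Appendix_eq_kappa_k+1_of_dk_vec}); and a separate argument (Appendix~\ref{Appendix_opt_pr_stability}) that re-conditioning the correction terms on finer information cannot increase the expected tail cost.

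A secondary but related error: your extended state collects ``$x_k$ together with the buffer $(\tilde u_{k-1},\dots,\tilde u_{k-\overline d})$.'' The controller does not know $x_k$; it knows at best $x_{k-\tilde{H}_k}$ with $\tilde{H}_k$ up to $\overline{h}$. The paper's extended state is accordingly $\hat{x}_k = \big[\begin{matrix} x_{k-\tilde{H}_k}^T & \tilde{u}_{k-1}^T & \hdots & \tilde{u}_{k-\overline{d}-\overline{h}}^T\end{matrix}\big]$, with buffer depth $\overline{d}+\overline{h}$ rather than $\overline{d}$, since propagating forward from the delayed known state requires inputs determined by packets transmitted up to $\overline{d}+\overline{h}$ steps ago. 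Your treatment of b), of the start-up phase and of $\mathcal{X}_0$ as an intersection over possible $K_h$, and your quadratic-in-$\hat{u}_k$ structure of the expected cost for part d) are all aligned with the paper; but without the corrected tail law and the convex-combination terminal set, the recursive feasibility and convergence claims do not go through.
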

%\begin{proof}
%	The proof is presented in Section~\ref{sec_proof} and includes $(a)$ choosing suitable $\kappa_k^{(i)}$, $\mathcal{X}_N$ and $\mathcal{X}_0$ and $(b)$ determining the resulting optimization problem of the form \eqref{eq_final_opt_pr_form}.
%\end{proof}

\section{Proof for Theorem~\ref{th_1}}\label{sec_proof}
\subsection{Stabilizing Control Law and Terminal Constraint Set}
\subsubsection{Stabilizing Control Law}\label{sec_stabilizing}
Consider a controller with the input $x_k$ and the output $u_k$ and consider the cost function $J = \sum_{k = 0}^{\infty}\left(x_k^TQx_k + u_k^TRu_k\right)$. Minimizing $J$ with respect to $(u_k)_{k\geq 0}$ in the unconstrained case yields the well-known Linear Quadratic Regulator (LQR), i.e. $u_k = -L x_k$ with $L = (R + B^TPB)^{-1}B^TPA$ resulting in $J = x_0^TPx_0$ where $P\succ 0$ is the solution of the Discrete-Time Algebraic Riccati Equation $P = Q + A^TPA - A^TPB \left(R + B^TPB\right)^{-1} B^TPA$.

The stabilizing control law $\kappa_k^{(i)}$ should be chosen such that $u_{k+i} = \kappa_k^{(i)}\ \forall\, i\geq N$ can actually be achieved by choosing the controller outputs $(\tilde{u}_{k+i})_{i\geq 0}$ accordingly. Otherwise, a mismatch between the actual setup and the prediction with \eqref{eq_uki_stochastic} is introduced. This issue occurs for $\kappa_{k}^{(i)} = -Lx_{k+i}$.

When choosing $\kappa_{k}^{(i)} = -L\mathbb{E}\left\{\left.x_{k+i}\right|\mathcal{I}_k\right\}$ instead, one sequence $(u_{k+i})_{i\geq N}$ has to stabilize the plant for all possible $x_{k+N}$ resulting in unnecessarily conservative constraints. 
\begin{definition}[Proposed Stabilizing Control Law]\label{def_stabilizing}
	To avoid both of the above issues, the following choice is proposed:
	\begin{subequations}\label{stabilizing_control_law}
		\begin{align}
			&\kappa_{k}^{(i)} = -Lx_{k+i} + L\!\!\! \sum_{j = -\hat{H}_k^{(i)}}^{\overline{d}-1}\!\! A^{i-1-j}B\Big(u_{k+j} -\mathbb{E}\left\{\left.u_{k+j}\right|\mathcal{I}_{k}\right\}\Big)\label{eq_kappa_ki_proposed}
			\\[-4.5mm]
			&\tilde{H}_k = \begin{cases}
				H_k& k < K_s\\
				\min(H_k,\! S_k-1\!)& k \geq K_s
			\end{cases}
			\label{eq_hk_tilde}
			\\
			& \hat{H}_k^{(i)} = \min(\overline{h}+N-i-1, \overline{s}+N-i-2,\tilde{H}_k).\label{eq_hk_hat}
		\end{align}
	\end{subequations}
\end{definition}
%This can can be interpreted as an approximation of $\kappa_{k}^{(i)} = -L\mathbb{E}\left\{\left.x_{k+i}\right|\mathcal{I}_{k+i}\right\}$. 
As shown in Appendix~\ref{Appendix_compute_x_Hk_hat}, $(\tilde{u}_{k+i})_{i\geq 1}$ can be chosen, i.e. $\tilde{u}_{k+i}$ can be computed from $\mathcal{I}_{k+i}$ for all $i \geq 1$, such that \eqref{eq_uki_stochastic} is actually applied to the plant for \eqref{eq_kappa_ki_proposed}. Since
\begin{align}
	\kappa_{k}^{(i)} &= -Lx_{k+i} \ \ \forall i \geq \hat{N}
	= \overline{d} + N-1+\min(\overline{h},\overline{s}\! -\! 1) \label{kappa_ki_i_geq_hat_N}
\end{align}
since $-\hat{H}_k^{(i)} \geq \overline{d}$ for all $i \geq \hat{N}$ due to \eqref{eq_hk_hat},  $\lim\nolimits_{k\to\infty}x_k = 0$ certainly holds for any $x_{k+N}$ if $u_{k+i} = \kappa_{k}^{(i)}$ for all $i \geq N$.

\subsubsection{Terminal Constraint Set}\label{sec_terminal_constr}
If $u_{k+i} = -Lx_{k+i}\ \forall\, i\geq 0$, $M_xx_{k+i}\leq n_x$ and $M_uu_{k+i}\leq n_u$ hold for all $i \geq 0$ if
\begin{align}
	\begin{bmatrix}
		M_x^T&
		-(M_u L)^T
	\end{bmatrix}^T\left(A-BL\right)^i x_{k} \leq \begin{bmatrix}
		n_x^T& n_u^T
	\end{bmatrix}^T.\label{eq_1254}
\end{align}
Using Algorithm 3.2 in \cite{Gilbert91}, the set of all $x_k$ for which \eqref{eq_1254} holds can be written as $M_N x_{k} \leq n_N$.

The value of $u_{k+i}$ for $D_{k+i} = d_{k+i}$ for $i \leq N-1$ 
according to \eqref{eq_uki_stochastic} used for prediction at time step $k$ is denoted by
\begin{align}
	\bar{u}_{k}^{(i)}(d_{k+i}) &= \begin{cases}
		\tilde{u}_{k+i-d_{k+i}}^{(d_{k+i})}& i < d_{k+i}\\
		\hat{u}_{k}^{(i)}& d_{k+i} \leq i \leq N-1
	\end{cases}\label{uk_bar}
\end{align}
in order to distinguish it from the value of $u_{k+i}$ used for prediction at time step $k+1$ which is denoted by $\bar{u}_{k+1}^{(i-1)}(d_{k+i})$.

The sequence $(D_{k+i})_{-\tilde{H}_k \leq i \leq \overline{d}-1}$ is denoted by $\vec{D}_k$; the set of all possible sequences $\vec{D}_k$ given $\mathcal{I}_k$ is denoted as $\mathcal{D}_k$, i.e.
\begin{subequations} 
	\begin{align}
		&\vec{D}_k = (D_{k+i})_{-\tilde{H}_k \leq i \leq \overline{d}-1}\label{eq_Dk_vec} \\
		&\vec{d}_k \in \mathcal{D}_k\ \Leftrightarrow\ \mathbb{P}[ \vec{D}_k = \vec{d}_k | \mathcal{I}_k] > 0.\label{eq_Dk_vec_set}
	\end{align}
\end{subequations}

\begin{definition}[Proposed Terminal Constraint Set]\label{def_terminal}
	$\mathcal{X}_N(\mathcal{I}_k)$ is chosen as the set of all $x_{k+N}$ for which \begin{subequations}\label{eq_7}
		\begin{align}
			&M_{u}u_{k+i} \leq n_u, \ M_{x}x_{k+i} \leq n_x \quad \forall i\in [N,\hat{N}-1]\\
			&M_N x_{k+\hat{N}} \leq n_N\label{eq_7.2}
		\end{align}
	\end{subequations}
	hold if $u_{k+i}$ can be written in the form
	\begin{subequations}\label{eq_form_uki_feas}
		\begin{align}
			& u_{k+i} = \sum\nolimits_{\vec{d}_k \in \mathcal{D}_k} p_{k,i}(\vec{d}_k) \bar{\kappa}_{k}^{(i)}(\vec{d}_k),\\
			& \bar{\kappa}_{k}^{(i)}(\vec{d}_k) = -Lx_{k+i} + L \sum\nolimits_{j = -\hat{H}_k^{(i)}}^{\overline{d}-1}  A^{i-1-j}B\left(\bar{u}_{k}^{(j)}(D_{k+j}) -\bar{u}_{k}^{(j)}(d_{k+j})\right)\label{kappa_k_bar}
		\end{align}
		for all $i \in [N, \hat{N}-1]$ where the scalar factors $p_{k,i}(\vec{d}_k)$ satisfy
		\begin{align}
			& \sum\nolimits_{\vec{d}_k \in \mathcal{D}_k} p_{k,i}(\vec{d}_k) = 1, \quad
			p_{k,i}(\vec{d}_k) \geq 0 \ \ \forall\vec{d}_k \in \mathcal{D}_k.\label{eq_9}
		\end{align}
	\end{subequations}
\end{definition}

Due to \eqref{eq_kappa_ki_proposed}, $u_{k+i} =\kappa_k^{(i)}$ can be written in the form \eqref{eq_form_uki_feas} with $p_{k,i}(\vec{d}_k) = \mathbb{P}\big[ \vec{D}_k=\vec{d}_k \big| \mathcal{I}_k\big]$. Therefore, \eqref{terminal_stochastic} implies that \eqref{eq_7} holds for \eqref{eq_uki_stochastic}. Furthermore, \eqref{eq_7} and \eqref{kappa_ki_i_geq_hat_N} imply that $M_{u}u_{k+i} \leq n_u$ and $M_{x}x_{k+i} \leq n_x$ for all $i \geq N$ for \eqref{eq_uki_stochastic}.

In other words, $\mathcal{X}_N(\mathcal{I}_k)$ is chosen such that \eqref{terminal_stochastic} and \eqref{constr_stochastic} imply that $M_{u}u_{k+i} \leq n_u$ and $M_{x}x_{k+i} \leq n_x$ are guaranteed to hold for all $i \geq 0$ if \eqref{eq_uki_stochastic} holds.

\subsection{Feasibility and Admissible Initial States}\label{sec_feasibility}
\begin{lemma}[Recursive Feasibility]\label{le_recursive}
	If $(a)$ $\kappa_k^{(i)}$ and $\mathcal{X}_N(\mathcal{I}_k)$ are chosen according to Definition~\ref{def_stabilizing}~and~\ref{def_terminal} and $(b)$ \eqref{eq_optimization_problem} is feasible at some time step $k\geq K_h$ then \eqref{eq_optimization_problem} is also feasible at all time steps $\tilde{k} \geq k$.
\end{lemma}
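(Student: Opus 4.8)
The plan is to prove the single-step implication—feasibility at $k$ implies feasibility at $k+1$ for every $k\geq K_h$—and then obtain the claim for all $\tilde k\geq k$ by induction along the realized trajectory, using $\mathcal I_{k}\subseteq\mathcal I_{k+1}$ so that each step is solved with strictly more information. Given a feasible $(\hat u_{k}^{(i)})_{\underline{d}\leq i\leq N-1}$ at time $k$, I would propose the standard shifted candidate at $k+1$: set $\hat u_{k+1}^{(i)}=\hat u_{k}^{(i+1)}$ for $\underline{d}\leq i\leq N-2$, and choose the single new variable $\hat u_{k+1}^{(N-1)}$ (which governs physical time $k+N$) from the continuation certified by the terminal constraint. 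The argument then splits into (i) showing the shifted indices inherit feasibility from time $k$, and (ii) showing the one new step together with the terminal constraint $x_{k+N+1}\in\mathcal X_N(\mathcal I_{k+1})$ holds for every delay realization admissible under $\mathcal I_{k+1}$.

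For part (i) the key is a \emph{shift-consistency} observation: under the candidate, the input $u_{k+1+i}$ produced by \eqref{eq_uki_stochastic} at time $k+1$ coincides, realization by realization, with the input $u_{k+(i+1)}$ predicted at time $k$. The only nontrivial case is $D_{k+1+i}=i+1$: at time $k+1$ this triggers the ``old packet'' branch, selecting $\tilde u_{k}^{(i+1)}=\hat u_{k}^{(i+1)}$ by \eqref{eq_uk_tilde} (here $i+1\leq\overline{d}$ is forced by $D\leq\overline{d}$, so this component genuinely exists), which is exactly the value used in the $D_{k+i+1}\le i+1$ branch at time $k$. Hence the predicted states and the constraints \eqref{constr_stochastic} for $i\in[0,N-2]$ transfer verbatim; in particular $x_{k+N}$ is the same random vector in both predictions and lies in $\mathcal X_N(\mathcal I_k)\subseteq\{x: M_x x\le n_x\}$ by Definition~\ref{def_terminal}.

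For part (ii), $\overline{d}\le N-1$ forces $D_{k+N}\le N-1$, so the new step always applies $u_{k+N}=\hat u_{k+1}^{(N-1)}$. The task is to exhibit one deterministic $\hat u_{k+1}^{(N-1)}$ that is constraint-admissible and drives every realization of $x_{k+N}$ into $\mathcal X_N(\mathcal I_{k+1})$. I would do this through the convex-combination characterization \eqref{eq_form_uki_feas}: the terminal continuation over $i\in[N,\hat N-1]$ that certified $x_{k+N}\in\mathcal X_N(\mathcal I_k)$ is re-expressed, after the shift and after conditioning on the enlarged $\mathcal I_{k+1}$, as an admissible convex combination of the $\bar\kappa_{k+1}^{(\cdot)}(\vec d_{k+1})$ over $\vec d_{k+1}\in\mathcal D_{k+1}$ with weights satisfying \eqref{eq_9}. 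The correction term in $\kappa_k^{(i)}$ of Definition~\ref{def_stabilizing} is what makes the realization-dependent deviations of $x_{k+N}$ compensable by subsequent controls, and by \eqref{kappa_ki_i_geq_hat_N} the law collapses to pure LQR $-Lx_{k+i}$ for $i\ge\hat N$, landing in the Gilbert--Tan invariant set $M_N x\le n_N$, which is control invariant by construction; this closes the terminal recursion.

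The main obstacle I expect is precisely this terminal-set recursion across a \emph{growing} information set combined with the stochastic averaging. Reconciling the single deterministic variable $\hat u_{k+1}^{(N-1)}$ with the realization-dependent continuation certified at $k$, and showing that the admissible-sequence set and the convex weights transform consistently—so that $\mathcal D_{k+1}$, the horizons $\hat H_{k+1}^{(i)}$ and the step $\hat N$ all line up after the one-step shift—is where the bookkeeping is delicate and where the specific forms of Definitions~\ref{def_stabilizing} and~\ref{def_terminal} must be used carefully. By contrast, once shift consistency is established the transfer of the ordinary state/input constraints in \eqref{constr_stochastic} is routine.
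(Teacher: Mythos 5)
Your skeleton coincides with the paper's proof: induction on the one-step implication, the shifted candidate $\hat{u}_{k+1}^{(i)}=\hat{u}_{k}^{(i+1)}$, the shift-consistency observation (the case $D_{k+1+i}=i+1$ selecting $\tilde{u}_{k}^{(i+1)}=\hat{u}_{k}^{(i+1)}$ via \eqref{eq_uk_tilde}), the resulting verbatim transfer of \eqref{constr_stochastic}, and the collapse to $-Lx$ plus Gilbert--Tan invariance at the last step. However, the terminal-set recursion --- which you yourself flag as the crux --- is argued in the wrong direction, and this is a genuine gap. By Definition~\ref{def_terminal}, $x_{k+1+N}\in\mathcal{X}_N(\mathcal{I}_{k+1})$ is a \emph{universally} quantified statement: the constraints must hold for \emph{every} control sequence expressible as a convex combination of the atoms $\bar{\kappa}_{k+1}^{(i)}(\vec{d}_{k+1})$, $\vec{d}_{k+1}\in\mathcal{D}_{k+1}$, with weights satisfying \eqref{eq_9}. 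This universality is essential, because the weights actually applied at time $k+1$ are the posterior probabilities given $\mathcal{I}_{k+1}$, which are not known when the candidate is constructed. You propose to re-express the continuation certified at time $k$ as one admissible convex combination of the $\bar{\kappa}_{k+1}^{(\cdot)}(\vec{d}_{k+1})$; that exhibits a \emph{single} admissible control of the time-$(k+1)$ form satisfying the constraints, which does not establish membership in $\mathcal{X}_N(\mathcal{I}_{k+1})$. What is needed is the reverse inclusion at the level of atoms: for every $\vec{d}_{k+1}\in\mathcal{D}_{k+1}$ and every $i\in[N,\hat{N}-1]$ there exists $\vec{d}_k\in\mathcal{D}_k$ with $\bar{\kappa}_{k+1}^{(i-1)}(\vec{d}_{k+1})=\bar{\kappa}_k^{(i)}(\vec{d}_k)$, together with $\bar{\kappa}_{k+1}^{(\hat{N}-1)}(\vec{d}_{k+1})=-Lx_{k+\hat{N}}$. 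Then every time-$(k+1)$ combination is a time-$k$ combination (respectively equals $-Lx_{k+\hat{N}}$), and the time-$k$ certificate plus invariance of the set $M_Nx\leq n_N$ yield all required constraints. The paper proves exactly this identity, \eqref{eq_kappa_k+1_of_dk_vec} in Appendix~\ref{Appendix_eq_kappa_k+1_of_dk_vec}, using $\hat{H}_{k+1}^{(i-1)}-1\leq\hat{H}_k^{(i)}$ and the fact that $B\big(\bar{u}_k^{(j)}(D_{k+j})-\bar{u}_k^{(j)}(d_{k+j})\big)=0$ for the extra summation indices whenever $\vec{d}_{k+1}\in\mathcal{D}_{k+1}$; this is where the enlarged information set enters, not through a re-weighting of the time-$k$ continuation.

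A second, smaller omission: you never pin down the new variable $\hat{u}_{k+1}^{(N-1)}$. It must be a deterministic, $\mathcal{I}_{k+1}$-computable quantity. The paper takes $\hat{u}_{k+1}^{(N-1)}=\kappa_{k+1}^{(N-1)}$ and shows in Appendix~\ref{Appendix_eq_kappa_k+1_N-1} that this equals $-L\mathbb{E}\left\{\left.x_{k+N}\right|\mathcal{I}_{k+1}\right\}$, hence is computable from $\mathcal{I}_{k+1}$; this is also what lets the whole candidate be folded into the single formula \eqref{uki_k+1_rewritten} with $u_{k+i}=\kappa_{k+1}^{(i-1)}$ for all $i\geq N$, so that conditions on the atoms $\bar{\kappa}_{k+1}^{(i-1)}$ cover the new control and the terminal tail simultaneously. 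Choosing it ``from the continuation certified by the terminal constraint'' leaves both its value and its measurability with respect to $\mathcal{I}_{k+1}$ unspecified.
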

\begin{proof}
	\eqref{eq_optimization_problem} is feasible at all time steps $\tilde{k} \geq k$ if \eqref{eq_optimization_problem} being feasible at time step $k\geq K_h$ implies that, for
	\begin{align}
		u_{k+1+i} = \begin{cases}
			\tilde{u}_{k+1+i-D_{k+1+i}}^{(D_{k+1+i})}& i < D_{k+1+i}\\
			\hat{u}_{k+1}^{(i)}& D_{k+1+i} \leq i \leq N-1\\
			\kappa_{k+1}^{(i)}& i\geq N,
		\end{cases}\label{eq_uki+1}
	\end{align}
	some $(\hat{u}_{k+1}^{(i)})_{\underline{d}\leq i\leq N-1}$ can be computed from $\mathcal{I}_{k+1}$ such that
	\begin{subequations}\label{conditions_feasibility}
		\begin{align}
			&M_{u}u_{k+1+i} \leq n_u,\
			M_{x}x_{k+1+i} \leq n_x \ \ \ \forall i\in [0,N-1]\\
			&x_{k+1+N} \in \mathcal{X}_N(\mathcal{I}_{k+1}).
		\end{align}
	\end{subequations}
	%	
	%	
	%	If \eqref{eq_optimization_problem} being feasible at time step $k\geq K_h$ implies that a sequence $(\hat{u}_{k+1}^{(i)})_{\underline{d}\leq i\leq N-1}$ can be computed from $\mathcal{I}_{k+1}$ such that
	%	\begin{subequations}\label{conditions_feasibility}
	%		\begin{align}
	%			&M_{u}u_{k+1+i} \leq n_u,\
	%			M_{x}x_{k+1+i} \leq n_x \ \ \ \forall i\in [0,N-1]\\
	%			&x_{k+1+N} \in \mathcal{X}_N(\mathcal{I}_{k+1})
	%		\end{align}
	%	\end{subequations}
	%	holds for
	%	\begin{align}
	%		u_{k+1+i} = \begin{cases}
	%			\tilde{u}_{k+1+i-D_{k+1+i}}^{(D_{k+1+i})}& i < D_{k+1+i}\\
	%			\hat{u}_{k+1}^{(i)}& D_{k+1+i} \leq i \leq N-1\\
	%			\kappa_{k+1}^{(i)}& i\geq N,
	%		\end{cases}\label{eq_uki+1}
	%	\end{align}
	%	then it is guaranteed that \eqref{eq_optimization_problem} is feasible for all $k\geq K_h$ if it is feasible for $k = K_h$. 
	%
	In order to show that such a sequence  $(\hat{u}_{k+1}^{(i)})_{\underline{d}\leq i\leq N-1}$ exists, it is sufficient to show that one such sequence is given by 
	\begin{align}
		\hat{u}_{k+1}^{(i)} = \hat{u}_{k}^{(i+1)}\quad \forall\,  i \leq N-2,\qquad
		\hat{u}_{k+1}^{(N-1)} = \kappa_{k+1}^{(N-1)}.
		\label{eq_u_hat_k+1+i}
	\end{align}
	The resulting sequence $(\hat{u}_{k+1}^{(i)})_{\underline{d}\leq i\leq N-1}$ can be computed from $\mathcal{I}_{k+1}$ since $\hat{u}_{k}^{(i)}$ is computed from $\mathcal{I}_k$ and since $\kappa_{k+1}^{(N-1)} = -L \mathbb{E}\left\{\left.x_{k+N}\right|\mathcal{I}_{k+1}\right\}$ as shown in Appendix~\ref{Appendix_eq_kappa_k+1_N-1}. Therefore, it only remains to be shown that the sequence of actuating variables obtained by inserting \eqref{eq_u_hat_k+1+i} in \eqref{eq_uki+1} satisfies \eqref{conditions_feasibility}.
	
	\subsubsection{Rewritten actuating variables} The sequence of actuating variables resulting from
	\eqref{eq_u_hat_k+1+i} can be written as
	\begin{align}
		u_{k+i} &= \begin{cases}
			\tilde{u}_{k+i-D_{k+i}}^{(D_{k+i})}& i < D_{k+i}\\
			\hat{u}_{k}^{(i)}& D_{k+i} \leq i \leq N-1\\
			\kappa_{k+1}^{(i-1)}& i \geq N
		\end{cases} \label{uki_k+1_rewritten}
	\end{align}
	by $(i)$ inserting \eqref{eq_u_hat_k+1+i} in \eqref{eq_uki+1}, $(ii)$ inserting that $\tilde{u}_{k+1+i-D_{k+1+i}}^{(D_{k+1+i})} = \hat{u}_{k}^{(i+1)}$ for $i = D_{k+1+i}-1$ due to \eqref{eq_uk_tilde} followed by $(iii)$ substituting $i$ for $i+1$.
	
	%	\begin{align*}
	%		u_{k+1+i} &= \begin{cases}
	%			\tilde{u}_{k+1+i-D_{k+1+i}}^{(D_{k+1+i})}& i < D_{k+1+i}-1\\
	%			\tilde{u}_{k}^{(i+1)} = \hat{u}_{k}^{(i+1)} & i = D_{k+1+i}-1\\
	%			\hat{u}_{k+1}^{(i)} = \hat{u}_{k}^{(i+1)}& D_{k+1+i} \leq i \leq N-2\\
	%			\hat{u}_{k+1}^{(i)} =\kappa_{k+1}^{(N-1)}& i = N-1\\
	%			\kappa_{k+1}^{(i)}& i \geq N
	%		\end{cases}
	%%		\\
	%%		&= \begin{cases}
	%%			\tilde{u}_{k+1+j-D_{k+1+j}}^{(D_{k+1+j})}& j+1 < D_{k+1+j}\\
	%%			\hat{u}_{k}^{(j+1)^{*}}& D_{k+1+j} \leq j+1 \leq N-1\\
	%%			\kappa_{k+1}^{(j)}& j+1 \geq N
	%%		\end{cases}
	%	\end{align*}
	%	for \eqref{eq_u_hat_k+1+i} which can also be written as

	\subsubsection{Sufficient condition for feasibility}\label{sec_condition_feas}
	Due to \eqref{uki_k+1_rewritten}, the sequence of actuating variables $(u_{k+i})_{i<N}$ is the same as in \eqref{eq_optimization_problem} at time step $k$. Therefore, applying \eqref{uki_k+1_rewritten} also satisfies \eqref{constr_stochastic} and \eqref{terminal_stochastic} which implies that \eqref{eq_7} holds 
	if $u_{k+i}$ can be written in the form \eqref{eq_form_uki_feas} for all ${i \in [N, \hat{N}- 1]}$. Due to \eqref{eq_7.2}, $M_{u}u_{k+\hat{N}} \leq n_u$ and $M_N x_{k+1+\hat{N}} \leq n_N$ also hold if $u_{k+\hat{N}} = -Lx_{k+\hat{N}}$. Summing up,
	\begin{subequations}\label{eq_8}
		\begin{align}
			&M_{u}u_{k+i} \leq n_u, \ M_{x}x_{k+i} \leq n_x \quad \forall i\in [1,\hat{N}]\\
			&M_N x_{k+1+\hat{N}} \leq n_N
		\end{align}
	\end{subequations}
	hold if $u_{k+i}$ can be written in the form \eqref{eq_form_uki_feas} for all ${i \in [N, \hat{N}- 1]}$ and 
	if $u_{k+\hat{N}} = -Lx_{k+\hat{N}}$.
	
	According to Definition~\ref{def_terminal}, \eqref{conditions_feasibility} is satisfied if and only if \eqref{eq_8} holds if $u_{k+i} = \kappa_{k+1}^{(i-1)}$ can be written in the form
	\begin{align}\label{form_kappa_k+1}
		& \kappa_{k+1}^{(i-1)} = \sum\nolimits_{\vec{d}_{k+1} \in \mathcal{D}_{k+1}} p_{k+1,i-1}(\vec{d}_{k+1}) \bar{\kappa}_{k+1}^{(i-1)}(\vec{d}_{k+1})
	\end{align}
	for all $i \in [N+1, \hat{N}]$ where the scalar factors $p_{k+1,i-1}(\vec{d}_{k+1})$ satisfy conditions analogous to \eqref{eq_9}.
	
	Therefore, a sufficient condition for feasibility at time step $k+1$ is feasibility at time step $k$ and that
	\begin{enumerate}
		\item for all $\vec{d}_{k+1} \in \mathcal{D}_{k+1}$ and all $i \in [N, \hat{N}-1]$, there exists a $\vec{d}_k \in \mathcal{D}_k$ such that $\bar{\kappa}_{k+1}^{(i-1)}(\vec{d}_{k+1}) = \bar{\kappa}_{k}^{(i)}(\vec{d}_{k})$
		\item and $\bar{\kappa}_{k+1}^{(\hat{N}-1)}(\vec{d}_{k+1}) = -Lx_{k+\hat{N}}$ for all $\vec{d}_{k+1} \in \mathcal{D}_{k+1}$
	\end{enumerate}
	for $u_{k+i}$ given by \eqref{uki_k+1_rewritten}. This is sufficient since then, $u_{k+i}=\kappa_{k+1}^{(i-1)}$ with $\kappa_{k+1}^{(i-1)}$ of the form \eqref{form_kappa_k+1} can be written in the form \eqref{eq_form_uki_feas} for all ${i \in [N, \hat{N}- 1]}$ and $u_{k+\hat{N}} = \kappa_{k+1}^{(\hat{N}-1)} = -Lx_{k+\hat{N}}$.
	
	%Therefore, \eqref{conditions_feasibility} holds if $u_{k+i} = \kappa_{k+1}^{(i-1)}$ can be written in the form \eqref{eq_form_uki_feas} for all $i \in [N, \hat{N}-1]$ and $u_{k+\hat{N}} = \kappa_{k+1}^{(\hat{N}-1)} = -Lx_{k+\hat{N}}$ for all $\kappa_{k+1}^{(i-1)}$ of the form
	%\begin{subequations}%\label{form_kappa_k+1}
	%	\begin{align}
	%		& \kappa_{k+1}^{(i-1)} = \sum\limits_{\vec{d}_{k+1} \in \mathcal{D}_{k+1}} p_{k+1,i-1}(\vec{d}_{k+1}) \bar{\kappa}_{k}^{(i-1)}(\vec{d}_{k+1}),\\
	%		& \sum\limits_{\vec{d}_{k+1} \in \mathcal{D}_{k+1}} p_{k+1,i-1}(\vec{d}_{k+1}) = 1,\\
	%		&p_{k+1,i-1}(\vec{d}_{k+1}) \geq 0 \ \forall\vec{d}_{k+1} \in \mathcal{D}_{k+1}.
	%	\end{align}
	%\end{subequations}
	
	\subsubsection{Feasibility}
	According to \eqref{kappa_k_bar},
	%\begin{subequations}
	\begin{align}
		& \bar{\kappa}_{k+1}^{(i-1)}(\vec{d}_{k+1}) = -Lx_{k+i}\, + L \!\! \sum_{j = -\hat{H}_{k+1}^{(i-1)}}^{\overline{d}-1} \!\! A^{i-2-j}B\big(\bar{u}_{k+1}^{(j)}(D_{k+1+j}) -\bar{u}_{k+1}^{(j)}(d_{k+j+1})\big)\nonumber\\
		&= -Lx_{k+i}\, + L \sum\nolimits_{j = -\hat{H}_{k+1}^{(i-1)}+1}^{\overline{d}} A^{i-1-j}B\big(\bar{u}_{k}^{(j)}(D_{k+j}) -\bar{u}_{k}^{(j)}(d_{k+j})\big).\label{eq_kappa_k+1_of_dk+1_vec}
	\end{align}
	%\end{subequations}
	since $\bar{u}_{k+1}^{(j-1)}(d_{k+j}) = \bar{u}_{k}^{(j)}(d_{k+j})$ for all $j \leq \overline{d}$ for \eqref{eq_u_hat_k+1+i} due to \eqref{uk_bar}.
	%\begin{align*}
	%	&\bar{u}_{k+1}^{(j-1)}(d_{k+j}) = \begin{cases}
	%		\tilde{u}_{k+j-d_{k+j}}^{(d_{k+j})}& j-1 < d_{k+j}\\
	%		\hat{u}_{k+1}^{(j-1)}& d_{k+j} \leq j-1 \leq N-1
	%	\end{cases}\\
	% &= \begin{cases}
	% 	\tilde{u}_{k+j-d_{k+j}}^{(d_{k+j})}& j < d_{k+j}\\
	% 	\tilde{u}_{k}^{(d_{k+j})} = \hat{u}_{k}^{(j)} & j = d_{k+j}\\
	% 	\hat{u}_{k}^{(j)}& d_{k+j}+1 \leq j \leq N.
	% \end{cases}
	%\end{align*}
	As shown in Appendix~\ref{Appendix_eq_kappa_k+1_of_dk_vec}, this yields
	\begin{align}
		& \bar{\kappa}_{k+1}^{(i-1)}(\vec{d}_{k+1}) = \begin{cases}
			\bar{\kappa}_{k}^{(i)}(\vec{d}_{k})& i \in [N, \hat{N}-1]\\
			-Lx_{k+\hat{N}}& i = \hat{N}
		\end{cases}\label{eq_kappa_k+1_of_dk_vec}
	\end{align}
	with $\vec{d}_{k} \in \mathcal{D}_k$ for any $\vec{d}_{k+1} \in \mathcal{D}_{k+1}$ so the condition for feasibility from Section~\ref{sec_condition_feas} is satisfied. 
\end{proof}

\subsubsection{Admissible Initial States}\label{sec_admissible_x0_stoch}
\begin{definition}[Admissible Initial State]\label{def_admissibe_x0}
	An initial state $x_0$ is considered admissible if it is guaranteed that $M_xA^{k}x_0\leq n_x$ for all $k \in [0, K_h-1]$ and that \eqref{eq_optimization_problem} is feasible at time step $K_h$.
\end{definition}
\begin{lemma}[Satisfied Constraints]\label{le_satisfied_constr}
	If $(a)$ $\kappa_k^{(i)}$ and $\mathcal{X}_N(\mathcal{I}_k)$ are chosen according to Definition~\ref{def_stabilizing}~and~\ref{def_terminal} and $(b)$ $x_0$ is admissible then the constraints \eqref{constraints} are satisfied.
\end{lemma}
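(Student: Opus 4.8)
The plan is to verify \eqref{constraints} separately on the two regimes $0 \le k < K_h$ and $k \ge K_h$, since the controller behaves differently before and after it first receives state information.

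First I would treat the start-up phase $0 \le k < K_h$. On this interval the controller output is zero by \eqref{eq_uk_tilde}, i.e. $\tilde{u}_j = 0$ for every $0 \le j < K_h$, and $\tilde{u}_j = 0$ for $j < 0$ by convention. Since $u_k = \tilde{u}_{k-D_k}^{(D_k)}$ from \eqref{eq_uk} and $k-D_k \le k < K_h$ (because $D_k \ge \underline{d} \ge 0$), this forces $u_k = 0$ throughout, hence $x_{k+1} = A x_k$ and $x_k = A^k x_0$. The state constraint $M_x x_k \le n_x$ then follows directly from the admissibility requirement $M_x A^k x_0 \le n_x$ for $k \in [0, K_h-1]$ in Definition~\ref{def_admissibe_x0}, while $M_u u_k \le n_u$ holds because $u_k = 0$ and the origin is an interior point of the input constraint set.

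Next I would treat $k \ge K_h$. Admissibility guarantees that \eqref{eq_optimization_problem} is feasible at time step $K_h$, and Lemma~\ref{le_recursive} propagates this to feasibility at every $\tilde{k} \ge K_h$. At each such $\tilde{k}$, feasibility means the selected $\hat{u}_{\tilde{k}}$ satisfies \eqref{constr_stochastic}, whose $i=0$ component reads $M_x x_{\tilde{k}} \le n_x$ and $M_u u_{\tilde{k}} \le n_u$. Because this constraint is imposed for every delay realization with positive conditional probability given $\mathcal{I}_{\tilde{k}}$, and because the realized input coincides with the $i=0$ prediction evaluated at the realized delay --- indeed $u_{\tilde{k}} = \tilde{u}_{\tilde{k}-D_{\tilde{k}}}^{(D_{\tilde{k}})}$ by \eqref{eq_uk} with $\tilde{u}_{\tilde{k}}^{(d)} = \hat{u}_{\tilde{k}}^{(d)}$ covering the case $D_{\tilde{k}} = 0$ --- the realized pair $(x_{\tilde{k}}, u_{\tilde{k}})$ satisfies the constraints as one admissible realization. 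Equivalently, one may invoke the terminal-set property summarized after Definition~\ref{def_terminal}, which states that feasibility guarantees $M_x x_{k+i} \le n_x$ and $M_u u_{k+i} \le n_u$ for all $i \ge 0$.

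The main obstacle is the gap between the predicted and the realized trajectory. Since the controller re-optimizes at every step and, due to the delays modeled by $\mathbf{H}$ and $\mathbf{S}$, knows neither the exact current state nor exactly which packets reached the actuator, feasibility at $K_h$ alone does not determine the realized trajectory. The resolution is that \eqref{constr_stochastic} is a robust constraint over all sequences $\vec{d}_k \in \mathcal{D}_k$ (see \eqref{eq_Dk_vec_set}), so every trajectory that can actually occur is already covered; together with recursive feasibility this yields constraint satisfaction at each individual step. The one detail that must be checked carefully is precisely that the realized $u_{\tilde{k}}$ is among the values constrained in the time-$\tilde{k}$ problem, which follows from \eqref{eq_uk} together with $\tilde{u}_{\tilde{k}}^{(d)} = \hat{u}_{\tilde{k}}^{(d)}$.
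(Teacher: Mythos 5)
Your proposal is correct and follows essentially the same route as the paper: split at $K_h$, show $u_k=0$ and $x_k=A^kx_0$ on $[0,K_h-1]$ so the constraints follow from Definition~\ref{def_admissibe_x0}, then combine admissibility with Lemma~\ref{le_recursive} so that the constraints \eqref{constr_stochastic} in the feasible problem \eqref{eq_optimization_problem} cover the realized trajectory for $k\geq K_h$. Your additional care in checking that the realized $(x_{\tilde{k}},u_{\tilde{k}})$ is among the delay realizations constrained at time $\tilde{k}$ is a detail the paper leaves implicit, but it is the same argument.
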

\begin{proof}
	For all $k < K_h$, $u_k = 0$ due to \eqref{eq_uk} and \eqref{eq_uk_tilde} so $x_k = A^kx_0$. Therefore, $M_xx_k\leq n_x$ and $M_uu_k\leq n_u$ hold for all $k \in [0,K_h-1]$ if $x_0$ is admissible. The constraints in \eqref{eq_optimization_problem} ensure that $M_{x}x_k \leq n_x$ and $M_{u}u_k \leq n_u$ for all $k\geq K_h$ where \eqref{eq_optimization_problem} is feasible if $x_0$ is admissible due to Lemma~\ref{le_recursive}.
\end{proof}
\begin{lemma}
	Assume that $\kappa_k^{(i)}$ and $\mathcal{X}_N(\mathcal{I}_k)$ are chosen according to Definition~\ref{def_stabilizing}~and~\ref{def_terminal} and let $\tilde{\mathcal{X}}_0(K_h)$ be the set of all $x_0\in\mathbb{R}^n$ for which $(u_{K_h+i})_{\overline{d}\leq i \leq N-1}$ exists such that
	\begin{align}\label{eq_conditions_admissible}
		&M_{u}u_{k} \leq n_u \ \forall k \in [K_h+\overline{d},K_h+N-1],\\
		&M_{x}x_{k} \leq n_x \ \forall k \in [0,K_h+N-1], \quad 
		M_{N}x_{K_h+N} \leq n_N \nonumber
	\end{align}
	if $u_k = 0$ for all $k < K_h + \overline{d}$. Then $(a)$ $x_0$ is admissible if and only if $x_0\in \tilde{\mathcal{X}}_0(K_h)$ with $(b)$ $\tilde{\mathcal{X}}_0(K_h)$ of the form
	\begin{align}
		\tilde{\mathcal{X}}_0(K_h) = \left\{ x_0\in \mathbb{R}^n \big|M_{0,K_h}x_0 \leq n_{0,K_h}\right\}.\label{eq_form_XKh_0}
	\end{align}
\end{lemma}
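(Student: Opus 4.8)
The plan is to prove the equivalence (a) and the polyhedral form (b) by collapsing the feasibility of the stochastic problem \eqref{eq_optimization_problem} at time step $K_h$ to a deterministic one-shot feasibility problem, and then appealing to the fact that a projection of a polyhedron is a polyhedron. The starting observation is that $\tilde u_j = 0$ for all $j < K_h$ by \eqref{eq_uk_tilde}, together with the structure of \eqref{eq_uki_stochastic}. I would first isolate the delay realization in which no controller packet sent at a step $\ge K_h$ is the most recently available packet at the actuator before step $K_h + \overline d$, i.e.\ $D_{K_h+i} > i$ for all $i \in [0,\overline d - 1]$. Along it the first branch of \eqref{eq_uki_stochastic} is active for every $i < \overline d$ and returns $\tilde u_{K_h+i-D_{K_h+i}}^{(D_{K_h+i})} = 0$ since $K_h + i - D_{K_h+i} < K_h$, whereas for $i \ge \overline d$ one always has $D_{K_h+i} \le \overline d \le i$ and hence $u_{K_h+i} = \hat u_{K_h}^{(i)}$. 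This is exactly the deterministic trajectory featuring in \eqref{eq_conditions_admissible}.

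For the direction $x_0 \in \tilde{\mathcal X}_0(K_h) \Rightarrow$ admissibility I would set $\hat u_{K_h}^{(i)} = 0$ for $i < \overline d$ and $\hat u_{K_h}^{(i)} = u_{K_h+i}$ for $\overline d \le i \le N-1$, with $(u_{K_h+i})$ the witness from \eqref{eq_conditions_admissible}. With the low components zeroed, both branches of \eqref{eq_uki_stochastic} give $u_{K_h+i}=0$ for every $i < \overline d$ in every realization, so the whole predicted trajectory is deterministic; moreover every correction term $\bar u_{K_h}^{(j)}(\cdot)$ with $j \le \overline d - 1$ in \eqref{kappa_k_bar} vanishes, so $\bar\kappa_{K_h}^{(i)}(\vec d_k) = -Lx_{K_h+i}$ and the controls of the form \eqref{eq_form_uki_feas} collapse to the plain LQR on $[N,\hat N-1]$. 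The terminal condition \eqref{terminal_stochastic}, \eqref{eq_7} then reduces to the requirement that the LQR trajectory from $x_{K_h+N}$ respects the constraints up to $\hat N$ together with $M_N x_{K_h+\hat N} \le n_N$, which by the maximal-invariant-set property underlying \eqref{eq_1254} is equivalent to $M_N x_{K_h+N} \le n_N$. Hence \eqref{eq_conditions_admissible} makes \eqref{eq_optimization_problem} feasible at $K_h$, and its state constraints on $[0,K_h-1]$ supply the remaining requirement of Definition~\ref{def_admissibe_x0}.

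The converse admissibility $\Rightarrow x_0 \in \tilde{\mathcal X}_0(K_h)$ is where I expect the main difficulty. Given a feasible $\hat u_{K_h}$, evaluating \eqref{constr_stochastic} and \eqref{terminal_stochastic} along the worst-case realization above yields the state and input constraints of \eqref{eq_conditions_admissible} and $x_{K_h+N} \in \mathcal X_N(\mathcal I_{K_h})$ for the deterministic trajectory with witness $u_{K_h+i} = \hat u_{K_h}^{(i)}$. The obstacle is that a feasible $\hat u_{K_h}$ may have nonzero low components, so the corrections in \eqref{kappa_k_bar} no longer vanish and $\mathcal X_N(\mathcal I_{K_h})$ is a priori larger than $\{x \mid M_N x \le n_N\}$, which is what \eqref{eq_conditions_admissible} demands. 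I would close this by reading Definition~\ref{def_terminal} as requiring \eqref{eq_7} for every admissible weighting in \eqref{eq_form_uki_feas}, in particular the weight concentrated on the worst-case $\vec d_k$ with $d_{K_h+j} > j$ for all $j$; for that weight the corrections vanish and \eqref{eq_7} forces $M_N x_{K_h+N} \le n_N$, giving $\mathcal X_N(\mathcal I_{K_h}) \subseteq \{x \mid M_N x \le n_N\}$. This step needs the worst-case realization to lie in $\mathcal D_{K_h}$, i.e.\ to have positive conditional probability given $\mathcal I_{K_h}$; I would derive this from the network model, since packet loss is always admissible and therefore the age $D$ may always increase by one, so a path with $D_{K_h+i} > i$ is reachable with positive probability. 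Verifying this positivity carefully is, I expect, the most delicate point.

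For part (b), once $u_k = 0$ for $k < K_h + \overline d$ is imposed the states $x_k$ on $[0,K_h+N]$ are affine in $x_0$ and in the free variables $(u_{K_h+i})_{\overline d \le i \le N-1}$ through \eqref{eq_LTI}, so \eqref{eq_conditions_admissible} describes a polyhedron in the joint variable. The set $\tilde{\mathcal X}_0(K_h)$ is its projection onto $x_0$, and since a projection of a polyhedron is a polyhedron (Fourier--Motzkin elimination), it has the form \eqref{eq_form_XKh_0}.
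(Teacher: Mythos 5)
Your proposal is correct and follows essentially the same route as the paper's proof: necessity via a maximal-delay realization along which every packet available at the actuator before step $K_h+\overline{d}$ is zero, sufficiency via the choice $\hat{u}_{K_h}^{(i)}=0$ for $i<\overline{d}$ so that the corrections in \eqref{kappa_k_bar} vanish, the terminal set collapses to $\left\{x \,\middle|\, M_N x \leq n_N\right\}$ by the Gilbert--Tan maximality property, and part $(b)$ by projecting the polyhedron in $(x_0,\underline{u})$ onto $x_0$ (the paper invokes Algorithm 3.2 of \cite{Keerthi87} where you invoke Fourier--Motzkin elimination, which is the same substance). Your explicit handling of the terminal constraint in the necessity direction---concentrating the weights $p_{K_h,i}$ on the worst-case $\vec{d}$ so that $\mathcal{X}_N(\mathcal{I}_{K_h})$ forces $M_Nx_{K_h+N}\leq n_N$ even when the low components of $\hat{u}_{K_h}$ are nonzero---and the positive-probability caveat you flag are details the paper passes over implicitly (it simply asserts that $D_k=\overline{d}$ for all $k\leq K_h+\overline{d}$ is possible), so they refine rather than change the argument.
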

\begin{proof}
	\eqref{eq_optimization_problem} is guaranteed to be feasible at time step $K_h$ if and only if 
	a sequence $(\hat{u}_{K_h}^{(i)})_{\underline{d}\leq i\leq N-1}$ exists such that
	\begin{subequations}\label{eq_n}
		\begin{align}
			& M_{u}u_{k} \leq n_u, \, M_{x}x_{k} \leq n_x\quad \forall k\in [K_h,K_h+N-1]\\
			& x_{K_h+N} \in \mathcal{X}_N(\mathcal{I}_{K_h})\label{eq_m}
		\end{align}
	\end{subequations}
	is satisfied for all possible realizations of $(D_k)_{k\leq K_h+ \overline{d}}$ for
	\begin{align*}
		u_{K_h+i} = \begin{cases}
			\tilde{u}_{K_h+i-D_{K_h+i}}^{(D_{K_h+i})} = 0& i < D_{K_h+i}\\
			\hat{u}_{K_h}^{(i)}& D_{K_h+i} \leq i \leq N-1\\
			\kappa_{K_h}^{(i)}& i\geq N.
		\end{cases}
	\end{align*}
	Since $D_k = \overline{d}\ \forall k\leq K_h+ \overline{d}$ is possible and yields 
	\begin{align}
		u_{K_h+i} &= \begin{cases}
			\tilde{u}_{K_h+i-\overline{d}}^{(\overline{d})} = 0& i < \overline{d}\\
			\hat{u}_{K_h}^{(i)}& \overline{d} \leq i \leq N-1\\
			\kappa_{K_h}^{(i)}& i\geq N,
		\end{cases}\label{eq_uKh}
	\end{align}
	it is necessary that \eqref{eq_n} holds for \eqref{eq_uKh}. Since choosing $\hat{u}_{K_h}^{(i)} = 0\, \forall i < \overline{d}$ yields \eqref{eq_uKh} for any possible realization of $(D_k)_{k\leq K_h+ \overline{d}}$, this is also sufficient. 
	
	Since \eqref{eq_uKh} yields $\mathbb{E}\left\{\left.u_{K_h+j}\right|\mathcal{I}_{K_h}\right\} = u_{K_h+j} = 0$ for all $j < \overline{d}$, $\kappa_{K_h}^{(i)} = -Lx_{k+i}$ for all $i \geq N$ due to \eqref{eq_kappa_ki_proposed} so the terminal constraint \eqref{eq_m} can be written as $M_Nx_{K_h+N} \leq n_N$.
	
	Summing up, \eqref{eq_optimization_problem} is guaranteed to be feasible at time step $K_h$ if and only if a sequence $(\hat{u}_{K_h}^{(i)})_{\overline{d}\leq i\leq N-1}$ exists such that
	\begin{align*}
		&M_{u}u_{k} \leq n_u \ \forall k \in [K_h+\overline{d},K_h+N-1],\\
		&M_{x}x_{k} \leq n_x \ \forall k \in [K_h,K_h+N-1], \quad 
		M_{N}x_{K_h+N} \leq n_N \nonumber
	\end{align*}
	if $u_k = 0\ \forall\, k < K_h+\overline{d}$. Including the requirement that $M_xA^{k}x_0\leq n_x$ for all $k \in [0, K_h-1]$ where $A^{k}x_0 = x_k$ yields that $x_0$ is admissible if and only if \eqref{eq_conditions_admissible} holds which completes the proof for $(a)$.
	
	If $u_k = 0$ for all $k < K_h + \overline{d}$, $x_k$ with $0 \leq k \leq K_h + N$ can be computed via $x_{k} 
	= A^k x_0 + \underline{B}(K_h,k)\underline{u}$ with
	$\underline{u} = \big[\begin{matrix}
		u_{K_h+N-1}^T& \hdots& u_{K_h+\overline{d}}^T
	\end{matrix}\big]^T$ and $\underline{B}(K_h,k) = \big[\begin{matrix}
		0_{n\times (N+K_h-k)m}\!\!& \! A^0B& \!\!\!\hdots\!\!\!& A^{k-1-K_h-\overline{d}}B
	\end{matrix}\big]$ if $k > K_h\!+\!\overline{d}$ and $\underline{B}(K_h,k) = 0_{n\times (N-\overline{d})m}$ otherwise,
	the set of all $x_0$ for which $(u_{K_h+i})_{\overline{d}\leq i \leq N-1}$ exists such that \eqref{eq_conditions_admissible} holds for $u_k = 0\,\forall k < K_h+\overline{d}$ is the set of all $x_0$ for which $\underline{u}$ exists such that $\mathcal{A}_{K_h}x_0 + \mathcal{B}_{K_h}\underline{u} \leq \mathcal{C}_{K_h}$ with
	\begin{align*}
		\mathcal{A}_{K_h} &= \begin{bmatrix}
			0_{(N-\overline{d})a_u \times n}\\
			M_x A^{0}\\[-1mm]
			\vdots\\
			M_x A^{K_h+N-1}\\
			M_{N} A^{K_h+N}
		\end{bmatrix}, \
		\mathcal{B}_{K_h} = \begin{bmatrix}
			I_{N-\overline{d}} \otimes M_u\\
			M_x\underline{B}(K_h,0)\\[-1mm]
			\vdots\\
			M_x\underline{B}(K_h,K_h\!+\!N\!-\!1)\\
			M_{N}\underline{B}(K_h,N\!+\!K_h)
		\end{bmatrix}\\
		\mathcal{C}_{K_h} &= \begin{bmatrix}
			(1_{N-\overline{d}\times 1} \otimes n_u)^T&
			(1_{K_h+N\times 1} \otimes n_x)^T&
			n_{N}^T
		\end{bmatrix}^T.
	\end{align*}
	This set is can be written in the form \eqref{eq_form_XKh_0} where $M_{0,K_h}$ and $n_{0,K_h}$ can be determined via Algorithm 3.2 in \cite{Keerthi87}. 
\end{proof}

Since $K_h$ is not known in advance, $x_0$ can only be considered to be admissible if $x_0\in \tilde{\mathcal{X}}_0(K_h)$ for all $K_h\in[\underline{h},\overline{h}]$.
\begin{definition}[Set of Admissible Initial States]
	The set admissible initial states $\mathcal{X}_0$ is chosen as
	\begin{align}
		\mathcal{X}_0&= \left\{ x_0\in \mathbb{R}^n \big|M_{0}x_0\leq n_{0}\right\}\label{eq_admissible_x0}
	\end{align}
	with
	$M_{0}^T = \big[\begin{matrix}
		M_{0,\underline{h}}^T&\! \hdots\!& M_{0,\overline{h}}^T
	\end{matrix}\big]$ and $n_{0}^T = \big[\begin{matrix}
		n_{0,\underline{h}}^T&\! \hdots\!& n_{0,\overline{h}}^T
	\end{matrix}\big]$ from \eqref{eq_form_XKh_0} which is the set of all admissible initial states.
\end{definition}

Since $\mathcal{X}_0$ is of the form \eqref{eq_admissible_x0} and since $x_0$ is admissible if $x_0\in\mathcal{X}_0$, property $a)$ in Theorem~\ref{th_1} holds due to Definition~\ref{def_admissibe_x0} and Lemma~\ref{le_recursive}; property $b)$ holds due to Lemma~\ref{le_satisfied_constr}.

\subsection{Convergence}
\begin{lemma}[Convergence with Probability One]\label{le_convergence}
	If $\kappa_k^{(i)}$ and $\mathcal{X}_N(\mathcal{I}_k)$ are chosen according to Definition~\ref{def_stabilizing}~and~\ref{def_terminal} then 
	$\mathbb{P}\big[\lim_{k\to\infty}x_k = 0\big] = 1$ if $x_0$ is admissible.
\end{lemma}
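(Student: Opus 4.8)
The plan is to use the optimal value of the stochastic program \eqref{eq_optimization_problem} as a Lyapunov-type function and to show that, along the closed loop, it is a non-negative supermartingale with respect to the increasing family of information sets $\mathcal{I}_k$. Write $\ell_k = x_k^TQx_k + u_k^TRu_k \geq 0$ and let $V_k$ denote the optimal value of \eqref{eq_optimization_problem} at time step $k$. For an admissible $x_0$, feasibility at $K_h$ together with Lemma~\ref{le_recursive} and $K_h\leq\overline{h}$ guarantees that \eqref{eq_optimization_problem} is feasible for every $k\geq\overline{h}$, so $V_k$ is well defined, non-negative and $\mathcal{I}_k$-measurable. It is finite because the stabilizing law reduces to the plain LQR for $i\geq\hat{N}$ by \eqref{kappa_ki_i_geq_hat_N}, so the infinite-horizon tail converges geometrically; and it is integrable because the bounded network processes admit only finitely many realizations up to any finite time.

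First I would establish the one-step decrease
\begin{align}
	\mathbb{E}\left\{\left.V_{k+1}\right|\mathcal{I}_k\right\} \leq V_k - \mathbb{E}\left\{\left.\ell_k\right|\mathcal{I}_k\right\},\qquad k\geq\overline{h}.\label{eq_plan_decrease}
\end{align}
The comparison point is the shifted candidate \eqref{eq_u_hat_k+1+i} from the proof of Lemma~\ref{le_recursive}, whose cost under $\mathcal{I}_{k+1}$ I denote by $J_{k+1}^{\mathrm{feas}}$; by optimality $V_{k+1}\leq J_{k+1}^{\mathrm{feas}}$. By \eqref{uki_k+1_rewritten} this candidate reproduces the actuating variables of the time-$k$ solution over the first $N-1$ prediction steps, so those stage costs coincide; taking $\mathbb{E}\{\cdot\,|\,\mathcal{I}_k\}$ and invoking the tower property (valid since $\mathcal{I}_k\subseteq\mathcal{I}_{k+1}$) telescopes the cost, leaving only the terminal contributions $i\geq N$. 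For these I would use the matching of the terminal building blocks \eqref{eq_kappa_k+1_of_dk_vec} and the fact that the shifted terminal law is formed with the refined conditioning $\mathcal{I}_{k+1}$ to show that the expected terminal cost does not increase, which yields \eqref{eq_plan_decrease}.

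Given \eqref{eq_plan_decrease}, taking total expectations and summing from $\overline{h}$ to $K$ gives $\sum_{k=\overline{h}}^{K}\mathbb{E}\{\ell_k\}\leq\mathbb{E}\{V_{\overline{h}}\}<\infty$ since $V_{K+1}\geq 0$. Letting $K\to\infty$ and applying Tonelli yields $\mathbb{E}\{\sum_{k\geq\overline{h}}\ell_k\}<\infty$, hence $\sum_k\ell_k<\infty$ and therefore $\ell_k\to 0$ almost surely (the supermartingale convergence theorem additionally yields almost-sure convergence of $V_k$ itself). Because $R\succ 0$ and $Q=C^TC$, this forces $u_k\to 0$ and $Cx_k\to 0$ almost surely. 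The finitely many steps $k<K_h\leq\overline{h}$, during which $x_k=A^kx_0$, do not affect the limit.

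Finally I would upgrade $Cx_k\to 0$ to $x_k\to 0$ through detectability of $(A,C)$: choosing a gain $K$ with $A-KC$ Schur stable and writing $x_{k+1}=(A-KC)x_k+(KCx_k+Bu_k)$ exhibits $x_k$ as the state of a Schur-stable system driven by the input $KCx_k+Bu_k\to 0$, so $x_k\to 0$ on the almost-sure event where $\ell_k\to 0$ by the standard pathwise argument for asymptotically stable linear systems with vanishing inputs. The main obstacle is the terminal part of \eqref{eq_plan_decrease}: since the proposed terminal law \eqref{eq_kappa_ki_proposed} contains correction terms built from conditional expectations, the realized terminal trajectory at time $k+1$ no longer coincides with the one predicted at time $k$, and one must argue that passing from the conditioning $\mathcal{I}_k$ to the finer $\mathcal{I}_{k+1}$ cannot raise the expected terminal cost, so that the telescoping produces the inequality in the correct direction.
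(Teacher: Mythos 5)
Your overall architecture is the paper's: compare the receding-horizon scheme with the shifted candidate \eqref{eq_u_hat_k+1+i}, i.e.\ \eqref{uki_k+1_rewritten}, exploit $\mathcal{I}_k\subseteq\mathcal{I}_{k+1}$ and the tower property, note that the shifted candidate reproduces $(u_{k+i})_{i<N}$ and $(x_{k+i})_{i\leq N}$, and reduce everything to a comparison of the two terminal tails. Your back end (summability of $\mathbb{E}\{\ell_k\}$, hence $\ell_k\to 0$ almost surely, hence $u_k\to 0$ and $Cx_k\to 0$, hence $x_k\to 0$ by detectability) is a correct and more explicit rendering of the paper's step that $\mathbb{E}\{\bar J\}<\infty$ implies $\lim_{k\to\infty} x_k=0$ with probability one.

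The gap is precisely the step you flag as the main obstacle but never prove: that the shifted terminal law, whose correction terms are built with $\mathbb{E}\{\cdot\,|\,\mathcal{I}_{k+1}\}$, yields a conditional expected tail cost no larger than the time-$k$ law, whose correction terms use $\mathbb{E}\{\cdot\,|\,\mathcal{I}_{k}\}$. The principle you invoke --- that passing to a finer conditioning ``cannot raise the expected terminal cost'' --- is not valid as stated: you are not comparing optimal values under two information patterns (where monotonicity in information would be automatic), but two \emph{fixed} policies inside the restricted parametrized family \eqref{kappa_stability}, and substituting an $\mathcal{I}_{k+1}$-measurable parameter into such a family does not automatically beat an $\mathcal{I}_k$-measurable one. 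What closes the argument in the paper (Appendix~\ref{Appendix_opt_pr_stability}) is an optimality property \emph{within} the family: the tail cost \eqref{eq_cost_stability} subject to \eqref{kappa_stability} is a convex quadratic in the parameter vector $\check v_k$, and its first-order condition \eqref{eq_34} is satisfied by $\check v_k^*=\mathbb{E}\left\{\left.\check u_k\right|\mathcal{I}_{k+1}\right\}$. This is verified by observing that for the perfect-information choice $\check v_k=\check u_k$ the law \eqref{kappa_stability} collapses to $u_{k+i}=-Lx_{k+i}$, the unconstrained LQR optimum, so the linear identity $\check R\begin{bmatrix}0& I\end{bmatrix}\check x_k+\check H\check x_k=0$ holds for \emph{every} realization of $\check x_k$; taking $\mathbb{E}\{\cdot\,|\,\mathcal{I}_{k+1}\}$ of this identity gives \eqref{eq_34} for \eqref{eq_35}. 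Since the conditional expectation is therefore the minimizer over all $\mathcal{I}_{k+1}$-measurable parameters, and $\mathbb{E}\left\{\left.\check u_k\right|\mathcal{I}_{k}\right\}$ is one such parameter, the desired tail comparison follows. Without this argument (or an equivalent one), your one-step decrease inequality --- and with it the entire supermartingale construction --- is unsupported.
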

\begin{proof}
	The cost function in \eqref{eq_optimization_problem} can be replaced with 
	$J_k = \mathbb{E}\left\{\left. \bar{J} \right|\mathcal{I}_k\right\}$ where $\bar{J} = \sum_{\tilde{k}=0}^{\infty} \big(x_{\tilde{k}}^TQx_{\tilde{k}} + u_{\tilde{k}}^TRu_{\tilde{k}}\big)$ since $J_k$ is equal to the cost function in \eqref{eq_optimization_problem} except for terms constant with respect to the optimization variables. 
	
	Since $\bar{J}< \infty $ implies that $\lim_{\tilde{k}\to \infty}x_{\tilde{k}} = 0$ and since $\bar{J} \geq 0$, $\mathbb{E}\{ \bar{J}\} < \infty$ implies that $\lim_{\tilde{k}\to \infty}x_{\tilde{k}} = 0$ with probability one. Due to \eqref{kappa_ki_i_geq_hat_N}, $J_k < \infty$ when applying \eqref{eq_uki_stochastic}. Since this holds for any possible $\mathcal{I}_k$, $\mathbb{E}\{ \bar{J}\} = \mathbb{E}\{ J_k\}$ is finite as well so $\lim_{\tilde{k}\to \infty}x_{\tilde{k}} = 0$ with probability one when applying \eqref{eq_uki_stochastic}.
	
	In order to show that $\mathbb{E}\{ \bar{J}\}$ is also finite when solving \eqref{eq_optimization_problem} at each time step $k\geq K_h$, it is sufficient to show that solving \eqref{eq_optimization_problem} at time step $k+1$ and applying 
	\begin{align}
		u_{k+1+i} = \begin{cases}
			\tilde{u}_{k+1+i-D_{k+1+i}}^{(D_{k+1+i})}& i < D_{k+1+i}\\
			\hat{u}_{k+1}^{(i)}& D_{k+1+i} \leq i \leq N-1\\
			\kappa_{k+1}^{(i)}& i\geq N
		\end{cases}\label{uki_k+1}
	\end{align}
	cannot result in a larger value of $J_k$ than applying \eqref{eq_uki_stochastic}. 
	
	Since $J_k = \mathbb{E}\left\{\left. \mathbb{E}\left\{\left. \bar{J} \right|\mathcal{I}_{k+1}\right\} \right|\mathcal{I}_k\right\} = \mathbb{E}\left\{\left. J_{k+1} \right|\mathcal{I}_k\right\}$ due to $\mathcal{I}_k \subseteq \mathcal{I}_{k+1}$, it is sufficient to show that there always exists a feasible sequence $(\hat{u}_{k+1}^{(i)})_{\underline{d}\leq i\leq N-1}$ such that applying \eqref{uki_k+1} does not result in a larger value of $J_{k+1}$ than applying \eqref{eq_uki_stochastic}. It is shown below that one such sequence is given by \eqref{uki_k+1_rewritten}. 
	
	As shown in Appendix~\ref{Appendix_eq_kappa_k+1_of_dk_vec}, \eqref{eq_kappa_k+1_of_dk+1_vec} can be written as \eqref{eq_2} for \eqref{uki_k+1_rewritten}. This can be used to write $u_{k+i} = \kappa_{k+1}^{(i-1)}$ as
	\begin{align}
		u_{k+i} = &-Lx_{k+i} 
		+ L\!\! \sum_{j = -\hat{H}_k^{(i)}}^{\overline{d}-1}\!\! A^{i-1-j}B\left(u_{k+j} - v_{k+j}\right)\label{kappa_stability}
	\end{align}
	with $v_{k+j} = \mathbb{E}\left\{\left.u_{k+j}\right|\mathcal{I}_{k+1}\right\}$ for \eqref{uki_k+1_rewritten}. Due to \eqref{eq_kappa_ki_proposed}, ${u_{k+i} = \kappa_{k}^{(i)}}$ can be written as \eqref{kappa_stability} with ${v_{k+j} = \mathbb{E}\left\{\left.u_{k+j}\right|\mathcal{I}_{k}\right\}}$. 
	
	Since \eqref{uki_k+1_rewritten} and \eqref{eq_uki_stochastic} result in the same $(u_{k+i})_{i < N}$ and $(x_{k+i})_{i \leq N}$, it remains to be shown that applying \eqref{kappa_stability} with ${v_{k+j} = \mathbb{E}\left\{\left.u_{k+j}\right|\mathcal{I}_{k+1}\right\}}$ cannot result in a larger value of 
	\begin{align}
		\mathbb{E}\left\{\left. \sum\nolimits_{\tilde{k}=k+N}^{\infty} \left(x_{\tilde{k}}^TQx_{\tilde{k}} + u_{\tilde{k}}^TRu_{\tilde{k}}\right) \right|\mathcal{I}_{k+1}\right\} \label{eq_cost_stability}
	\end{align}
	than applying \eqref{kappa_stability} with $v_{k+j} = \mathbb{E}\left\{\left.u_{k+j}\right|\mathcal{I}_{k}\right\}$. This is shown in Appendix~\ref{Appendix_opt_pr_stability} by showing that $v_{k+j} = \mathbb{E}\left\{\left.u_{k+j}\right|\mathcal{I}_{k+1}\right\}$ minimizes \eqref{eq_cost_stability} subject to \eqref{kappa_stability}.
\end{proof}
Due to Lemma~\ref{le_convergence}, property $c)$ in Theorem~\ref{th_1} also holds.

\subsection{Rewritten Control Law}
In order to complete the proof for Theorem~\ref{th_1}, it remains to be shown that property $(d)$ holds for the optimization problem \eqref{eq_optimization_problem} with $\kappa_k^{(i)}$ and $\mathcal{X}_N(\mathcal{I}_k)$ from Definition~\ref{def_stabilizing}~and~\ref{def_terminal}. This optimization problem is rewritten by $(i)$ inserting the equality constraint \eqref{eq_uki_stochastic} in the cost function and in the remaining constraints, by $(ii)$ rewriting the terminal constraint \eqref{terminal_stochastic} as inequality constraints and by $(iii)$ evaluating the expected value in \eqref{eq_cost_stochastic} which involves computing the probability distribution of $\vec{D}_k$ from \eqref{eq_Dk_vec}. 

\subsubsection{Rewritten Probabilities}\label{sec_implement_probabilities}
\begin{lemma}[Rewritten Probability Distribution of $\vec{D}_k$]
	The probability distribution of $\vec{D}_k$ from \eqref{eq_Dk_vec} given $\mathcal{I}_k$ can be written as $\mathbb{P}\big[\vec{D}_k=\vec{d}_k \big| \mathcal{I}_k\big]
	= P_k(d_{k-\tilde{H}_k})\vec{P}(\tilde{H}_k, \vec{d}_k)$  where
	\begin{subequations}
		\begin{align}P_k(d_{k-\tilde{H}_k}) &= \mathbb{P}\left[\left.D_{k-\tilde{H}_k} = d_{k-\tilde{H}_k} \right| \mathcal{I}_k\right]\label{eq_Pk}\\
			\vec{P}(\tilde{H}_k, \vec{d}_k)&= \prod\nolimits_{i=-\tilde{H}_k}^{\overline{d}-2}\Phi(d_{k+i},d_{k+i+1}).\label{eq_Pk_vec_rew}
		\end{align}
	\end{subequations}
\end{lemma}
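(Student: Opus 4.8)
The plan is to obtain the product form by a telescoping (chain-rule) factorization of the joint distribution of $\vec{D}_k$ and then to collapse each resulting conditional factor using the Markov property \eqref{eq_Markov}. Writing $\tau = k-\tilde{H}_k$ for the smallest index occurring in $\vec{D}_k = (D_{k+i})_{-\tilde{H}_k \leq i \leq \overline{d}-1}$ from \eqref{eq_Dk_vec}, the chain rule gives
\begin{align*}
\mathbb{P}\big[\vec{D}_k = \vec{d}_k \big| \mathcal{I}_k\big] = \mathbb{P}\big[D_\tau = d_\tau \big| \mathcal{I}_k\big] \prod_{i=-\tilde{H}_k}^{\overline{d}-2} \mathbb{P}\big[D_{k+i+1} = d_{k+i+1} \big| (D_{k+j})_{-\tilde{H}_k \leq j \leq i} = (d_{k+j})_{-\tilde{H}_k \leq j \leq i}, \mathcal{I}_k\big].
\end{align*}
The leading factor is exactly $P_k(d_\tau)$ by \eqref{eq_Pk} (note $d_\tau = d_{k-\tilde{H}_k}$), and since $H_k,S_k$ are contained in $\mathcal{I}_k$, the quantity $\tilde{H}_k$ from \eqref{eq_hk_tilde} and hence the index range of $\vec{D}_k$ are $\mathcal{I}_k$-measurable, so the factorization is well posed. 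It then remains to show that every conditional factor equals $\Phi(d_{k+i},d_{k+i+1})$, which would assemble into $\vec{P}(\tilde{H}_k,\vec{d}_k)$ of \eqref{eq_Pk_vec_rew}.

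The crux, and the step I expect to be the main obstacle, is a careful accounting of which realizations of the delay process $\mathbf{D}$ are actually encoded in $\mathcal{I}_k$. I would argue that $\mathcal{I}_k$ is a deterministic function of $(D_{\tilde{k}})_{\tilde{k}\leq \tau-1}$ together with data generated by the channel processes $\mathbf{H}$ and $\mathbf{S}$ and by the plant. Using the Actuator Node description, the delay values transmitted to and available at the controller reach only up to index $k-S_k$; using the Sensor Node description together with \eqref{eq_uk}, the received states $(x_{\tilde{k}})_{\tilde{k}\leq k-H_k}$ are deterministic functions of $x_0$, the known controller outputs, and $(D_{\tilde{k}})_{\tilde{k}\leq k-H_k-1}$ only. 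Because $\tilde{H}_k = \min(H_k, S_k-1)$, both indices $k-S_k$ and $k-H_k-1$ are bounded above by $\tau-1 = k-\tilde{H}_k-1$; consequently $\mathcal{I}_k$ carries no information about $D_{\tilde{k}}$ for $\tilde{k}\geq \tau$ beyond what the chain dynamics propagate, and $D_\tau$ is precisely the first index not already determined by $\mathcal{I}_k$. This index bookkeeping, driven entirely by the definition of $\tilde{H}_k$, is where all the subtlety lies.

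Granting this, each conditional factor simplifies. Conditioning on $(D_{k+j})_{-\tilde{H}_k \leq j \leq i}$ together with $\mathcal{I}_k$ supplies the complete history $(D_{\tilde{k}})_{\tilde{k}\leq k+i}$ together with only $\mathbf{H}$- and $\mathbf{S}$-data; invoking that the delay process $\mathbf{D}$ is independent of the channel processes $\mathbf{H},\mathbf{S}$ (the three networks being modelled as separate stochastic processes) and that $\mathbf{D}$ is Markov, the future value $D_{k+i+1}$ depends on all of this only through $D_{k+i}$. Homogeneity in \eqref{eq_Markov} then identifies the factor with $\Phi(d_{k+i},d_{k+i+1})$ as defined in \eqref{eq_Phi}. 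Taking the product over $i$ from $-\tilde{H}_k$ to $\overline{d}-2$ reproduces \eqref{eq_Pk_vec_rew}, and multiplying by $P_k(d_\tau)$ yields the asserted decomposition; a final count confirms that the $\overline{d}-1+\tilde{H}_k$ transition factors together with the single $P_k$ factor match the $\overline{d}+\tilde{H}_k$ entries of $\vec{D}_k$.
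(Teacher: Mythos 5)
Your proof is correct and follows essentially the same route as the paper's: first establish by index bookkeeping (using $\tilde{H}_k = \min(H_k, S_k-1)$ for $k \geq K_s$ and $\tilde{H}_k = H_k$ otherwise, per \eqref{eq_hk_tilde}) that $\mathcal{I}_k$ contains no data depending on $(D_{k+i})_{i \geq -\tilde{H}_k}$, then invoke the Markov property \eqref{eq_Markov} to factorize the conditional distribution into $P_k(d_{k-\tilde{H}_k})$ times the transition probabilities \eqref{eq_Pk_vec_rew}. The only difference is presentational: you spell out the chain-rule factorization and the implicit independence of $\mathbf{D}$ from $\mathbf{H}$ and $\mathbf{S}$, both of which the paper compresses into a single appeal to \eqref{eq_Markov}.
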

\begin{proof}
	Due to \eqref{eq_hk_tilde}, the controller has not received any $x_{k+i}$ with $i > -\tilde{H}_k$ until time step $k$. Since $\tilde{H}_k < S_k$ for $k \geq K_s$, the controller also has not received any $D_{k+i}$ with $i \geq -\tilde{H}_k$. Since $x_{k+i}$ only depends on $(D_{k+j})_{j < i}$, this implies that $\mathcal{I}_k$ does not contain any data that depends on $(D_{k+i})_{i \geq -\tilde{H}_k}$. Due to \eqref{eq_Markov}, this implies that $\mathbb{P}\big[\vec{D}_k=\vec{d}_k \big| \mathcal{I}_k\big]
	= P_k(d_{k-\tilde{H}_k})\vec{P}(\tilde{H}_k, \vec{d}_k)$ with $P_k(d_{k-\tilde{H}_k})$ from \eqref{eq_Pk} and $\vec{P}(\tilde{H}_k, \vec{d}_k)= \prod\nolimits_{i=-\tilde{H}_k}^{\overline{d}-2}\mathbb{P}\left[\left.D_{k+i+1} = d_{k+i+1} \right| D_{k+i} = d_{k+i}\right]$ which is equal to \eqref{eq_Pk_vec_rew} due to \eqref{eq_Phi}.
\end{proof}

\begin{lemma}[Possible Sequences]
	Let $\tilde{\mathcal{D}}_k \subseteq [\underline{d},\overline{d}]$ denote the set of all possible values of $D_{k-\tilde{H}_k}$ given $\mathcal{I}_k$, i.e. $\tilde{\mathcal{D}}_k = \left\{d\in\mathbb{N}_0 \big|P_k(d) > 0\right\}$. Then, the set $\mathcal{D}_k$ of possible sequences from \eqref{eq_Dk_vec_set} is the set of all $(d_{k+i})_{-\tilde{H}_k \leq i \leq \overline{d}-1}$ with ${d_{k-\tilde{H}_k} \in \tilde{\mathcal{D}}_k}$ for which \eqref{eq_Phi} yields a non-zero probability.
\end{lemma}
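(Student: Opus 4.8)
The plan is to read off this characterization directly from the factorization established in the preceding lemma, exploiting the elementary fact that a product of non-negative reals is strictly positive precisely when every factor is strictly positive. The statement is an if-and-only-if description of the set $\mathcal{D}_k$, so I would prove it as a chain of equivalences that handles both inclusions simultaneously.

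First I would recall that, by the definition in \eqref{eq_Dk_vec_set}, a sequence $\vec{d}_k$ lies in $\mathcal{D}_k$ if and only if $\mathbb{P}\big[\vec{D}_k = \vec{d}_k \big| \mathcal{I}_k\big] > 0$. Substituting the factorization $\mathbb{P}\big[\vec{D}_k=\vec{d}_k \big| \mathcal{I}_k\big] = P_k(d_{k-\tilde{H}_k})\,\vec{P}(\tilde{H}_k, \vec{d}_k)$ from the previous lemma, this condition becomes $P_k(d_{k-\tilde{H}_k})\,\vec{P}(\tilde{H}_k,\vec{d}_k) > 0$. Both factors are non-negative, since $P_k$ is a conditional probability by \eqref{eq_Pk} and $\vec{P}$ is a product of transition probabilities by \eqref{eq_Pk_vec_rew}; hence the product is strictly positive if and only if each factor is strictly positive.

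Next I would interpret the two positivity conditions separately. The condition $P_k(d_{k-\tilde{H}_k}) > 0$ is, by the very definition of $\tilde{\mathcal{D}}_k$, exactly the requirement $d_{k-\tilde{H}_k} \in \tilde{\mathcal{D}}_k$. The condition $\vec{P}(\tilde{H}_k,\vec{d}_k) > 0$, since $\vec{P}$ is the product $\prod_{i=-\tilde{H}_k}^{\overline{d}-2} \Phi(d_{k+i},d_{k+i+1})$ of non-negative one-step transition probabilities, holds if and only if every factor $\Phi(d_{k+i},d_{k+i+1})$ is non-zero, i.e. if and only if \eqref{eq_Phi} yields a non-zero probability for each consecutive pair in the sequence. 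Combining the two equivalences yields precisely the claimed description of $\mathcal{D}_k$ as the set of all $(d_{k+i})_{-\tilde{H}_k \leq i \leq \overline{d}-1}$ with $d_{k-\tilde{H}_k}\in\tilde{\mathcal{D}}_k$ for which \eqref{eq_Phi} is non-zero throughout.

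I do not expect a genuine obstacle here, as the argument is a direct unwinding of definitions together with the positivity-of-products observation; the only point requiring minor care is to note that the structural bounds on $\mathbf{D}$ in \eqref{eq_non-zero_probabilities} need not be imposed as a separate condition, since any consecutive pair with $\Phi(d_{k+i},d_{k+i+1}) > 0$ automatically satisfies $\underline{d} \leq d_{k+i+1} \leq \min(\overline{d}, d_{k+i}+1)$.
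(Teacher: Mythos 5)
Your proof is correct and follows essentially the same route as the paper: the paper's own (one-line) proof appeals to the fact that $\mathcal{I}_k$ contains no data depending on $(D_{k+i})_{i \geq -\tilde{H}_k}$, which is exactly the fact underlying the factorization $\mathbb{P}\big[\vec{D}_k=\vec{d}_k\big|\mathcal{I}_k\big] = P_k(d_{k-\tilde{H}_k})\vec{P}(\tilde{H}_k,\vec{d}_k)$ that you invoke, and your positivity-of-each-factor argument is the natural spelled-out version of that appeal. Your closing remark that \eqref{eq_non-zero_probabilities} need not be imposed separately is also correct.
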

\begin{proof}
	This follows directly from the fact that $\mathcal{I}_k$ does not contain data that depends on $(D_{k+i})_{i \geq -\tilde{H}_k}$.
\end{proof}

\begin{lemma}[Expected Values]
	Let $f(\vec{D}_k)$ be some function of $\vec{D}_k$ and let $\hat{\mathcal{D}}_k(\tilde{\mathcal{D}}_k, \delta) \subseteq \mathcal{D}_k(\tilde{\mathcal{D}}_k)$ be the set of all $\vec{d}_k \in \mathcal{D}_k(\tilde{\mathcal{D}}_k)$ with $d_{k-\tilde{H}_k} = \delta$. Then,
	\begin{align}
		&\mathbb{E}\big\{ f(\vec{D}_k) \big|\mathcal{I}_{k}\big\} =
		\!\!\!\sum\limits_{\delta \in \tilde{\mathcal{D}}_k}   P_k(\delta)\!\!\!
		\sum\limits_{\vec{d}_k \in \hat{\mathcal{D}}_k(\tilde{\mathcal{D}}_k, \delta)}\!\!\!\!\!\!\! \vec{P}(\tilde{h}_k, \vec{d}_k) f(\vec{d}_k).\label{eq_expected_values}
	\end{align}
\end{lemma}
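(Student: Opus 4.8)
The plan is to expand the conditional expectation over the (finite) support of $\vec{D}_k$ and then insert the factorization of the conditional probability established in the ``Rewritten Probability Distribution'' lemma. Since $f(\vec{D}_k)$ is a function of the discrete random vector $\vec{D}_k$, its conditional expectation given $\mathcal{I}_k$ is
\begin{align*}
\mathbb{E}\big\{f(\vec{D}_k)\big|\mathcal{I}_k\big\} = \sum_{\vec{d}_k} \mathbb{P}\big[\vec{D}_k = \vec{d}_k\big|\mathcal{I}_k\big]\, f(\vec{d}_k),
\end{align*}
where the sum ranges over all candidate sequences. By \eqref{eq_Dk_vec_set}, a sequence contributes a non-zero summand if and only if $\vec{d}_k \in \mathcal{D}_k$, so the sum may be restricted to $\vec{d}_k \in \mathcal{D}_k$ without changing its value.

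First I would substitute $\mathbb{P}[\vec{D}_k = \vec{d}_k | \mathcal{I}_k] = P_k(d_{k-\tilde{H}_k})\,\vec{P}(\tilde{H}_k, \vec{d}_k)$ from the ``Rewritten Probability Distribution'' lemma; here $\tilde{H}_k$ is treated as a fixed value $\tilde{h}_k$ because, by \eqref{eq_hk_tilde}, it is determined by $H_k$ (and, for $k \geq K_s$, by $S_k$) and is therefore known given $\mathcal{I}_k$. Next I would partition the index set $\mathcal{D}_k$ according to the value $\delta = d_{k-\tilde{H}_k}$ of the first component of the sequence. By the ``Possible Sequences'' lemma, the admissible values of this first component are exactly $\delta \in \tilde{\mathcal{D}}_k$, and the blocks $\hat{\mathcal{D}}_k(\tilde{\mathcal{D}}_k, \delta)$ (the sequences in $\mathcal{D}_k$ with $d_{k-\tilde{H}_k} = \delta$) are pairwise disjoint and cover $\mathcal{D}_k$, so this is a genuine partition.

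Rewriting the single sum over $\mathcal{D}_k$ as the double sum over $\delta \in \tilde{\mathcal{D}}_k$ and $\vec{d}_k \in \hat{\mathcal{D}}_k(\tilde{\mathcal{D}}_k, \delta)$, the factor $P_k(d_{k-\tilde{H}_k}) = P_k(\delta)$ is constant on each block and may be pulled out of the inner sum, which yields \eqref{eq_expected_values}. I do not expect any genuine obstacle here: the statement is a direct bookkeeping consequence of the two preceding lemmas, and the only points deserving care are $(i)$ restricting the support to $\mathcal{D}_k$ using \eqref{eq_Dk_vec_set} and $(ii)$ verifying that grouping by the first component $\delta$ gives a disjoint, exhaustive partition of $\mathcal{D}_k$ on whose blocks $P_k(\delta)$ is constant.
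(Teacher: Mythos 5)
Your proposal is correct and follows essentially the same route as the paper's own proof: expand the conditional expectation as a sum over $\mathcal{D}_k$, insert the factorization $\mathbb{P}\big[\vec{D}_k=\vec{d}_k\big|\mathcal{I}_k\big] = P_k(d_{k-\tilde{H}_k})\vec{P}(\tilde{H}_k,\vec{d}_k)$ from the preceding lemma, and regroup by the value of $d_{k-\tilde{H}_k}$. The only difference is that you spell out the partition-and-pull-out-the-constant step (and the restriction of the support via \eqref{eq_Dk_vec_set}), which the paper leaves implicit in the phrase ``which can be written as \eqref{eq_expected_values}''.
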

\begin{proof}The expected value of $f(\vec{D}_k)$ given $\mathcal{I}_k$ is given by $\mathbb{E}\big\{ f(\vec{D}_k) \big|\mathcal{I}_{k}\big\}
	= \sum_{\vec{d}_k \in \mathcal{D}_k(\tilde{\mathcal{D}}_k)} \mathbb{P}\big[\vec{D}_k=\vec{d}_k \big| \mathcal{I}_k\big] f(\vec{d}_k) = \sum_{\vec{d}_k \in \mathcal{D}_k(\tilde{\mathcal{D}}_k)}P_k(d_{k-\tilde{H}_k})\vec{P}(\tilde{H}_k, \vec{d}_k) f(\vec{d}_k)$ which can be written as \eqref{eq_expected_values}.
\end{proof}

\begin{lemma}[Relevant Information]
	Consider the finite set
	\begin{align}
		& \tilde{\mathcal{I}}_k = \begin{cases}
			\begin{matrix}
				\hspace{-2.45cm}\big\{S_k = s_k, \tilde{H}_k = \tilde{h}_k,\, D_{k-s_k} = d_{k-s_k},\\
				\hspace{2cm} D_{k+i} \in \Theta_{k,i}\, \forall\, i \in [-s_k+1, -\tilde{h}_k-1]\big\}
			\end{matrix}& k \geq K_{sd} 
			\\
			\big\{\tilde{H}_k = \tilde{h}_k,\,  D_{k+i} \in \Theta_{k,i}\, \forall\, i \in [\underline{d}-k, -\tilde{h}_k-1]\big\}
			& k < K_s
			\\
			\big\{S_k = s_k, \tilde{H}_k = \tilde{h}_k,\,  D_{k+i} \in \hat{\Theta}_{k,i}\, \forall\, i \in [\underline{d}-k, -\tilde{h}_k-1]\big\}
			& \text{otherwise}
		\end{cases}\nonumber
	\end{align}
	where $s_k$ and $\tilde{h}_k$ denote known realizations of $S_k$ and $\tilde{H}_k$  and
	\begin{align*}
		\Theta_{k,i} &= \left\{ d_{k+i} \in [\underline{d}, \overline{d}]  \,\big|\, x_{k+i+1} = Ax_{k+i} + B\tilde{u}_{k+i-d_{k+i}}^{(d_{k+i})} \right\}\\
		\hat{\Theta}_{k,i} &= \begin{cases}
			\Theta_{k,i} & -S_k < i \leq -H_k-1\\
			\left\{d\in [\underline{d},\overline{d}] \,\big|\, d \geq k+i
			\right\} & -H_k-1 < i \leq -S_k
			\\
			\left\{d\in\Theta_{k,i} \,\big|\, d \geq k+i
			\right\}& \text{otherwise.}
		\end{cases}
	\end{align*}
	This set is contained in $\mathcal{I}_k$ and contains all information that is relevant for evaluating \eqref{eq_Pk} in the sense that
	\begin{align}
		&P_k(\delta) = 
		\mathbb{P}\left[\left.D_{k-\tilde{H}_k} = \delta \right| \tilde{\mathcal{I}}_k\right] \quad \forall\, k\geq K_h. \label{eq_relevant_information}
	\end{align}
\end{lemma}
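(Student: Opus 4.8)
The plan is to prove the two assertions of the lemma in turn: that $\tilde{\mathcal{I}}_k \subseteq \mathcal{I}_k$, and that \eqref{eq_relevant_information} holds. The first is read off directly from the description of the transmitted data in Section~\ref{sec_feedback_loop}: each constituent of $\tilde{\mathcal{I}}_k$ — the realizations $S_k=s_k$, $\tilde{H}_k=\tilde{h}_k$, the directly received delay $D_{k-s_k}$ for $k\geq K_{sd}$, and the membership statements $D_{k+i}\in\Theta_{k,i}$ or $D_{k+i}\in\hat{\Theta}_{k,i}$ — is a deterministic function of data the controller already possesses at time $k$ (the samples $(x_{\tilde{k}})_{0\leq\tilde{k}\leq k-H_k}$, the received delays, the known outputs $\tilde{u}$, and the information $K_d>k-S_k$ available while $K_s\leq k<K_{sd}$). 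For the second assertion, writing $T=k-\tilde{H}_k$ for the index appearing in \eqref{eq_Pk}, it suffices to show that $D_T$ is conditionally independent of $\mathcal{I}_k$ given $\tilde{\mathcal{I}}_k$, i.e. that $\tilde{\mathcal{I}}_k$ is a sufficient statistic of $\mathcal{I}_k$ for $D_T$.

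The technical core I would establish first is that the state samples enter $\mathcal{I}_k$ only through single-step constraints on the delays. Since $x_0$ is known and $x_{\tilde{k}+1}=Ax_{\tilde{k}}+B\tilde{u}_{\tilde{k}-D_{\tilde{k}}}^{(D_{\tilde{k}})}$, a short induction shows that the event ``$(x_{\tilde{k}})_{0\leq\tilde{k}\leq k-H_k}$ observed'' coincides, as an event on the delay process, with $\bigcap_{\tilde{k}}\{D_{\tilde{k}}\in\Theta_{k,\tilde{k}-k}\}$: each observed transition determines $Bu_{\tilde{k}}$ and hence the admissible set $\Theta_{k,\tilde{k}-k}$, and conversely any delay sequence meeting all these constraints reproduces exactly the observed samples. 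For $\tilde{k}<\underline{d}$ one has $D_{\tilde{k}}>\tilde{k}$ and thus $u_{\tilde{k}}=0$ by \eqref{eq_uk}, so that transition imposes no constraint and $\Theta_{k,\tilde{k}-k}=[\underline{d},\overline{d}]$; this is why the product ranges only down to $\underline{d}$. The directly received delays (for $k\geq K_{sd}$) fix $D_{\tilde{k}}$ for $\tilde{k}\leq k-s_k$ exactly, while the information $K_d>k-s_k$ (for $K_s\leq k<K_{sd}$) translates into per-step lower bounds on $D_{\tilde{k}}$ for $\tilde{k}\leq k-s_k$ (no packet having been received). Thus every delay-relevant piece of $\mathcal{I}_k$ decomposes into single-step statements, and $\Theta_{k,i},\hat{\Theta}_{k,i}$ are precisely the intersections of the state-derived and ``no-packet'' constraints available at step $\tilde{k}=k+i$.

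The remaining step combines this decomposition with the Markov property \eqref{eq_Markov} and the independence of $\mathbf{D}$ from $(\mathbf{H},\mathbf{S})$. As already used in establishing $\mathbb{P}[\vec{D}_k=\vec{d}_k\mid\mathcal{I}_k]=P_k(d_{k-\tilde{H}_k})\vec{P}(\tilde{H}_k,\vec{d}_k)$, the set $\mathcal{I}_k$ carries no data depending on $(D_{k+i})_{i\geq-\tilde{H}_k}$, so all its delay content concerns $(D_{\tilde{k}})_{\tilde{k}\leq T-1}$. For $k\geq K_{sd}$, knowing $D_{k-s_k}$ exactly lets the Markov property screen off every delay before $k-s_k$, so only $D_{k-s_k}=d_{k-s_k}$ together with the intermediate constraints $D_{k+i}\in\Theta_{k,i}$ on $i\in[-s_k+1,-\tilde{h}_k-1]$ can influence $D_T$ — exactly what $\tilde{\mathcal{I}}_k$ retains (the range being empty, with $P_k$ a single transition, when $\tilde{H}_k=S_k-1$). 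For $k<K_s$ and $K_s\leq k<K_{sd}$ no delay is known exactly, so propagation must start from the initial distribution $\mu$, and retaining all per-step constraints down to $\underline{d}$ (via $\Theta_{k,i}$, resp. $\hat{\Theta}_{k,i}$) is sufficient. Since $(\mathbf{H},\mathbf{S})$ are independent of $\mathbf{D}$, conditioning additionally on the full channel-state histories in $\mathcal{I}_k$ leaves the law of $D_T$ unchanged beyond fixing $S_k=s_k$ and $\tilde{H}_k=\tilde{h}_k$, which $\tilde{\mathcal{I}}_k$ already keeps. Collecting the three cases yields \eqref{eq_relevant_information}.

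I expect the main obstacle to be the sufficiency argument of the last paragraph rather than the per-step decomposition: one must verify that in each regime nothing in $\mathcal{I}_k\setminus\tilde{\mathcal{I}}_k$ — the actual state values beyond the constraints they induce, the exact delays before $k-s_k$, and the $(\mathbf{H},\mathbf{S})$ histories — carries residual information about $D_T$. The bookkeeping is most delicate for $K_s\leq k<K_{sd}$, where the three branches of $\hat{\Theta}_{k,i}$ must be matched step-by-step to the availability of a state sample (the transition $\tilde{k}\to\tilde{k}+1$ is observed iff $\tilde{k}+1\leq k-H_k$) and of the ``no-packet'' constraint (available iff $\tilde{k}\leq k-S_k$), so that $\hat{\Theta}_{k,i}$ is the intersection of exactly those constraints that $\mathcal{I}_k$ actually supplies at $\tilde{k}=k+i$. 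A secondary point worth flagging explicitly is the independence of $\mathbf{D}$ from $(\mathbf{H},\mathbf{S})$, on which the conditioning on the channel-state data being uninformative rests.
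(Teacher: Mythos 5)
Your proposal follows essentially the same route as the paper's proof: identify the delay-relevant content of $\mathcal{I}_k$ as per-step constraints (the sets $\Theta_{k,i}$ coming from observed state transitions, the no-packet lower bounds from $K_d > k-S_k$, and the exactly known $D_{k-S_k}$ once $k \geq K_{sd}$), then invoke the Markov property \eqref{eq_Markov} to screen off all delay information prior to $k-S_k$ in the case $k \geq K_{sd}$, treating the three regimes of $\tilde{\mathcal{I}}_k$ separately. The one point where you go beyond the paper is in flagging explicitly that the argument needs $\mathbf{D}$ to be independent of $(\mathbf{H},\mathbf{S})$ so that the channel histories contained in $\mathcal{I}_k$ carry no residual information about $D_{k-\tilde{H}_k}$; the paper uses this implicitly (its statement that $\mathcal{I}_k$ ``does not contain any data that depends on'' the later delays concerns functional rather than statistical dependence, and the stated assumptions --- ``no further assumptions on $\mathbf{H}$ and $\mathbf{S}$'' --- do not literally rule out such dependence), so making it explicit is a genuine improvement in rigor rather than a deviation in approach.
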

\begin{proof}
	\eqref{eq_relevant_information} is shown by determining all information about the realization of $\mathbf{D}$ contained in $\mathcal{I}_k$ that is relevant for computing the probability distribution of $D_{k-\tilde{H}_k}$.
	
	For any $k \geq K_h$, $H_k$ and $(x_{k+i})_{-k \leq i \leq -H_k}$ are known. While $u_{k+i} = 0$ for $k+i < \underline{d}$ for any $D_{k+i}$, $u_{k+i} = \tilde{u}_{k+i-d_{k+i}}^{(d_{k+i})}$ 
	might not satisfy $x_{k+i+1} = Ax_{k+i} + Bu_{k+i}$ for all $d_{k+i} \in [\underline{d}, \overline{d}]$ for larger values of $i$. This yields
	\begin{align}
		&D_{k+i} \in \Theta_{k,i} \quad \forall\, i \in [\underline{d}-k, -H_k-1]\label{information_sensors}.
	\end{align}
	
	If $k < K_s$, all relevant information is given by $H_k$ and \eqref{information_sensors} where $\tilde{H}_k = H_k$ due to \eqref{eq_hk_tilde}. If $K_{sd} > k \geq K_{s}$, $S_k$ is known in addition to \eqref{information_sensors} and it is known that $D_{\tilde{k}} > \tilde{k}\ \forall\, \tilde{k} \leq k-S_k$. Since $D_{\tilde{k}} > \tilde{k}$ always holds for $\tilde{k}<\underline{d}$, all relevant information is given by $H_k$, $S_k$ and $D_{k+i} \in \hat{\Theta}_{k,i} \ \forall\, i \in [\underline{d}-k, -\tilde{H}_k-1]$.
	
	If $k \geq K_{sd}$, $D_{k-S_k}$ is known and $-S_k < -\tilde{H}_k$ due to \eqref{eq_hk_tilde}. Due to \eqref{eq_Markov}, all other information about $D_{k+i}$ with $i \leq -S_k$ is not relevant; all relevant information is given by $S_k$, $H_k$, $D_{k-S_k}$ and $D_{k+i} \in \Theta_{k,i}$ for all $i \in [-S_k+1, -H_k-1]$ where $[-S_k\! +\! 1, -H_k\! -\! 1] = [-S_k\! +\! 1, -\tilde{H}_k\! -\! 1]$ due to \eqref{eq_hk_tilde}.
\end{proof}

As shown in Appendix~\ref{Appendix_eq_Pk}, \eqref{eq_relevant_information} with $k \geq K_h$ can be computed via
\begin{subequations}
	\begin{align}
		&P_k(\delta) = \begin{cases}
			\Phi(d_{k-s_k}, \delta)
			& 
			k \geq K_{sd}\, \land\,
			\tilde{h}_k = s_k-1
			\\
			\begin{matrix}
				\!\!\!\!\!\!\!\!\!\!\!\!\!\!\!\!\!\! \tilde{P}_2(k-s_k,k-\tilde{h}_k,d_{k-s_k},\delta,\\
				\qquad\qquad \quad  (\Theta_{k,i})_{-s_k+1 \leq i \leq -\tilde{h}_k - 1})
			\end{matrix}
			& 
			k \geq K_{sd}\, \land\,
			\tilde{h}_k < s_k-1
			\\
			\sum\nolimits_{d_0 = \underline{d}}^{\overline{d}}
			\mu(d_0)
			\Phi_{k-\tilde{h}_k}(d_0,\delta)
			&
			K_{sd}> k \, \land\,
			k-\tilde{h}_k \leq \underline{d}
			\\
			\tilde{P}_4(\underline{d},k-\tilde{h}_k, \delta, (\hat{\Theta}_{k,i})_{\underline{d}-k \leq i \leq -\tilde{h}_k - 1})
			&
			K_{sd}> k \geq K_s\,
			\land \, k-\tilde{h}_k > \underline{d}
			\\
			\tilde{P}_4(\underline{d},k-\tilde{h}_k, \delta, (\Theta_{k,i})_{\underline{d}-k \leq i \leq -\tilde{h}_k - 1})
			&
			k < K_s\,
			\land \, k-\tilde{h}_k > \underline{d}
		\end{cases}\nonumber\\
		&\tilde{P}_1(\underline{i},\overline{i},\delta_{\underline{i}}, (\Theta_i)_{\underline{i}+1 \leq i \leq \overline{i}})
		=\sum\nolimits_{\delta_{\underline{i}+1} \in \Theta_{\underline{i}+1}}
		\!\! \hdots \sum\nolimits_{\delta_{\overline{i}}\, \in \Theta_{\overline{i}}} \prod\nolimits_{i = \underline{i}}^{\overline{i}-1} \Phi(\delta_{i}, \delta_{i+1})
		\label{eq_P1_tilde}\\
		&\tilde{P}_2(\underline{i},\overline{i},\delta_{\underline{i}}, \delta_{\overline{i}}, (\Theta_i)_{\underline{i}+1 \leq i \leq \overline{i} - 1})
		=
		\frac{\tilde{P}_1(\underline{i},\overline{i},\delta_{\underline{i}}, (\Theta_i)_{\underline{i}+1 \leq i \leq \overline{i}-1}\,\!^\frown \{\delta_{\overline{i}}\})}
		{\tilde{P}_1(\underline{i},\overline{i}-1,\delta_{\underline{i}}, (\Theta_i)_{\underline{i}+1 \leq i \leq \overline{i}-1})}\label{eq_P2_tilde}
			\\
		&\tilde{P}_3(\underline{i},\overline{i}, (\Theta_i)_{\underline{i} \leq i \leq \overline{i}})
		=
		\begin{cases}
			\sum\nolimits_{d_0 \in \Theta_{0}}
			\mu(d_0)
			& 0 = \underline{i} = \overline{i}
			\\
			\sum\nolimits_{d_0 \in \Theta_{0}}
			\mu(d_0)
			\tilde{P}_1(0,\overline{i},d_0, (\Theta_i)_{1 \leq i \leq \overline{i}})
			& 0 = \underline{i} < \overline{i}
			\\
			\sum\nolimits_{d_0 = \underline{d}}^{\overline{d}}
			\mu(d_0)
			\sum\nolimits_{d_{\underline{i}} \in \Theta_{\underline{i}}}
			\Phi_{\underline{i}}(d_0,d_{\underline{i}})
			& 0 < \underline{i} = \overline{i}
			\\
			\begin{matrix}
				\sum\nolimits_{d_0 = \underline{d}}^{\overline{d}}
				\mu(d_0)
				\sum\nolimits_{d_{\underline{i}} \in \Theta_{\underline{i}}}
				\Phi_{\underline{i}}(d_0,d_{\underline{i}})
				\\ \quad
				\tilde{P}_1(\underline{i},\overline{i},d_{\underline{i}}, (\Theta_i)_{\underline{i}+1 \leq i \leq \overline{i}})
			\end{matrix}
			& 0 < \underline{i} < \overline{i}
		\end{cases}\label{eq_P3_tilde}\\
		%	\end{align}\begin{align}
		&\tilde{P}_4(\underline{i},\overline{i}, \delta_{\overline{i}}, (\Theta_i)_{\underline{i} \leq i \leq \overline{i} - 1}) = 
		\frac{\tilde{P}_3(\underline{i},\overline{i}, (\Theta_i)_{\underline{i} \leq i \leq \overline{i}-1}\,\!^\frown \{\delta_{\overline{i}}\})}
		{\tilde{P}_3(\underline{i},\overline{i}-1, (\Theta_i)_{\underline{i} \leq i \leq \overline{i}-1})}.
		\label{eq_P4_tilde}
	\end{align}
\end{subequations}

\subsubsection{Extended State Vector}
For $k \geq K_h$, $x_{k-\tilde{H}_k}$ can be computed from $\mathcal{I}_k$ since $x_{k-H_k}$ is known and $(u_{k+i})_{-H_k\leq i \leq -S_k}$ can be determined via \eqref{eq_uk}. Therefore, the extended state vector
$\hat{x}_k = \big[\begin{matrix}
	x_{k-\tilde{H}_k}^T & \tilde{u}_{k-1}^T & \hdots & \tilde{u}_{k-\overline{d}-\overline{h}}^T 
\end{matrix}\big] \in \mathbb{R}^{\hat{n}}$ with $\hat{n} = n + \tilde{m}(\overline{d}+\overline{h})$ is known at time step $k\geq K_h$. Using $\hat{u}_k = \big[\begin{matrix}
	\hat{u}_k^{(N-1)^T} & \hdots & \hat{u}_k^{(\underline{d})^T}
\end{matrix}\big]^T\in \mathbb{R}^{\hat{m}}$
with $\hat{m} = m(N-\underline{d})$, $u_{k+i}$ resulting from \eqref{eq_uki_stochastic} can be written as
%\begin{subequations}
\begin{align}
	u_{k+i} &= \begin{cases}
		\tilde{u}_{k+i-D_{k+i}}^{(D_{k+i})}  & -\tilde{H}_k\leq i < D_{k+i} \leq \overline{d}\\
		\hat{u}_k^{(i)}  &  D_{k+i} \leq i \leq N-1
	\end{cases}
	\label{eq_uki_extended_state_1}\\
	&= \begin{cases}
		\bar{T}(i,D_{k+i})\hat{x}_k + \tilde{T}(i,D_{k+i})\hat{u}_k& -\tilde{H}_k \leq i \leq \overline{d}-1\\
		\hat{T}(i)\hat{u}_k & \overline{d} \leq i \leq N-1
	\end{cases}\nonumber
\end{align}
with $\hat{T}(i)= \begin{bmatrix}
	0_{m\times (N-1-i)m}& I_m& 0_{m\times \hat{m} - (N-i)m}
\end{bmatrix}$, $\tilde{T}(i,D_{k+i})=\hat{T}(i)$ if $D_{k+i} \leq i$ and $\tilde{T}(i,D_{k+i})=0_{m\times \hat{m}}$ otherwise and
\begin{align*}\bar{T}(i,D_{k+i})= \begin{cases}
		0_{m\times \hat{n}} & D_{k+i} \leq i\\
		 \begin{bmatrix}
	0_{m\times n + (D_{k+i}-i-1)\tilde{m}}^T\\
	\begin{bmatrix}
		0_{m \times (\overline{d}-D_{k+i})m} & I_m & 0_{m \times (D_{k+i}-\underline{d})m}
	\end{bmatrix}^T\\
	0_{m\times \hat{n} - (n + (D_{k+i}-i)\tilde{m})}^T
\end{bmatrix}^T & \text{otherwise.}
\end{cases}\end{align*}

Using $x_{k+i} = A^{i+\tilde{H}_k} x_{k-\tilde{H}_k} + \sum\nolimits_{j=-\tilde{H}_k}^{i-1}A^{i-1-j}Bu_{k+j}$ for $i \geq -\tilde{H}_k$ and \eqref{eq_uki_extended_state_1}, $x_{k+i}$ with $-\tilde{H}_k \leq i \leq N$ is written as
%\begin{subequations}
\begin{align}
	&x_{k+i} = \hat{A}(i,\tilde{H}_k,\vec{D}_k)\hat{x}_k 
	+ \hat{B}(i,\tilde{H}_k,\vec{D}_k)\hat{u}_k\label{eq_xki_extended_state_2}
\end{align}
with $\hat{B}(i,\tilde{H}_k,\vec{D}_k)= \sum\nolimits_{j=\overline{d}}^{i-1}A^{i-1-j}B\hat{T}(j) + \sum\nolimits_{j=-\tilde{H}_k}^{\min(i,\overline{d})-1}A^{i-1-j}B\tilde{T}(j,D_{k+j})$ and $\hat{A}(i,\tilde{H}_k,\vec{D}_k)= \begin{bmatrix}
	A^{i+\tilde{H}_k} & 0_{n\times \hat{n}-n}
\end{bmatrix} + \sum\nolimits_{j=-\tilde{H}_k}^{\min(i,\overline{d})-1}A^{i-1-j}B\bar{T}(j,D_{k+j})$
%\end{subequations}
and $u_{k+i}$ with $-\tilde{H}_k \leq i \leq N-1$ can be written as
\begin{align}
	&u_{k+i} = \hat{A}_u(i,\tilde{H}_k,\vec{D}_k)\hat{x}_k 
	+ \hat{B}_u(i,\tilde{H}_k,\vec{D}_k)\hat{u}_k \label{eq_uki_extended_state_2}
\end{align}
where $\hat{A}_u(i,\tilde{H}_k,\vec{D}_k) = \bar{T}(i,D_{k+i})$ and $\hat{B}_u(i,\tilde{H}_k,\vec{D}_k) = \tilde{T}(i,D_{k+i})$ for $i \leq \overline{d}-1$; for $i \geq \overline{d}$, $\hat{A}_u(i,\tilde{H}_k,\vec{D}_k) = 0_{m\times \hat{n}}$ and $\hat{B}_u(i,\tilde{H}_k,\vec{D}_k) = \hat{T}(i)$.
The stabilizing control law is given by
$\kappa_k^{(i)} = \sum\nolimits_{\vec{d}_k \in \mathcal{D}_k} \mathbb{P}\big[\vec{D}_k=\vec{d}_k \big| \mathcal{I}_k\big] \bar{\kappa}_{k}^{(i)}(\vec{d}_k)\ \forall i \geq N$ where, due to \eqref{kappa_k_bar} and \eqref{eq_uki_extended_state_1},
\begin{align}
	&\bar{\kappa}_{k}^{(i)}(\vec{d}_k) = -Lx_{k+i}
	+ \hat{L}_x(i,\tilde{H}_k,\vec{D}_k,\vec{d}_k)\hat{x}_k + \hat{L}_u(i,\tilde{H}_k,\vec{D}_k,\vec{d}_k)\hat{u}_k \label{eq_kappa_bar_ki_extended_state_1}
\end{align}
with $\hat{L}_x(i,\tilde{H}_k,\vec{D}_k,\vec{d}_k) 
= L \sum\nolimits_{j = -\hat{H}_k^{(i)}}^{\overline{d}-1}\!\! A^{i-1-j}B $ $\left(\bar{T}(j,D_{k+j}) -\bar{T}(j,d_{k+j})\right)$ and
$\hat{L}_u(i,\tilde{H}_k,\vec{D}_k,\vec{d}_k)
=L \sum\nolimits_{j = -\hat{H}_k^{(i)}}^{\overline{d}-1}\!\! A^{i-1-j}B\big(\tilde{T}(j,D_{k+j}) -\tilde{T}(j,d_{k+j})\big)$.

\subsubsection{Rewritten Constraints}\label{sec_implement_constr}
\begin{lemma}\label{le_iteration_terminal}
	
	Consider $x_{k+i+1} = Ax_{k+i} + Bu_{k+i}$, $\hat{y}_k = \begin{bmatrix}
		\hat{x}_k^T & \hat{u}_k^T
	\end{bmatrix}^T$ and constraints of the form
	\begin{align}
		& M_x x_{k+i} \leq n_x, \quad  M_u u_{k+i} \leq n_u, \label{eq_constr_terminal_step1}\\
		&\bar{M}_x(i \! + \! 1,\tilde{\mathcal{D}}_k)x_{k+i+1} \! + \! \bar{M}_y(i \! + \! 1,\tilde{H}_k,\vec{D}_k,\tilde{\mathcal{D}}_k)\hat{y}_k \!\leq\! \bar{n}(i \! + \! 1,\tilde{\mathcal{D}}_k)\nonumber
	\end{align}
	where $N \leq i < \hat{N}$. These constraints are satisfied for all $u_{k+i}$ of the form \eqref{eq_form_uki_feas} if and only if, for all $\vec{d}_k \in \mathcal{D}_k(\tilde{\mathcal{D}}_k)$,
	\begin{align}
		&\tilde{\bar{M}}_x(i,\tilde{\mathcal{D}}_k)x_{k+i} + \tilde{\bar{M}}_y(i,\tilde{H}_k,\vec{D}_k,\tilde{\mathcal{D}}_k,\vec{d}_k)\hat{y}_k \leq \tilde{\bar{n}}(i,\tilde{\mathcal{D}}_k)\label{eq_constraint_singe_dk_hat}\\
		&\tilde{\bar{n}}(i,\tilde{\mathcal{D}}_k) = \begin{bmatrix}
			\bar{n}^T(i+1,\tilde{\mathcal{D}}_k)&
			n_x^T&
			n_u^T
		\end{bmatrix}^T,\nonumber
		\\
		&\tilde{\bar{M}}_x(i,\tilde{\mathcal{D}}_k) = \begin{bmatrix}
			\bar{M}_x(i+1,\tilde{\mathcal{D}}_k)(A-BL)\\
			M_x\\
			-M_uL
		\end{bmatrix}, \nonumber
	\\
	  &\tilde{\bar{M}}_y(i,\tilde{H}_k,\vec{D}_k,\tilde{\mathcal{D}}_k,\vec{d}_k) =
		\begin{bmatrix}
			\bar{M}_x(i\!+\!1,\tilde{\mathcal{D}}_k)B\hat{L}(i,\tilde{H}_k,\vec{D}_k,\vec{d}_k)
			\!+\! \bar{M}_y(i\!+\!1,\tilde{H}_k,\vec{D}_k,\tilde{\mathcal{D}}_k)\\
			0_{a_x,n}\\
			M_u\hat{L}(i,\tilde{H}_k,\vec{D}_k,\vec{d}_k)
		\end{bmatrix}\!\!.
		\nonumber
	\end{align}
\end{lemma}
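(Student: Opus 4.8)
The plan is to treat this as a single backward step of a constraint recursion: the stage-$(i+1)$ inequality in \eqref{eq_constr_terminal_step1}, already expressed through $x_{k+i+1}$ and $\hat{y}_k$, is folded back one step by substituting the dynamics $x_{k+i+1}=Ax_{k+i}+Bu_{k+i}$ together with the parametric form \eqref{eq_form_uki_feas} of $u_{k+i}$, so that the resulting conditions involve only $x_{k+i}$ and $\hat{y}_k$. First I would use the extended-state representation \eqref{eq_kappa_bar_ki_extended_state_1} of $\bar{\kappa}_k^{(i)}(\vec{d}_k)$, collecting $\hat{L}_x$ and $\hat{L}_u$ into the single gain $\hat{L}(i,\tilde{H}_k,\vec{D}_k,\vec{d}_k)$ acting on $\hat{y}_k=[\hat{x}_k^T\ \hat{u}_k^T]^T$. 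Since the term $-Lx_{k+i}$ is common to every $\bar{\kappa}_k^{(i)}(\vec{d}_k)$ and $\sum_{\vec{d}_k}p_{k,i}(\vec{d}_k)=1$, the form \eqref{eq_form_uki_feas} collapses to $u_{k+i}=-Lx_{k+i}+\sum_{\vec{d}_k}p_{k,i}(\vec{d}_k)\hat{L}(i,\tilde{H}_k,\vec{D}_k,\vec{d}_k)\hat{y}_k$, and inserting this into the dynamics gives $x_{k+i+1}=(A-BL)x_{k+i}+B\sum_{\vec{d}_k}p_{k,i}(\vec{d}_k)\hat{L}(i,\tilde{H}_k,\vec{D}_k,\vec{d}_k)\hat{y}_k$.

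Next I would substitute these expressions into the three parts of \eqref{eq_constr_terminal_step1}. The constraint $M_xx_{k+i}\leq n_x$ does not depend on $u_{k+i}$ and reproduces the second block row of \eqref{eq_constraint_singe_dk_hat}. For $M_uu_{k+i}\leq n_u$, the identity $M_u\bar{\kappa}_k^{(i)}(\vec{d}_k)=-M_uLx_{k+i}+M_u\hat{L}(i,\tilde{H}_k,\vec{D}_k,\vec{d}_k)\hat{y}_k$ shows that the third block row of \eqref{eq_constraint_singe_dk_hat} is exactly $M_u\bar{\kappa}_k^{(i)}(\vec{d}_k)\leq n_u$. For the stage-$(i+1)$ inequality, substituting $x_{k+i+1}$ and using $\sum_{\vec{d}_k}p_{k,i}(\vec{d}_k)=1$ to distribute the $x_{k+i}$-term and the right-hand side rewrites it as a convex combination, with weights $p_{k,i}(\vec{d}_k)$ over $\vec{d}_k\in\mathcal{D}_k(\tilde{\mathcal{D}}_k)$, of the first block row of \eqref{eq_constraint_singe_dk_hat} evaluated for each $\vec{d}_k$, with common right-hand side $\bar{n}(i+1,\tilde{\mathcal{D}}_k)$. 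Hence each of the three parts of \eqref{eq_constr_terminal_step1} becomes a $p_{k,i}$-weighted average of the corresponding rows of \eqref{eq_constraint_singe_dk_hat}.

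With this reformulation the equivalence is a convexity/extreme-point argument. For the direction that \eqref{eq_constraint_singe_dk_hat} (for all $\vec{d}_k$) implies \eqref{eq_constr_terminal_step1}, I multiply each per-$\vec{d}_k$ inequality by $p_{k,i}(\vec{d}_k)\geq 0$ and sum; using $\sum_{\vec{d}_k}p_{k,i}(\vec{d}_k)=1$ from \eqref{eq_9} the weighted average of the right-hand sides equals the original right-hand side, so \eqref{eq_constr_terminal_step1} holds for the associated $u_{k+i}$, and this is valid for every admissible weight choice. For the converse direction I would test against the degenerate distributions $p_{k,i}$ that place all mass on a single $\vec{d}_k\in\mathcal{D}_k(\tilde{\mathcal{D}}_k)$; these satisfy \eqref{eq_9} and therefore give an admissible $u_{k+i}$ of the form \eqref{eq_form_uki_feas}, for which $u_{k+i}=\bar{\kappa}_k^{(i)}(\vec{d}_k)$ and \eqref{eq_constr_terminal_step1} reduces precisely to \eqref{eq_constraint_singe_dk_hat} for that $\vec{d}_k$. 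Requiring \eqref{eq_constr_terminal_step1} for all $u_{k+i}$ of the form thus forces \eqref{eq_constraint_singe_dk_hat} for every $\vec{d}_k$.

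I expect the main obstacle to be organizational rather than conceptual: correctly tracking the block structure of $\tilde{\bar{M}}_x$, $\tilde{\bar{M}}_y$ and $\tilde{\bar{n}}$ through the substitution, and in particular verifying that the common $-Lx_{k+i}$ term and the identity $\sum_{\vec{d}_k}p_{k,i}(\vec{d}_k)=1$ let the $x_{k+i}$-coefficient $\bar{M}_x(i+1,\tilde{\mathcal{D}}_k)(A-BL)$ and the right-hand side $\bar{n}(i+1,\tilde{\mathcal{D}}_k)$ factor out uniformly across $\vec{d}_k$. The only genuine point to confirm is that the degenerate weights used in the converse direction are admissible under \eqref{eq_9}, which is immediate.
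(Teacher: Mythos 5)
Your proposal is correct and follows essentially the same route as the paper: the paper likewise reduces to the extreme points $u_{k+i}=\bar{\kappa}_k^{(i)}(\vec{d}_k)$ by invoking linearity of the constraints in $u_{k+i}$ (the convexity/degenerate-weights argument you spell out), and then substitutes $u_{k+i}=-Lx_{k+i}+\hat{L}(i,\tilde{H}_k,\vec{D}_k,\vec{d}_k)\hat{y}_k$ and $x_{k+i+1}=(A-BL)x_{k+i}+B\hat{L}(i,\tilde{H}_k,\vec{D}_k,\vec{d}_k)\hat{y}_k$ into \eqref{eq_constr_terminal_step1} to obtain \eqref{eq_constraint_singe_dk_hat}. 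Your write-up merely makes explicit the two directions that the paper compresses into a single sentence.
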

\begin{proof}
	Because the constraints \eqref{eq_constr_terminal_step1} are linear in $u_{k+i}$, they are satisfied for all $u_{k+i}$ of the form \eqref{eq_form_uki_feas} if and only if they are satisfied for $u_{k+i} =\bar{\kappa}_{k}^{(i)}(\vec{d}_k)$ for all $\vec{d}_k \in \mathcal{D}_k(\tilde{\mathcal{D}}_k)$. 
	
	Due to \eqref{eq_kappa_bar_ki_extended_state_1}, the following holds for for $u_{k+i} =\bar{\kappa}_{k}^{(i)}(\vec{d}_k)$:
	\begin{subequations}
		\begin{align*}
			u_{k+i} &= -Lx_{k+i}
			+ \hat{L}(i,\tilde{H}_k,\vec{D}_k,\vec{d}_k)\hat{y}_k\\
			x_{k+i+1} &= (A-BL)x_{k+i} + B\hat{L}(i,\tilde{H}_k,\vec{D}_k,\vec{d}_k)\hat{y}_k\\
			\hat{L}(i,\tilde{H}_k,\vec{D}_k,\vec{d}_k) &= \begin{bmatrix}
				\hat{L}_x(i,\tilde{H}_k,\vec{D}_k,\vec{d}_k)
				& 
				\hat{L}_u(i,\tilde{H}_k,\vec{D}_k,\vec{d}_k)
			\end{bmatrix}.
		\end{align*}
	\end{subequations}
	This can be inserted in \eqref{eq_constr_terminal_step1} which yields \eqref{eq_constraint_singe_dk_hat}.
\end{proof}
\begin{lemma}[Terminal Constraints]
	Let $\vec{d}_k^{\, (j)}$ be the $j$th element of the set $\mathcal{D}_k(\tilde{\mathcal{D}}_k)$, let $\overline{j}_D$ be the number of elements in this set and consider the constraints 
	\begin{align}
		\bar{M}_x(N,\tilde{\mathcal{D}}_k)x_{k+N}\! +\! \bar{M}_y(N,\tilde{H}_k,\vec{D}_k,\tilde{\mathcal{D}}_k)\hat{y}_k \!\leq\! \bar{n}(N,\tilde{\mathcal{D}}_k)\label{eq_temp_2309u5}
	\end{align}
	where $\bar{M}_x$, $\bar{M}_y$ and $\bar{n}$ are obtained by $(a)$ starting with 
	\begin{align}\label{eq_terminal_constr_N_hat}
		&\bar{M}_x(\hat{N},\tilde{\mathcal{D}}_k)x_{k+\hat{N}} \! +\! \bar{M}_y(\hat{N},\tilde{H}_k,\vec{D}_k,\tilde{\mathcal{D}}_k)\hat{y}_k \!\leq\! \bar{n}(\hat{N},\tilde{\mathcal{D}}_k)
	\end{align}
	where $\bar{n}(\hat{N},\tilde{\mathcal{D}}_k)= n_N$, $\bar{M}_x(\hat{N},\tilde{\mathcal{D}}_k) = M_{N}$, and $\bar{M}_y(\hat{N},\tilde{H}_k,\vec{D}_k,\tilde{\mathcal{D}}_k) = 0$
	followed by (b) applying 
	\begin{align}
		&\bar{M}_x(i,\tilde{\mathcal{D}}_k)x_{k+i} + \bar{M}_y(i,\tilde{H}_k,\vec{D}_k,\tilde{\mathcal{D}}_k)\hat{y}_k \leq \bar{n}(i,\tilde{\mathcal{D}}_k)\nonumber\\
		&\bar{M}_y(i,\tilde{H}_k,\vec{D}_k,\tilde{\mathcal{D}}_k) = \begin{bmatrix}
			\tilde{\bar{M}}_y(i,\tilde{H}_k,\vec{D}_k,\tilde{\mathcal{D}}_k,\vec{d}^{\, (1)}_k)\\[-1mm]
			\vdots\\
			\tilde{\bar{M}}_y(i,\tilde{H}_k,\vec{D}_k,\tilde{\mathcal{D}}_k,\vec{d}^{\, (\overline{j}_D)}_k)
		\end{bmatrix},
		\label{eq_terminal_iteration}\\
		&\bar{M}_x(i,\tilde{\mathcal{D}}_k) = 1_{\overline{j}_D} \otimes \tilde{\bar{M}}_x
		,\quad \bar{n}(i,\tilde{\mathcal{D}}_k) = 1_{\overline{j}_D} \otimes \tilde{\bar{n}}(i,\tilde{\mathcal{D}}_k)\nonumber
	\end{align}
	until $i=N$. Then, \eqref{terminal_stochastic} with $\mathcal{X}_N(\mathcal{I}_k)$ from Definition~\ref{def_terminal} holds for all $\vec{d}_k \in \mathcal{D}_k$  if and only if the constraints \eqref{eq_temp_2309u5} are satisfied.
\end{lemma}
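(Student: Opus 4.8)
The plan is to establish the claimed equivalence by a backward recursion in the index $i$ running from $\hat{N}$ down to $N$, invoking Lemma~\ref{le_iteration_terminal} once per step. By Definition~\ref{def_terminal}, for a fixed realization $\vec{D}_k=\vec{d}_k$ the statement $x_{k+N}\in\mathcal{X}_N(\mathcal{I}_k)$ is nothing but the requirement that \eqref{eq_7} hold whenever the controls $u_{k+i}$, $i\in[N,\hat{N}-1]$, are written in the convex-combination form \eqref{eq_form_uki_feas}. Hence it suffices to show that \eqref{eq_7} (with controls of the form \eqref{eq_form_uki_feas}) is equivalent to \eqref{eq_temp_2309u5} for each fixed realization $\vec{D}_k=\vec{d}_k\in\mathcal{D}_k$; the quantifier ``for all $\vec{d}_k\in\mathcal{D}_k$'' in the statement then merely collects this requirement over all realizations, since both $x_{k+N}$ via \eqref{eq_xki_extended_state_2} and the matrix $\bar{M}_y(N,\tilde{H}_k,\vec{D}_k,\tilde{\mathcal{D}}_k)$ depend on the realized $\vec{D}_k$ through $\hat{A},\hat{B},\hat{L}$.

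First I would set up the induction invariant. For each $i\in[N,\hat{N}]$ let $C_i$ denote the accumulated constraint $\bar{M}_x(i,\tilde{\mathcal{D}}_k)x_{k+i}+\bar{M}_y(i,\tilde{H}_k,\vec{D}_k,\tilde{\mathcal{D}}_k)\hat{y}_k\leq\bar{n}(i,\tilde{\mathcal{D}}_k)$ produced by the recursion, and claim that $C_i$ holds if and only if $M_xx_{k+l}\leq n_x$, $M_uu_{k+l}\leq n_u$ for all $l\in[i,\hat{N}-1]$ together with $M_Nx_{k+\hat{N}}\leq n_N$ are satisfied for all choices of $(u_{k+l})_{i\leq l\leq\hat{N}-1}$ of the form \eqref{eq_form_uki_feas}. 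The base case $i=\hat{N}$ is immediate: by the initialisation \eqref{eq_terminal_constr_N_hat} one has $\bar{M}_x(\hat{N})=M_N$, $\bar{M}_y(\hat{N})=0$ and $\bar{n}(\hat{N})=n_N$, so $C_{\hat{N}}$ is exactly $M_Nx_{k+\hat{N}}\leq n_N$, the only tail constraint at the terminal index, with no $\hat{y}_k$ term yet present.

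For the inductive step I would apply Lemma~\ref{le_iteration_terminal} at index $i$ with the ``future'' constraint taken to be $C_{i+1}$ and with the current-step constraints $M_xx_{k+i}\leq n_x$, $M_uu_{k+i}\leq n_u$ adjoined. Since these are linear in $u_{k+i}$ and \eqref{eq_form_uki_feas} writes $u_{k+i}$ as a convex combination of the extreme points $\bar{\kappa}_k^{(i)}(\vec{d}_k)$, satisfaction for all admissible controls is equivalent to satisfaction at each extreme point; substituting $u_{k+i}=\bar{\kappa}_k^{(i)}(\vec{d}_k)$ via \eqref{eq_kappa_bar_ki_extended_state_1} gives $x_{k+i+1}=(A-BL)x_{k+i}+B\hat{L}(i,\tilde{H}_k,\vec{D}_k,\vec{d}_k)\hat{y}_k$ and eliminates $u_{k+i}$, turning the three groups of constraints into \eqref{eq_constraint_singe_dk_hat} for every $\vec{d}_k\in\mathcal{D}_k(\tilde{\mathcal{D}}_k)$. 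Stacking \eqref{eq_constraint_singe_dk_hat} over $j=1,\dots,\overline{j}_D$ reproduces \eqref{eq_terminal_iteration}: the blocks $\tilde{\bar{M}}_x$ and $\tilde{\bar{n}}$ are independent of $\vec{d}_k$ and hence appear as $1_{\overline{j}_D}\otimes\tilde{\bar{M}}_x$ and $1_{\overline{j}_D}\otimes\tilde{\bar{n}}$, whereas $\tilde{\bar{M}}_y$ carries the $\vec{d}_k^{(j)}$-dependence and is stacked row-block by row-block. This shows $C_i$ holds iff the tail constraints for $l\in[i,\hat{N}-1]$ (now including index $i$) plus the terminal constraint hold for all controls of the form \eqref{eq_form_uki_feas}, completing the induction; running the recursion down to $i=N$ yields \eqref{eq_temp_2309u5} as $C_N$, which is therefore equivalent to \eqref{eq_7}, i.e.\ to $x_{k+N}\in\mathcal{X}_N(\mathcal{I}_k)$ by Definition~\ref{def_terminal}.

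The step I expect to be the main obstacle is the careful bookkeeping of the two distinct roles played by the delay sequence: the realized $\vec{D}_k$, a fixed parameter entering $\hat{L}$, $\bar{T}$ and hence $\tilde{\bar{M}}_y$ throughout the recursion, versus the dummy $\vec{d}_k$ indexing the convex combination \eqref{eq_form_uki_feas}, over which the stacking in \eqref{eq_terminal_iteration} is carried out. One must verify that the initialisation \eqref{eq_terminal_constr_N_hat} matches \eqref{eq_7.2} exactly (no spurious $\hat{y}_k$ term), that the multiplicative growth of the row count under the Kronecker stacking correctly enumerates all sequences of extreme-point choices across the $\hat{N}-N$ backward steps, and that each running constraint at an index $l\in[N,\hat{N}-1]$ is incorporated exactly once. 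Once this alignment is pinned down, every individual recursion step is a direct invocation of Lemma~\ref{le_iteration_terminal}.
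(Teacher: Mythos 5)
Your proposal is correct and follows essentially the same route as the paper's proof: a backward recursion from $i=\hat{N}$ down to $i=N$, initialised with \eqref{eq_terminal_constr_N_hat} and invoking Lemma~\ref{le_iteration_terminal} at each step to replace the current-step constraints and the accumulated future constraint by the stacked form \eqref{eq_terminal_iteration}. The only difference is presentational: you make the induction invariant explicit, whereas the paper states the same procedure algorithmically, so nothing of substance separates the two arguments.
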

\begin{proof}
	According to Definition~\ref{def_terminal}, \eqref{eq_7} has to hold where $M_{N}x_{k+\hat{N}} \leq n_N$ can be written as \eqref{eq_terminal_constr_N_hat}. A sufficient and necessary condition for \eqref{eq_7} can be determined using Lemma~\ref{le_iteration_terminal} as follows: $(a)$ start with $i=\hat{N}-1$, $(b)$ write down \eqref{eq_constr_terminal_step1} where $\bar{M}_x(i \! + \! 1,\tilde{\mathcal{D}}_k)$, $\bar{M}_y(i \! + \! 1,\tilde{H}_k,\vec{D}_k,\tilde{\mathcal{D}}_k)$ and $\bar{n}(i \! + \! 1,\tilde{\mathcal{D}}_k)$ are given by \eqref{eq_terminal_constr_N_hat} for $i=\hat{N}-1$, $(c)$ use the fact that \eqref{eq_constraint_singe_dk_hat} holds for all $\vec{d}_k \in \mathcal{D}_k(\tilde{\mathcal{D}}_k)$ if and only if \eqref{eq_terminal_iteration} holds to obtain $\bar{M}_x(i,\tilde{\mathcal{D}}_k)$, $\bar{M}_y(i,\tilde{H}_k,\vec{D}_k,\tilde{\mathcal{D}}_k)$ and $\bar{n}(i,\tilde{\mathcal{D}}_k)$ and $(d)$ decrease $i$ by one and go back to $(b)$ until $i = N$. This results in \eqref{eq_temp_2309u5}.
\end{proof}
Using \eqref{eq_xki_extended_state_2},  \eqref{eq_temp_2309u5} can also be written as
\begin{align}
	\hat{M}_{x,N}(\tilde{h}_k,\vec{D}_k,\tilde{\mathcal{D}}_k)&\hat{x}_k \!+\! \hat{M}_{u,N}(\tilde{h}_k,\vec{D}_k,\tilde{\mathcal{D}}_k)\hat{u}_k \!\leq\! \hat{n}_{N}(\tilde{\mathcal{D}}_k) \label{eq_terminal_constr}
\end{align}
with
$\hat{M}_{x,N}(\tilde{h}_k,\vec{D}_k,\tilde{\mathcal{D}}_k) = \bar{M}_x(N,\tilde{\mathcal{D}}_k)\hat{A}(N,\tilde{h}_k,\vec{D}_k)
+ \bar{M}_y(N,\tilde{h}_k,\vec{D}_k,\tilde{\mathcal{D}}_k)\begin{bmatrix}
	I_{\hat{n}} & 0_{\hat{m} \times\hat{n}^T}
\end{bmatrix}^T$,\\
$\hat{M}_{u,N}(\tilde{h}_k,\vec{D}_k,\tilde{\mathcal{D}}_k) = \bar{M}_x(N,\tilde{\mathcal{D}}_k)\hat{B}(N,\tilde{h}_k,\vec{D}_k)
+ \bar{M}_y(N,\tilde{h}_k,\vec{D}_k,\tilde{\mathcal{D}}_k)\begin{bmatrix}
	0_{\hat{n} \times \hat{m}}^T  &  I_{\hat{m}}
\end{bmatrix}^T$ and $\hat{n}_N(\tilde{\mathcal{D}}_k) = \bar{n}(N,\tilde{\mathcal{D}}_k)$.

\begin{lemma}[Terminal- and State Constraints]
	Consider
	\begin{subequations}
		\begin{align}
			&\tilde{\vec{M}}_{x}(\tilde{H}_k,\tilde{\mathcal{D}}_k)\hat{x}_k + \tilde{\vec{M}}_{u}(\tilde{H}_k,\tilde{\mathcal{D}}_k)\hat{u}_k \leq \tilde{\vec{n}}(\tilde{\mathcal{D}}_k) \label{state_constr_rewritten}
			\\
			&\tilde{\vec{M}}_{x}(\tilde{H}_k,\tilde{\mathcal{D}}_k) = \!\begin{bmatrix}
				\hat{\vec{M}}_{x}(\tilde{H}_k,\vec{D}_k,\vec{d}_k^{\,(1)})\\[-1mm]
				\vdots\\
				\hat{\vec{M}}_{x}(\tilde{H}_k,\vec{D}_k,\vec{d}_k^{\, (\overline{j}_D)})
			\end{bmatrix}\!\! , \  \tilde{\vec{M}}_{u}(\tilde{H}_k,\vec{D}_k,\tilde{\mathcal{D}}_k) = 
			\!\begin{bmatrix}
				\hat{\vec{M}}_{u}(\tilde{H}_k,\vec{D}_k,\vec{d}_k^{\,(1)})\\[-1mm]
				\vdots\\
				\hat{\vec{M}}_{u}(\tilde{H}_k,\vec{D}_k,\vec{d}_k^{\, (\overline{j}_D)})
			\end{bmatrix}
			\nonumber\\
			&  \tilde{\vec{n}}(\tilde{\mathcal{D}}_k) = 1_{\overline{j}_D} \otimes \hat{\vec{n}}(\tilde{\mathcal{D}}_k)\nonumber
			\nonumber
		\end{align}
		where 
		\begin{align} \label{state_constr_temp}
			&\hat{\vec{M}}_{x}(\tilde{H}_k,\vec{D}_k,\tilde{\mathcal{D}}_k) = \begin{bmatrix}
				\hat{M}_{x,N}(\tilde{H}_k,\vec{D}_k,\tilde{\mathcal{D}}_k)\\
				M_x\hat{A}(\underline{d}+1,\tilde{H}_k,\vec{D}_k)\\[-1mm]
				\vdots\\
				M_x\hat{A}(N-1,\tilde{H}_k,\vec{D}_k)
			\end{bmatrix}
			\\
			&\hat{\vec{M}}_{u}(\tilde{H}_k,\vec{D}_k,\tilde{\mathcal{D}}_k) = \begin{bmatrix}
				\hat{M}_{u,N}(\tilde{H}_k,\vec{D}_k,\tilde{\mathcal{D}}_k)\\
				M_x\hat{B}(\underline{d}+1,\tilde{H}_k,\vec{D}_k)\\[-1mm]
				\vdots\\
				M_x\hat{B}(N-1,\tilde{H}_k,\vec{D}_k)
			\end{bmatrix}, \
			\hat{\vec{n}}(\tilde{\mathcal{D}}_k) = \begin{bmatrix}
				\hat{n}_{N}^T(\tilde{\mathcal{D}}_k)\\
				n_x\\
				\vdots\\
				n_x
			\end{bmatrix}.\nonumber
		\end{align}
	\end{subequations}
	The terminal constraints \eqref{terminal_stochastic} with the $\mathcal{X}_N(\mathcal{I}_k)$ from Definition~\ref{def_terminal} and the state constraints $M_{x}x_{k+i} \leq n_x\ \forall\, i\in [0,N-1]$ in \eqref{constr_stochastic} are satisfied for all $\vec{d}_k \in \mathcal{D}_k$ if and only if \eqref{state_constr_rewritten} holds.
\end{lemma}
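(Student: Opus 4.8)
The plan is to obtain \eqref{state_constr_rewritten} as the concatenation of two groups of constraints expressed in the extended variables $\hat{x}_k$ and $\hat{u}_k$: the terminal constraint \eqref{terminal_stochastic}, which the preceding Terminal Constraints lemma has already put into the form \eqref{eq_temp_2309u5} and hence \eqref{eq_terminal_constr}, and the state constraints $M_{x}x_{k+i}\leq n_x$ for $i\in[0,N-1]$, which I would rewrite by substituting the affine representation \eqref{eq_xki_extended_state_2}. Since $x_{k+i}$ depends on the unknown realization of $\vec{D}_k$, both groups must hold simultaneously for every admissible realization $\vec{d}_k\in\mathcal{D}_k(\tilde{\mathcal{D}}_k)$; this is exactly what the outer stacking over $j\in[1,\overline{j}_D]$ in the definitions of $\tilde{\vec{M}}_x$, $\tilde{\vec{M}}_u$ and $\tilde{\vec{n}}$ encodes, with the realization $\vec{D}_k$ in $\hat{A}(\cdot)$, $\hat{B}(\cdot)$ and in the terminal block instantiated to the candidate sequence $\vec{d}_k^{\,(j)}$.

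First I would treat the terminal part. By the preceding lemma, \eqref{terminal_stochastic} with $\mathcal{X}_N(\mathcal{I}_k)$ from Definition~\ref{def_terminal} holds for all $\vec{d}_k\in\mathcal{D}_k$ if and only if \eqref{eq_temp_2309u5} holds, and inserting \eqref{eq_xki_extended_state_2} turns this into \eqref{eq_terminal_constr}; this supplies the top block $\hat{M}_{x,N}$, $\hat{M}_{u,N}$, $\hat{n}_N$ in \eqref{state_constr_temp}. Next I would treat the state constraints: substituting \eqref{eq_xki_extended_state_2} into $M_{x}x_{k+i}\leq n_x$ gives $M_x\hat{A}(i,\tilde{H}_k,\vec{D}_k)\hat{x}_k + M_x\hat{B}(i,\tilde{H}_k,\vec{D}_k)\hat{u}_k\leq n_x$, which are precisely the rows appended below the terminal block in \eqref{state_constr_temp}.

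The one point requiring care — and what I expect to be the main obstacle — is the index range: the appended rows run only over $i\in[\underline{d}+1,N-1]$, not over all of $[0,N-1]$. I would justify dropping $i\in[0,\underline{d}]$ by observing that for such $i$ every $u_{k+j}$ with $j<i\leq\underline{d}$ satisfies $D_{k+j}\geq\underline{d}>j$, so it is a past controller output already contained in $\hat{x}_k$; consequently $\hat{B}(i,\tilde{H}_k,\vec{D}_k)=0$ and $M_{x}x_{k+i}\leq n_x$ reduces to a constraint on the fixed $\hat{x}_k$ alone that does not restrict $\hat{u}_k$. These constraints are guaranteed by recursive feasibility, since they concern states determined entirely by controller outputs committed before time $k$, so including or omitting them does not change the admissible set of $\hat{u}_k$; this is what preserves the equivalence while allowing the stack to begin at $i=\underline{d}+1$.

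Finally I would assemble the equivalence. For a fixed realization $\vec{d}_k^{\,(j)}$, the corresponding terminal inequality together with the state-constraint rows for $i\in[\underline{d}+1,N-1]$ forms one block $\hat{\vec{M}}_x(\tilde{H}_k,\vec{D}_k,\vec{d}_k^{\,(j)})\hat{x}_k + \hat{\vec{M}}_u(\tilde{H}_k,\vec{D}_k,\vec{d}_k^{\,(j)})\hat{u}_k\leq\hat{\vec{n}}(\tilde{\mathcal{D}}_k)$ as in \eqref{state_constr_temp}, and requiring this for every $j\in[1,\overline{j}_D]$ stacks into \eqref{state_constr_rewritten} with $\tilde{\vec{n}}(\tilde{\mathcal{D}}_k)=1_{\overline{j}_D}\otimes\hat{\vec{n}}(\tilde{\mathcal{D}}_k)$. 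Both implications are then immediate: since $\vec{D}_k$ ranges over exactly $\mathcal{D}_k(\tilde{\mathcal{D}}_k)$, \eqref{state_constr_rewritten} collects one linear inequality per admissible sequence, so it holds if and only if the terminal constraint and the state constraints are satisfied for all $\vec{d}_k\in\mathcal{D}_k$. The remaining work is the routine bookkeeping of matching each appended row to its row in \eqref{state_constr_temp} and checking the block dimensions of the stacked matrices.
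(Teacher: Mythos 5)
Your proposal is correct and follows essentially the same route as the paper's proof: rewrite the terminal constraint via \eqref{eq_terminal_constr}, substitute \eqref{eq_xki_extended_state_2} into the state constraints, and stack the resulting inequalities over all $\vec{d}_k \in \mathcal{D}_k(\tilde{\mathcal{D}}_k)$. Your explicit justification for starting the state-constraint rows at $i=\underline{d}+1$ (that $\hat{B}(i,\tilde{H}_k,\vec{D}_k)=0$ for $i\leq\underline{d}$, so those constraints do not depend on $\hat{u}_k$) is exactly the point the paper disposes of in its opening sentence, handled by you with slightly more care.
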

\begin{proof}
	Since $(u_{k+i})_{i < \underline{d}}$ and $(x_{k+i})_{i \leq \underline{d}}$ do not depend on the optimization variables, only $M_{x}x_{k+i} \leq n_x \, \forall i\in [0,\underline{d}]$ and \eqref{eq_terminal_constr} have to be considered. This can be written as
	%\begin{subequations}
	\begin{align}\label{eq_all_constr_Dk_vec}
		&\hat{\vec{M}}_{x}(\tilde{H}_k,\vec{D}_k,\tilde{\mathcal{D}}_k)\hat{x}_k + \hat{\vec{M}}_{u}(\tilde{H}_k,\vec{D}_k,\tilde{\mathcal{D}}_k)\hat{u}_k \leq \hat{\vec{n}}(\tilde{\mathcal{D}}_k)
	\end{align}
	%\end{subequations}
	with \eqref{state_constr_temp} using \eqref{eq_xki_extended_state_2}. This has to be satisfied for all possible $\vec{D}_k$, i.e. all $\vec{D}_k\in \mathcal{D}_k(\tilde{\mathcal{D}}_k)$, which can be written as \eqref{state_constr_rewritten}.
\end{proof}

\begin{lemma}[All Constraints]\label{le_all_constraints}
	The terminal constraints \eqref{terminal_stochastic} with $\mathcal{X}_N(\mathcal{I}_k)$ from Definition~\ref{def_terminal}, all constraints on the state vector and all constraints on the actuating variables in \eqref{constr_stochastic} that depend on $\hat{u}_k$ can be written as 
	\begin{align}
		&\vec{M}_{x}(\tilde{H}_k,\tilde{\mathcal{D}}_k)\hat{x}_k + \vec{M}_{u}(\tilde{H}_k,\tilde{\mathcal{D}}_k)\hat{u}_k \leq \vec{n}(\tilde{\mathcal{D}}_k)\label{eq_constraints_redundant}
		\\
		&\vec{M}_{x}(\tilde{H}_k,\tilde{\mathcal{D}}_k) = \begin{bmatrix}
			\tilde{\vec{M}}_{x}(\tilde{H}_k,\tilde{\mathcal{D}}_k)\\
			0_{(N-\underline{d})a_u \times \hat{n}}
		\end{bmatrix}
		, \
		\vec{M}_{u}(\tilde{H}_k,\tilde{\mathcal{D}}_k) = \begin{bmatrix}
			\tilde{\vec{M}}_{u}(\tilde{H}_k,\tilde{\mathcal{D}}_k)\\
			\big(I_{N-\underline{d}} \otimes M_u\big)
		\end{bmatrix},
		\nonumber
	\end{align}
	\begin{align*}
		&\vec{n}(\tilde{\mathcal{D}}_k) = \big[\begin{matrix}
			\tilde{\vec{n}}(\tilde{\mathcal{D}}_k)^T&
			\big(1_{N-\underline{d}} \otimes n_u\big)^T
		\end{matrix}\big]^T.\nonumber
	\end{align*}
\end{lemma}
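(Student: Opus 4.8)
The plan is to assemble the claimed inequality \eqref{eq_constraints_redundant} from two already-available pieces. The preceding lemma (Terminal- and State Constraints) shows that the terminal constraint \eqref{terminal_stochastic} with $\mathcal{X}_N(\mathcal{I}_k)$ from Definition~\ref{def_terminal} together with the state constraints $M_x x_{k+i} \leq n_x$, $i \in [0,N-1]$, is equivalent to \eqref{state_constr_rewritten}, which already provides $\tilde{\vec{M}}_x(\tilde{H}_k,\tilde{\mathcal{D}}_k)$, $\tilde{\vec{M}}_u(\tilde{H}_k,\tilde{\mathcal{D}}_k)$ and $\tilde{\vec{n}}(\tilde{\mathcal{D}}_k)$. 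It therefore only remains to append the input constraints from \eqref{constr_stochastic} that involve the optimization variables, and to verify that the combined system takes the stated block form.

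First I would isolate the input constraints $M_u u_{k+i} \leq n_u$, $i \in [0,N-1]$, and determine exactly which of them depend on $\hat{u}_k$. Using the representation \eqref{eq_uki_extended_state_2}, the variable $u_{k+i}$ depends on $\hat{u}_k$ precisely when $\hat{B}_u(i,\tilde{H}_k,\vec{D}_k) \neq 0$. By the definitions of $\tilde{T}$ and $\hat{T}$ in the Extended State Vector section, this occurs for every $i \geq \overline{d}$, where $u_{k+i} = \hat{T}(i)\hat{u}_k = \hat{u}_k^{(i)}$, and for $\underline{d} \leq i \leq \overline{d}-1$ precisely when $D_{k+i} \leq i$, in which case $\tilde{T}(i,D_{k+i}) = \hat{T}(i)$ while $\bar{T}(i,D_{k+i}) = 0$, so again $u_{k+i} = \hat{u}_k^{(i)}$. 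Thus, whenever an input constraint depends on $\hat{u}_k$, the actuating variable coincides with the single optimization block $\hat{u}_k^{(i)}$, and the constraint collapses to $M_u \hat{u}_k^{(i)} \leq n_u$.

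Next I would note that $D_{k+i} \leq i$ is a possible realization for every $i \geq \underline{d}$ (since $D_{k+i} \geq \underline{d}$), whereas $D_{k+i} > i$ is forced whenever $i < \underline{d}$, so that $u_{k+i}$ is independent of $\hat{u}_k$ for $i < \underline{d}$. Consequently the distinct input constraints depending on $\hat{u}_k$ are exactly $M_u \hat{u}_k^{(i)} \leq n_u$ for $i \in [\underline{d}, N-1]$. Stacking these $N-\underline{d}$ block inequalities in the order induced by $\hat{u}_k = \big[\hat{u}_k^{(N-1)^T}\ \hdots\ \hat{u}_k^{(\underline{d})^T}\big]^T$ gives $\big(I_{N-\underline{d}} \otimes M_u\big)\hat{u}_k \leq 1_{N-\underline{d}} \otimes n_u$. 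Appending this block beneath \eqref{state_constr_rewritten}, and padding the state coefficient with the zero block $0_{(N-\underline{d})a_u \times \hat{n}}$ since these rows do not involve $\hat{x}_k$, yields \eqref{eq_constraints_redundant} with $\vec{M}_x$, $\vec{M}_u$ and $\vec{n}$ exactly as stated.

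The step I expect to require the most care — though it is ultimately bookkeeping rather than a genuine difficulty — is the middle one: confirming that an input constraint, whenever it involves $\hat{u}_k$, reduces to a constraint on a single optimization variable rather than a nontrivial affine combination of $\hat{x}_k$ and $\hat{u}_k$. This rests on checking that the two cases in which $\hat{B}_u(i,\cdot) = \hat{T}(i)$ (namely $i \geq \overline{d}$, and $\underline{d} \leq i \leq \overline{d}-1$ with $D_{k+i} \leq i$) coincide exactly with the cases in which the accompanying $\bar{T}$-term vanishes, so that no $\hat{x}_k$-contribution survives and the decoupling into $M_u \hat{u}_k^{(i)} \leq n_u$ is exact.
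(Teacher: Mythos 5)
Your proposal is correct and follows essentially the same route as the paper: the paper's proof likewise combines the preceding lemma (which already packages the terminal and state constraints as \eqref{state_constr_rewritten}) with the observation that the input constraints in \eqref{constr_stochastic} depending on $\hat{u}_k$ are exactly $M_u\hat{u}_k^{(i)} \leq n_u$ for $i\in[\underline{d},N-1]$, and then stacks the two blocks. The only difference is that you explicitly justify, via the case analysis on $\hat{T}$, $\tilde{T}$ and $\bar{T}$ in \eqref{eq_uki_extended_state_2}, why each such constraint collapses to a single block $M_u\hat{u}_k^{(i)}\leq n_u$ with no $\hat{x}_k$ contribution -- a step the paper simply asserts.
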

\begin{proof}
	All actuating variable constraints in \eqref{eq_optimization_problem} that depend on $\hat{u}_k$ are given by $M_u\hat{u}_k^{(i)} \leq n_u\, \forall i\in [\underline{d},N-1]$. These constraints and \eqref{state_constr_rewritten} can be written as \eqref{eq_constraints_redundant}.
\end{proof}

\subsubsection{ Rewritten Cost Function}
The cost function in \eqref{eq_optimization_problem} can be replaced with
$\hat{J}_k = \mathbb{E}\Big\{\sum\nolimits_{i=\underline{d}+1}^{\infty}x_{k+i}^TQx_{k+i} + \sum\nolimits_{i=\underline{d}}^{\infty}u_{k+i}^TRu_{k+i} \Big|\mathcal{I}_k\Big\}$
since $(u_{k+i})_{i < \underline{d}}$ and $(x_{k+i})_{i \leq \underline{d}}$ do not depend on $\hat{u}_k$. Since $u_{k+i} = \kappa_{k}^{(i)} = -Lx_{k+i}$ for all $i\geq \hat{N}$, this can be written as
\begin{align}
	&\hat{J}_k = \mathbb{E}\big\{ \hat{J}^{1}_k + \hat{J}^{2}_k + \hat{J}^{3}_k + \hat{J}^{4}_k \big|\mathcal{I}_k\big\} \label{eq_Jk_hat}
\end{align}
with $\hat{J}^{1}_k = \sum\nolimits_{i=\underline{d}}^{N-1}u_{k+i}^TRu_{k+i}$, $\hat{J}^{2}_k = \sum\nolimits_{i=\underline{d}+1}^{N}x_{k+i}^TQx_{k+i}$, $\hat{J}^{3}_k = \sum\nolimits_{i=N}^{\hat{N}-1}u_{k+i}^TRu_{k+i}$ and\\ $\hat{J}^{4}_k = \sum\nolimits_{i=N+1}^{\hat{N}-1}x_{k+i}^TQx_{k+i} + x_{k+\hat{N}}^TPx_{k+\hat{N}}$.

\textit{Rewriting $\mathit{\hat{J}^{1}_k}$ and $\mathit{\hat{J}^{2}_k}$:} \label{sec_rewriting_J1_J2}
Due to \eqref{eq_uki_extended_state_2} and due to \eqref{eq_xki_extended_state_2}, 
\begin{align}
	&\hat{J}^{1}_k = 
	\hat{u}_k^T\hat{R}_1(\tilde{H}_k,\vec{D}_k)\hat{u}_k + 2\hat{u}_k^T\hat{H}_1(\tilde{H}_k,\vec{D}_k)\hat{x}_k
	+ \text{terms constant w.r.t. $\hat{u}_k$}\label{Jk1_rewritten}
\end{align}
with $\hat{R}_1(\tilde{H}_k,\vec{D}_k) = \sum\nolimits_{i=\underline{d}}^{N-1} \hat{B}_u(i,\tilde{H}_k,\vec{D}_k)^T R \hat{B}_u(i,\tilde{H}_k,\vec{D}_k)$ and\\
$\hat{H}_1(\tilde{H}_k,\vec{D}_k) = \sum\nolimits_{i=\underline{d}}^{N-1} \hat{B}_u(i,\tilde{H}_k,\vec{D}_k)^T R\hat{A}_u(i,\tilde{H}_k,\vec{D}_k)$
and
\begin{align}
	&\hat{J}^{2}_k = 
	\hat{u}_k^T\hat{R}_2(\tilde{H}_k,\vec{D}_k)\hat{u}_k + 2\hat{u}_k^T\hat{H}_2(\tilde{H}_k,\vec{D}_k)\hat{x}_k
	+ \text{terms constant w.r.t. $\hat{u}_k$}\label{Jk2_rewritten}
\end{align}
with $\hat{R}_2(\tilde{H}_k,\vec{D}_k) = \sum\nolimits_{i=\underline{d}+1}^{N} \hat{B}(i,\tilde{H}_k,\vec{D}_k)^T Q \hat{B}(i,\tilde{H}_k,\vec{D}_k)\hat{u}_k$ and\\
$\hat{H}_2(\tilde{H}_k,\vec{D}_k) = \sum\nolimits_{i=\underline{d}+1}^{N} \hat{B}(i,\tilde{H}_k,\vec{D}_k)^T Q \hat{A}(i,\tilde{H}_k,\vec{D}_k)$.\\

\textit{Rewriting $\mathit{u_{k+i}}$ and $\mathit{x_{k+i}}$ for $\mathit{i \geq N}$:}
$\mathbb{E}\left\{\left. u_{k+i} \right|\mathcal{I}_{k}\right\}=
\sum\nolimits_{\delta \in \mathcal{D}_k(\tilde{\mathcal{D}}_k)}  P_k(\delta)\\ \big(\vec{A}_u(i,\delta,\tilde{H}_k,\tilde{\mathcal{D}}_k)\hat{x}_k 
+ \vec{B}_u(i,\delta,\tilde{H}_k,\tilde{\mathcal{D}}_k)\hat{u}_k\big)$ for $-\tilde{H}_k \leq i \leq N-1$ due to \eqref{eq_uki_extended_state_2} and \eqref{eq_expected_values} with
$\vec{A}_u(i,\delta,\tilde{H}_k,\tilde{\mathcal{D}}_k) = \sum\nolimits_{\vec{d}_k \in \hat{\mathcal{D}}_k(\tilde{\mathcal{D}}_k, \delta)}  \vec{P}(\tilde{H}_k, \vec{d}_k) \hat{A}_u(i,\tilde{H}_k,\vec{d}_k)$ and\\
$\vec{B}_u(i,\delta,\tilde{H}_k,\tilde{\mathcal{D}}_k) = \sum\nolimits_{\vec{d}_k \in \hat{\mathcal{D}}_k(\tilde{\mathcal{D}}_k, \delta)}  \vec{P}(\tilde{H}_k, \vec{d}_k) \hat{B}_u(i,\tilde{H}_k,\vec{d}_k)$. Therefore, \eqref{eq_kappa_ki_proposed} can be written as
%\begin{subequations}
\begin{align}
	&\kappa_{k}^{(i)} = -Lx_{k+i} + \!\!\!\! \sum\limits_{\delta \in \mathcal{D}_k(\tilde{\mathcal{D}}_k)} \!\!\!\!  P_k(\delta) 
	\big(\vec{L}_x(i,\delta,\tilde{H}_k,\tilde{\mathcal{D}}_k,\vec{D}_k)\hat{x}_k 
	+ \vec{L}_u(i,\delta,\tilde{H}_k,\tilde{\mathcal{D}}_k,\vec{D}_k)\hat{u}_k\big)\label{eq_kappa_ki_rewritten_1}
\end{align}
using $\sum_{\delta \in \mathcal{D}_k(\tilde{\mathcal{D}}_k)}  P_k(\delta) = 1$ with $\vec{L}_x(i,\delta,\tilde{H}_k,\tilde{\mathcal{D}}_k,\vec{D}_k) = 
L\sum\nolimits_{j = -\hat{H}_k^{(i)}}^{\overline{d}-1} A^{i-1-j}B\\
\big(\hat{A}_u(j,\tilde{H}_k,\vec{D}_k)-\vec{A}_u(j,\delta,\tilde{H}_k,\tilde{\mathcal{D}}_k)\big)$ and $\vec{L}_u(i,\delta,\tilde{H}_k,\tilde{\mathcal{D}}_k,\vec{D}_k) = 
L\sum\nolimits_{j = -\hat{H}_k^{(i)}}^{\overline{d}-1}\\ A^{i-1-j}B
\big(\hat{B}_u(j,\tilde{H}_k,\vec{D}_k)
- \vec{B}_u(j,\delta,\tilde{H}_k,\tilde{\mathcal{D}}_k)\big)$.
%\end{subequations}

For $i\geq N$, $x_{k+i}$ and $u_{k+i}$ can be written in the form
\begin{align}
	x_{k+i} &= (A-BL)^{i-N}x_{k+N} + \Delta x_{k+i}\label{x_ki_rewritten_i_geq_N}
\end{align}
and $u_{k+i} = -Lx_{k+i} + \Delta u_{k+i}$ with $\Delta x_{k+N} = 0$ so
$x_{k+i+1} = Ax_{k+i}+Bu_{k+i} = (A-BL)^{i+1-N}x_{k+N} + \Delta x_{k+i+1}$ with $\Delta x_{k+i+1} = (A-BL)\Delta x_{k+i} + B\Delta u_{k+i}$. Therefore, due to \eqref{eq_kappa_ki_rewritten_1}, $\Delta x_{k+i} = \sum\nolimits_{j=N}^{i-1}(A-BL)^{i-1-j}B\Delta u_{k+j}$ and
$\Delta u_{k+i} = \sum\nolimits_{\delta \in \mathcal{D}_k(\tilde{\mathcal{D}}_k)} P_k(\delta) \big(\vec{L}_x(i,\delta,\tilde{H}_k,\tilde{\mathcal{D}}_k,\vec{D}_k)\hat{x}_k 
+ \vec{L}_u(i,\delta,\tilde{H}_k,\tilde{\mathcal{D}}_k,\vec{D}_k)\hat{u}_k\big)$. Therefore, \eqref{x_ki_rewritten_i_geq_N} can be written as
\begin{align}
	&x_{k+i} = \label{eq_xki_i_geq_N}
	\sum\limits_{\delta \in \mathcal{D}_k(\tilde{\mathcal{D}}_k)}  P_k(\delta)\big(\hat{\hat{A}}(i,\delta,\tilde{H}_k,\tilde{\mathcal{D}}_k,\vec{D}_k)\hat{x}_k + \hat{\hat{B}}(i,\delta,\tilde{H}_k,\tilde{\mathcal{D}}_k,\vec{D}_k)\hat{u}_k\big)
\end{align}
with
$\hat{\hat{A}}(i,\delta,\tilde{H}_k,\tilde{\mathcal{D}}_k,\vec{D}_k)
= (A-BL)^{i-N}\hat{A}(N,\tilde{H}_k,\vec{D}_k) 
+ \sum\nolimits_{j=N}^{i-1}(A-BL)^{i-1-j}B\vec{L}_x(j,\delta,\tilde{H}_k,\tilde{\mathcal{D}}_k,\vec{D}_k)$ and
$\hat{\hat{B}}(i,\delta,\tilde{H}_k,\tilde{\mathcal{D}}_k,\vec{D}_k)
= (A-BL)^{i-N}\hat{B}(N,\tilde{H}_k,\vec{D}_k)
+ \sum\nolimits_{j=N}^{i-1}(A-BL)^{i-1-j}B \vec{L}_u(j,\delta,\tilde{H}_k,\tilde{\mathcal{D}}_k,\vec{D}_k)$ using \eqref{eq_xki_extended_state_2}.

Inserting \eqref{eq_xki_i_geq_N} in \eqref{eq_kappa_ki_rewritten_1}, $u_{k+i} = \kappa_{k}^{(i)}$ can be written as
%\begin{subequations}
\begin{align}
	&u_{k+i} = \label{eq_uki_i_geq_N}
	\!\!\! \sum\limits_{\delta \in \mathcal{D}_k(\tilde{\mathcal{D}}_k)}  \!\!\! P_k(\delta)\big(\hat{\hat{A}}_u(i,\delta,\tilde{H}_k,\tilde{\mathcal{D}}_k,\vec{D}_k)\hat{x}_k 
	+ 
	\hat{\hat{B}}_u(i,\delta,\tilde{H}_k,\tilde{\mathcal{D}}_k,\vec{D}_k)\hat{u}_k\big),
	\ \ i \geq N
\end{align}
with $\hat{\hat{A}}_u(i,\delta,\tilde{H}_k,\tilde{\mathcal{D}}_k,\vec{D}_k)
= -L\hat{\hat{A}}(i,\delta,\tilde{H}_k,\tilde{\mathcal{D}}_k,\vec{D}_k) +\vec{L}_x(i,\delta,\tilde{H}_k,\tilde{\mathcal{D}}_k,\vec{D}_k)$ and 
$\hat{\hat{B}}_u(i,\delta,\tilde{H}_k,\tilde{\mathcal{D}}_k,\vec{D}_k)
= -L\hat{\hat{B}}(i,\delta,\tilde{H}_k,\tilde{\mathcal{D}}_k,\vec{D}_k)  +\vec{L}_u(i,\delta,\tilde{H}_k,\tilde{\mathcal{D}}_k,\vec{D}_k)$.

%\end{subequations}

\textit{Rewriting $\mathit{\hat{J}^{3}_k}$ and $\mathit{\hat{J}^{4}_k}$:}
Due to \eqref{eq_uki_i_geq_N} and due to \eqref{eq_xki_i_geq_N} %\begin{subequations}
\begin{align}
	&\hat{J}^{3}_k = \label{Jk3_rewritten}
	\sum\nolimits_{\delta_1 \in \mathcal{D}_k(\tilde{\mathcal{D}}_k)} \sum\nolimits_{\delta_2 \in \mathcal{D}_k(\tilde{\mathcal{D}}_k)} P_k(\delta_1)P_k(\delta_2)
	\big(\hat{u}_k^T\hat{R}_3(\delta_1,\delta_2,\tilde{H}_k,\tilde{\mathcal{D}}_k,\vec{D}_k)\hat{u}_k
	\\[-0.5mm]
	&\ \ + 2\hat{u}_k^T\hat{H}_3(\delta_1,\delta_2,\tilde{H}_k,\tilde{\mathcal{D}}_k,\vec{D}_k)\hat{x}_k\big)+ \text{terms constant w.r.t. $\hat{u}_k$ with}\nonumber
\end{align}
$\hat{R}_3(\delta_1,\delta_2,\tilde{H}_k,\tilde{\mathcal{D}}_k,\vec{D}_k) =
\sum\nolimits_{i=N}^{\hat{N}-1} \hat{\hat{B}}_u(i,\delta_1,\tilde{H}_k,\tilde{\mathcal{D}}_k,\vec{D}_k)^TR
\hat{\hat{B}}_u(i,\delta_2,\tilde{H}_k,\tilde{\mathcal{D}}_k,\vec{D}_k)$, 
$\hat{H}_3(\delta_1,\delta_2,\tilde{H}_k,\tilde{\mathcal{D}}_k,\vec{D}_k) =
\sum\nolimits_{i=N}^{\hat{N}-1} \hat{\hat{B}}_u(i,\delta_1,\tilde{H}_k,\tilde{\mathcal{D}}_k,\vec{D}_k)^TR
\hat{\hat{A}}_u(i,\delta_2,\tilde{H}_k,\tilde{\mathcal{D}}_k,\vec{D}_k)$;
\begin{align}
	&\hat{J}^{4}_k = \label{Jk4_rewritten}
	\sum\nolimits_{\delta_1 \in \mathcal{D}_k(\tilde{\mathcal{D}}_k)} \sum\nolimits_{\delta_2 \in \mathcal{D}_k(\tilde{\mathcal{D}}_k)} P_k(\delta_1)P_k(\delta_2)
	\big(\hat{u}_k^T\hat{R}_4(\delta_1,\delta_2,\tilde{H}_k,\tilde{\mathcal{D}}_k,\vec{D}_k)\hat{u}_k
	\\[-0.5mm]
	&\ \ + 2\hat{u}_k^T\hat{H}_4(\delta_1,\delta_2,\tilde{H}_k,\tilde{\mathcal{D}}_k,\vec{D}_k)\hat{x}_k\big)+ \text{terms constant w.r.t. $\hat{u}_k$ with}\nonumber
\end{align}
$\hat{R}_4(\delta_1,\delta_2,\tilde{H}_k,\tilde{\mathcal{D}}_k,\vec{D}_k) =
\sum\nolimits_{i=N+1}^{\hat{N}-1} \hat{\hat{B}}(i,\delta_1,\tilde{H}_k,\tilde{\mathcal{D}}_k,\vec{D}_k)^TQ
\hat{\hat{B}}(i,\delta_2,\tilde{H}_k,\tilde{\mathcal{D}}_k,\vec{D}_k)  +\\ \hat{\hat{B}}(\hat{N},\delta_1,\tilde{H}_k,\tilde{\mathcal{D}}_k,\vec{D}_k)^TP
\hat{\hat{B}}(\hat{N},\delta_2,\tilde{H}_k,\tilde{\mathcal{D}}_k,\vec{D}_k)$ and $\hat{H}_4(\delta_1,\delta_2,\tilde{H}_k,\tilde{\mathcal{D}}_k,\vec{D}_k) = \\
\sum\nolimits_{i=N+1}^{\hat{N}-1} \hat{\hat{B}}(i,\delta_1,\tilde{H}_k,\tilde{\mathcal{D}}_k,\vec{D}_k)^TQ
\hat{\hat{A}}(i,\delta_2,\tilde{H}_k,\tilde{\mathcal{D}}_k,\vec{D}_k)  + \hat{\hat{B}}(\hat{N},\delta_1,\tilde{H}_k,\tilde{\mathcal{D}}_k,\vec{D}_k)^TP
\hat{\hat{A}}(\hat{N},\delta_2,\tilde{H}_k,\tilde{\mathcal{D}}_k,\vec{D}_k)$.

\begin{lemma}[Rewritten Cost]\label{le_rewritten_cost}
	The cost function in \eqref{eq_cost_stochastic} can be written as terms constant w.r.t. $\hat{u}_k\ \ +$
	\begin{align}
		&\sum\nolimits_{\delta_1 \in \tilde{\mathcal{D}}_k} \sum\nolimits_{\delta_2 \in \tilde{\mathcal{D}}_k}\sum\nolimits_{\delta_3 \in \tilde{\mathcal{D}}_k} P_k(\delta_1)P_k(\delta_2)P_k(\delta_3)\cdot \label{eq_form_J_hat}\\
		&\big(\hat{u}_k^T\hat{R}(\delta_1,\delta_2,\delta_3,\tilde{H}_k,\tilde{\mathcal{D}}_k)\hat{u}_k + 2\hat{u}_k^T\hat{H}(\delta_1,\delta_2,\delta_3,\tilde{H}_k,\tilde{\mathcal{D}}_k)\hat{x}_k\big)\nonumber
	\end{align}
	with $\hat{R}(\delta_1,\delta_2,\delta_3,\tilde{H}_k,\tilde{\mathcal{D}}_k) = 
	\sum\nolimits_{\vec{d}_k \in \hat{\mathcal{D}}_k(\tilde{\mathcal{D}}_k, \delta_3)} \vec{P}(\tilde{H}_k, \vec{d}_k)\big(\hat{R}_1(\vec{d}_k)+
	\hat{R}_2(\tilde{H}_k,\vec{d}_k) + \hat{R}_3(\delta_1,\delta_2,\tilde{H}_k,\tilde{\mathcal{D}}_k,\vec{d}_k)
	+ \hat{R}_4(\delta_1,\delta_2,\tilde{H}_k,\tilde{\mathcal{D}}_k,\vec{d}_k)\big)$ and
	$\hat{H}(\delta_1,\delta_2,\delta_3,\tilde{H}_k,\tilde{\mathcal{D}}_k) = \\
	\sum\nolimits_{\vec{d}_k \in \hat{\mathcal{D}}_k(\tilde{\mathcal{D}}_k, \delta_3)} \vec{P}(\tilde{H}_k, \vec{d}_k)\big(\hat{H}_1(\vec{d}_k)+
	\hat{H}_2(\tilde{H}_k,\vec{d}_k) + \hat{H}_3(\delta_1,\delta_2,\tilde{H}_k,\tilde{\mathcal{D}}_k,\vec{d}_k)
	+ \hat{H}_4(\delta_1,\delta_2,\tilde{H}_k,\tilde{\mathcal{D}}_k,\vec{d}_k)\big)$.
\end{lemma}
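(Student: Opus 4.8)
The plan is to start from the decomposition \eqref{eq_Jk_hat} of the cost, namely $\hat{J}_k = \mathbb{E}\{\hat{J}_k^1 + \hat{J}_k^2 + \hat{J}_k^3 + \hat{J}_k^4 \,|\, \mathcal{I}_k\}$, and to bring each of the four conditional expectations into a common triple-sum form before adding them. By linearity of the conditional expectation it suffices to treat the four summands separately and then recombine; since $\hat{J}_k$ already coincides with \eqref{eq_cost_stochastic} up to terms constant in $\hat{u}_k$, all such constant terms may be carried along silently throughout, so only the quadratic and cross contributions need to be tracked.

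For $\hat{J}_k^1$ and $\hat{J}_k^2$, the rewritten forms \eqref{Jk1_rewritten} and \eqref{Jk2_rewritten} depend on the random sequence only through $\vec{D}_k$. I would therefore apply the expected-value identity \eqref{eq_expected_values} directly to the matrix-valued coefficients $\hat{R}_1,\hat{H}_1,\hat{R}_2,\hat{H}_2$ in the role of $f(\vec{D}_k)$, which produces a single outer sum over a value $\delta$ that I rename $\delta_3$, together with the inner weighted sum $\sum_{\vec{d}_k \in \hat{\mathcal{D}}_k(\tilde{\mathcal{D}}_k,\delta_3)} \vec{P}(\tilde{H}_k,\vec{d}_k)(\cdot)$ in which the realized sequence $\vec{d}_k$ is substituted for $\vec{D}_k$. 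To match the target structure \eqref{eq_form_J_hat} I would then multiply by the two trivial factors $\sum_{\delta_1 \in \tilde{\mathcal{D}}_k} P_k(\delta_1) = 1$ and $\sum_{\delta_2 \in \tilde{\mathcal{D}}_k} P_k(\delta_2) = 1$, so that $\hat{J}_k^1$ and $\hat{J}_k^2$ acquire the dummy indices $\delta_1,\delta_2$ while their coefficients remain independent of them.

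For $\hat{J}_k^3$ and $\hat{J}_k^4$, the expansions \eqref{Jk3_rewritten} and \eqref{Jk4_rewritten} already carry the indices $\delta_1,\delta_2$ with the deterministic weights $P_k(\delta_1)P_k(\delta_2)$ pulled out front; these weights are constants given $\mathcal{I}_k$, so I would move the conditional expectation inside the double sum and apply \eqref{eq_expected_values} to the remaining $\vec{D}_k$-dependent coefficients $\hat{R}_3,\hat{H}_3,\hat{R}_4,\hat{H}_4$. This introduces exactly one further outer index, again labelled $\delta_3$, and the same inner sum $\sum_{\vec{d}_k \in \hat{\mathcal{D}}_k(\tilde{\mathcal{D}}_k,\delta_3)} \vec{P}(\tilde{H}_k,\vec{d}_k)(\cdot)$ with $\vec{d}_k$ replacing $\vec{D}_k$. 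At this point all four summands share the identical outer weight $P_k(\delta_1)P_k(\delta_2)P_k(\delta_3)$ and the identical inner sum over $\vec{d}_k$, so collecting them factors the four inner integrands additively and yields precisely the coefficients $\hat{R}(\delta_1,\delta_2,\delta_3,\tilde{H}_k,\tilde{\mathcal{D}}_k)$ and $\hat{H}(\delta_1,\delta_2,\delta_3,\tilde{H}_k,\tilde{\mathcal{D}}_k)$ as stated.

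The step I expect to be the main obstacle is the careful bookkeeping of the three summation indices, since they have different origins that must not be conflated: $\delta_1$ and $\delta_2$ arise from the internal prediction averaging already present in $\Delta u_{k+i}$ (the two factors $\hat{\hat{B}}_u(\cdot,\delta_1,\cdot)$ and $\hat{\hat{B}}_u(\cdot,\delta_2,\cdot)$ or $\hat{\hat{A}}_u(\cdot,\delta_2,\cdot)$ inside the quadratic products), whereas $\delta_3$ is generated only when the genuine conditional expectation over the realized delay sequence $\vec{D}_k$ is evaluated via \eqref{eq_expected_values}. The key observation making the collection valid is that within each product in $\hat{J}_k^3$ and $\hat{J}_k^4$ the two factors contain the \emph{same} realization $\vec{D}_k$, so a single application of \eqref{eq_expected_values} produces only one index $\delta_3$ and a single inner sum over $\vec{d}_k$ rather than two; consequently the same weight $\vec{P}(\tilde{H}_k,\vec{d}_k)$ and the same argument $\vec{d}_k$ appear in all four integrands, which is exactly what permits the additive grouping $\hat{R}_1+\hat{R}_2+\hat{R}_3+\hat{R}_4$ (and likewise for $\hat{H}$) inside the inner sum.
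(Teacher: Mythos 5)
Your proposal is correct and takes essentially the same route as the paper's proof: insert the rewritten forms \eqref{Jk1_rewritten}, \eqref{Jk2_rewritten}, \eqref{Jk3_rewritten} and \eqref{Jk4_rewritten} into \eqref{eq_Jk_hat} and evaluate the conditional expectation via \eqref{eq_expected_values}. The details you add --- padding $\hat{J}^{1}_k$ and $\hat{J}^{2}_k$ with the trivial factors $\sum_{\delta_1 \in \tilde{\mathcal{D}}_k} P_k(\delta_1)=1$ and $\sum_{\delta_2 \in \tilde{\mathcal{D}}_k} P_k(\delta_2)=1$, and observing that both factors in each product of $\hat{J}^{3}_k$ and $\hat{J}^{4}_k$ share the same realization $\vec{D}_k$ so that a single application of \eqref{eq_expected_values} generates only the one index $\delta_3$ --- are exactly the bookkeeping the paper's one-sentence proof leaves implicit.
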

\begin{proof}
	Using \eqref{eq_Jk_hat}, the cost function in \eqref{eq_cost_stochastic} can be written as $\hat{J}_k +$ terms constant w.r.t. $\hat{u}_k$ where $\hat{J}_k$ can be written in the form \eqref{eq_form_J_hat} by inserting \eqref{Jk1_rewritten}, \eqref{Jk2_rewritten}, \eqref{Jk3_rewritten} and \eqref{Jk4_rewritten} in \eqref{eq_Jk_hat} and rewriting the expected value via \eqref{eq_expected_values}.
\end{proof}

\subsubsection{Optimization Problem}
Due to Lemma~\ref{le_all_constraints} and Lemma~\ref{le_rewritten_cost},
\begin{align}
	&\underset{\hat{u}_{k}}{min} \left(\hat{u}_k^T\vec{R}_k\hat{u}_k + 2\hat{u}_k^T\vec{H}_k\hat{x}_k\right)\label{eq_opt_pr_final}\\
	&s.t.\ \ \vec{M}_{x}(\tilde{H}_k,\tilde{\mathcal{D}}_k)\hat{x}_k + \vec{M}_{u}(\tilde{H}_k,\tilde{\mathcal{D}}_k)\hat{u}_k \leq \vec{n}(\tilde{\mathcal{D}}_k) \quad \text{with}\nonumber
\end{align}
$\vec{R}_k = \sum\limits_{\delta_1 \in \mathcal{D}_k(\tilde{\mathcal{D}}_k)}\! \sum\limits_{\delta_2 \in \mathcal{D}_k(\tilde{\mathcal{D}}_k)}\!\sum\limits_{\delta_3 \in \mathcal{D}_k(\tilde{\mathcal{D}}_k)}\!P_k(\delta_1)P_k(\delta_2)$ $ P_k(\delta_3)\hat{R}(\delta_1,\delta_2,\delta_3,\tilde{H}_k,\tilde{\mathcal{D}}_k)$ and
$\vec{H}_k = \sum\limits_{\delta_1 \in \mathcal{D}_k(\tilde{\mathcal{D}}_k)} \sum\limits_{\delta_2 \in \mathcal{D}_k(\tilde{\mathcal{D}}_k)} $ $ \sum\limits_{\delta_3 \in \mathcal{D}_k(\tilde{\mathcal{D}}_k)}P_k(\delta_1)P_k(\delta_2)P_k(\delta_3)\hat{H}(\delta_1,\delta_2,\delta_3,\tilde{H}_k,\tilde{\mathcal{D}}_k)$
is\\ equal to \eqref{eq_optimization_problem} with $\kappa_k^{(i)}$ and $\mathcal{X}_N(\mathcal{I}_k)$ from Definition~\ref{def_stabilizing}~and~\ref{def_terminal} except for constraints that do not depend on $\hat{u}_{k}$ and terms in the cost function that are constant with respect to $\hat{u}_{k}$.

The matrices $\vec{M}_{x}(\tilde{H}_k,\tilde{\mathcal{D}}_k)$, $\vec{M}_{u}(\tilde{H}_k,\tilde{\mathcal{D}}_k)$, $\vec{n}(\tilde{\mathcal{D}}_k)$, $\hat{R}(\delta_1,\delta_2,\delta_3,\tilde{H}_k,\tilde{\mathcal{D}}_k)$ and $\hat{H}(\delta_1,\delta_2,\delta_3,\tilde{H}_k,\tilde{\mathcal{D}}_k)$ can be computed offline in advance where $\tilde{\mathcal{D}}_k \subseteq [\underline{d},\, \overline{d}]$,  $\delta_1,\delta_2,\delta_3 \in [\underline{d},\, \overline{d}]$ and $\tilde{H}_k \in [\min(\underline{h}, \underline{s}-1),\, \overline{h}]$. 

The optimization problem \eqref{eq_opt_pr_final} can also be written in the form \eqref{eq_final_opt_pr_form} which completes the proof for Theorem~\ref{th_1}.

\subsubsection{Remarks}
The computational effort required to solve \eqref{eq_opt_pr_final} primarily depends on the number of optimization variables $\hat{m} = m(N-\underline{d})$ where $N>\overline{d}$. This is comparable to a standard MPC for a setup without networks which requires $mN$ optimization variables where $N>0$. However, the number of linear inequalities can be much larger.

To reduce number of inequalities in \eqref{eq_constraints_redundant},
$(a)$ include that, for admissible $x_0$, $\hat{x}_k$ certainly satisfies $M_xx_{k-\tilde{H}_k}\leq n_x$ and $M_u\tilde{u}_{k-i}^{(d)} \leq n_u$ for all $i\in[1,\overline{d}+\overline{h}]$, $d\in[\underline{d},\overline{d}]$ and rewrite the resulting constraints in the form $\vec{M}_{y}(\tilde{H}_k,\tilde{\mathcal{D}}_k)\hat{y}_k \leq \vec{n}_y(\tilde{\mathcal{D}}_k)$;
$(b)$ remove redundant constraints using the MPT-Toolbox \cite{Herceg13};
$(c)$ rewrite the resulting constraints in the form \eqref{eq_all_constr_Dk_vec} and remove constraints that do not depend on $\hat{u}_k$.

The matrices in \eqref{eq_opt_pr_final} do not have to be computed for all $\tilde{\mathcal{D}}_k \subseteq [\underline{d},\, \overline{d}]$. Sets $\tilde{\mathcal{D}}_k$ that can occur can be determined using that $\mathcal{I}_k$ does not contain data that depends on $(D_{k+i})_{i \geq -\tilde{H}_k}$.

%%%%%%%%%%%%%%%%%%%%%%%%%%%%%%%%%%%%%%%%%%%%%%%%%%%%%%%%%%%%%%%%%%%%%%%%%%%%%%%%%%%%%%%%%%%%%%%%%%%%%%%%
%%%%%%%%%%%%%%%%%%%%%%%%%%%%%%%%%%%%%%%%%%%%%%%%%%%%%%%%%%%%%%%%%%%%%%%%%%%%%%%%%%%%%%%%%%%%%%%%%%%%%%%%
\section{Simulation Example}
The plant is the controllable third order LTI system
\begin{align*}
	x_{k+1} &= Ax_k+Bu_k = \begin{bmatrix}
		0.8& 0.5& 0\\ 0 &-1.2& 0.2\\ 0& 0& 0.2
	\end{bmatrix}x_k + \begin{bmatrix}
		1& 0\\ 0& 0\\ 0& 1
	\end{bmatrix}u_k.
\end{align*}

The bounds for the random variables $D_k$, $H_k$ and $S_k$ are $\underline{d} = 0$, $\overline{d} = 2$, $\underline{h} = 0$, $\overline{h} = 1$, $\underline{s} = 1$, $\overline{s} = 3$
and the parameters of the homogeneous Markov Process $\mathbf{D}$ are given by
$\begin{bmatrix}
	\mu(0)& \mu(1)& \mu(2)
\end{bmatrix} = \begin{bmatrix}
	0.2& 0.4& 0.4
\end{bmatrix}$ and
\begin{align*}
	\begin{bmatrix}
		\Phi(0,0) &\Phi(0,1) &\Phi(0,2)\\
		\Phi(1,0) &\Phi(1,1) &\Phi(1,2)\\
		\Phi(2,0) &\Phi(2,1) &\Phi(2,2)
	\end{bmatrix}
	= 
	\begin{bmatrix}
		0.4 &0.6 &0\\
		0.4 &0.4 &0.2\\
		0.2 &0.4 &0.4
	\end{bmatrix}.
\end{align*}
The constraints on state and actuating variables are
\begin{align*}
	\begin{bmatrix}
		-10\\ -5\\ -10
	\end{bmatrix} \leq \begin{bmatrix}
		x_{1,k}\\ x_{2,k}\\ x_{3,k}
	\end{bmatrix} \leq \begin{bmatrix}
		10\\ 5\\ 10
	\end{bmatrix}, \quad
	\begin{bmatrix}
		-2\\ -5
	\end{bmatrix} \leq \begin{bmatrix}
		u_{1,k}\\ u_{2,k}
	\end{bmatrix} \leq \begin{bmatrix}
		2\\ 5
	\end{bmatrix} \quad \forall\, k\geq 0
\end{align*}
which can be written in the form \eqref{constraints}. The controller parameters are given by $N = 10$, $Q = \text{diag}(10,100,1)$ and $R = \text{diag}(1,1)$. The resulting set of admissible $x_0$ is depicted in Fig.~\ref{fig2}. All simulations are executed with $x_0 = \begin{bmatrix}
	-4.5& -2.6& -7
\end{bmatrix}^T$ which is admissible.
\begin{figure}[!t]
	\centerline{\includegraphics[width=0.5\columnwidth]{./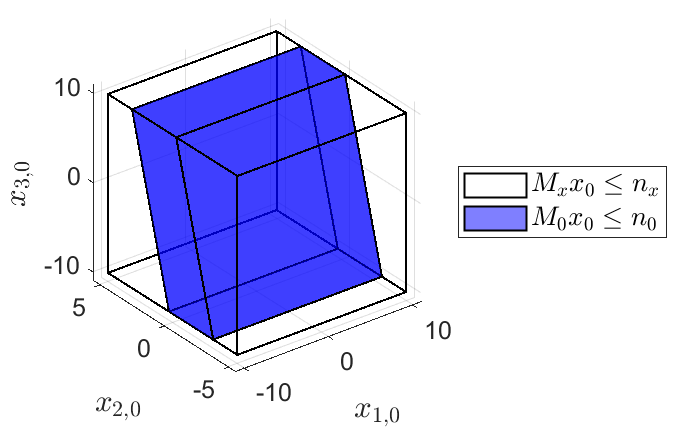}}
	\caption{Set of admissible initial states: $x_0$ satisfying $M_0x_0 \leq n_0$.}
	\label{fig2}
\end{figure}

%\subsection{Simulation Example 1} \label{sec_example_1}
The proposed control strategy is applied in a simulation with randomly generated realizations of $\mathbf{D}$, $\mathbf{H}$ and $\mathbf{S}$. These realizations and the resulting evolution of $x_k$ and $u_k$ are depicted in Fig.~\ref{fig3} which illustrates that applying the proposed control strategy satisfies all constraints on state an actuating variables while $x_k$ converges to zero with probability one.

%\subsection{Simulation Example 2}\label{sec_example_2}
The performance of the proposed stochastic MPC is compared to a deterministic MPC which is obtained by $(a)$ using a buffer to ensure that $D_k = \overline{d}$ for all $k\geq 0$ and $(b)$ solving \eqref{eq_optimization_problem} with $\kappa_k^{(i)} = -Lx_{k+i}$ and the terminal constraint $M_Nx_{k+N}\leq n_N$. These control strategies are compared by comparing the evolution of $\tilde{J}_k = \sum\nolimits_{i=0}^{k} \left(x_i^TQx_i + u_i^TRu_i\right)$.

Three simulations are executed with $H_k = \overline{h}\ \forall\, k$:
$(1)$ the stochastic MPC with $D_k = \overline{d}\ \forall\, k$ and $S_k = \overline{s}\ \forall\, k$;
$(2)$ the stochastic MPC with $D_k = \underline{d}\ \forall\, k$ and $S_k = \underline{s}\ \forall\, k$;
$(3)$ the deterministic MPC (which yields the same results for all possible realizations of  $\mathbf{D}$ and $\mathbf{S}$).
The resulting evolution of $\tilde{J}_k$ is depicted in Fig.~\ref{fig4} which illustrates the potential increase in performance achieved by minimizing the expected value of the cost function: $(i)$ the deterministic MPC is designed by minimizing $\tilde{J}_{\infty}$ for $D_k = \overline{d}\ \forall\, k$. Since this is not the only possible case, applying the stochastic MPC yields a larger cost if $D_k = \overline{d}\ \forall\, k$. However, this difference is quite small. $(ii)$ if $D_k = \underline{d}\ \forall\, k$ and $S_k = \underline{s}\ \forall\, k$, applying the proposed stochastic MPC results in a significantly lower value of the cost function than applying the deterministic MPC. 

The simulation with maximal delays is repeated with (a) an LQR controller such that $u_k = 0\ \forall\, k < \overline{h}$, $u_k = -Lx_k\ \forall\, k \geq \overline{h}$ and (b) the stochastic MPC without the constraints which both results in constraint violations.

\begin{figure}[!t]
	\centerline{\includegraphics[width=0.85\columnwidth]{./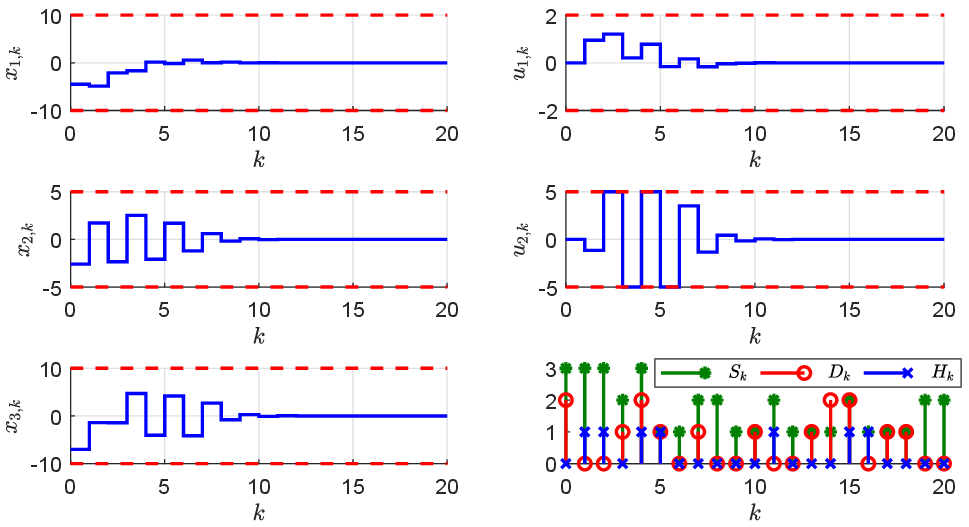}}
	\caption{Simulation with randomly generated realizations of $\mathbf{D}$, $\mathbf{H}$ and $\mathbf{S}$.}
	\label{fig3}
\end{figure}
\begin{figure}[!t]
	\centerline{\includegraphics[width=0.85\columnwidth]{./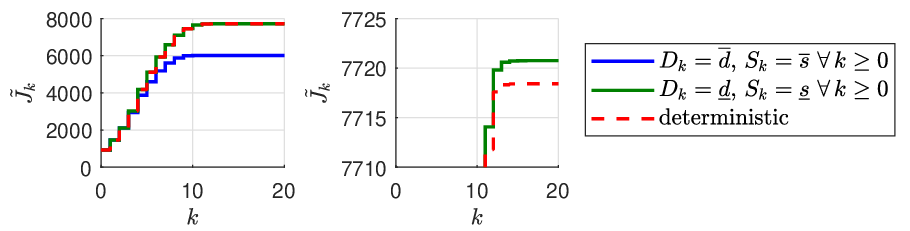}}
	\caption{Comparing stochastic and deterministic MPC for $H_k = \overline{h}\ \forall\, k$.}
	\label{fig4}
\end{figure}
%%%%%%%%%%%%%%%%%%%%%%%%%%%%%%%%%%%%%%%%%%%%%%%%%%%%%%%%%%%%%%%%%%%%%%%%%%%%%%%%%%%%%%%%%%%%%%%%%%%%%%%%
%%%%%%%%%%%%%%%%%%%%%%%%%%%%%%%%%%%%%%%%%%%%%%%%%%%%%%%%%%%%%%%%%%%%%%%%%%%%%%%%%%%%%%%%%%%%%%%%%%%%%%%%
\section{Conclusion and Further Work}
Compared to the MPC minimizing the value of the cost function for $D_k = \overline{d}\ \forall\, k$, the proposed stochastic MPC $(i)$ potentially increases the performance significantly $(ii)$ without deteriorating the performance for $D_k = \overline{d}\ \forall\, k$ significantly.

However, the following issue has to be handled when considering real-world applications: for example, if $x_{k-1}$ and $x_{k}$ are known but $D_{k-1}$ is not, computing probabilities like in Section~\ref{sec_implement_probabilities} utilizes the set of all $d_{k-1}$ for which $x_{k} = Ax_{k-1} + B\tilde{u}_{k-1-d_{k-1}}^{(d_{k-1})}$. This set will typically be empty when considering real-world applications where there is a mismatch between the plant and its model. While this can be handled by, for example, computing the set of all $d_{k-1}$ for which some norm of the difference between $x_{k}$ and $Ax_{k-1} + B\tilde{u}_{k-1-d_{k-1}}^{(d_{k-1})}$ is sufficiently small, including a suitable model of disturbances is considered to be the most important next step.

Additionally, the properties of the stochastic processes might not be known exactly in real-world applications. Deviating transition probabilities do not seem to be problematic like, for example, in the simulation where $D_k = \overline{d}\ \forall\, k$ is enforced. However, constraint violations might occur and feasibility is no longer guaranteed if, e.g., $D_k = d_k > \overline{d}$ occurs.

\newpage
%%%%%%%%%%%%%%%%%%%%%%%%%%%%%%%%%%%%%%%%%%%%%%%%%%%%%%%%%%%%%%%%%%%%%%%%%%%%%%%%%%%%%%%%%%%%%%%%%%%%%%%%
\appendices
%%%%%%%%%%%%%%%%%%%%%%%%%%%%%%%%%%%%%%%%%%%%%%%%%%%%%%%%%%%%%%%%%%%%%%%%%%%%%%%%%%%%%%%%%%%%%%%%%%%%%%%%
\section{}\label{Appendix_compute_x_Hk_hat}
Due to \eqref{eq_uk}, the sequence of actuating variables given by \eqref{eq_uki_stochastic} can be actually applied to the plant if it is possible to choose the controller output $\tilde{u}_{k+i}$ as
\begin{align}
	\tilde{u}_{k+i}^{(d)} = \begin{cases}
		\hat{u}_k^{(i+d)}& i + d \leq N-1\\
		\kappa_{k}^{(i+d)} & N \leq i+d,
	\end{cases} \qquad \  d \in[\underline{d},\overline{d}]\label{eq_5}
\end{align}
for all $i \geq 0$. This is possible if $\kappa_{k}^{(N)}$ can be computed at at time step $k+N-\overline{d}$, if   $\kappa_{k}^{(N+1)}$ can be computed at time step $k+N-\overline{d}+1$ and so on, i.e. it has to be possible to compute $\kappa_{k}^{(j)}$ at time step $k+j-\overline{d}$ for all $j \geq N$.

If $\kappa_{k}^{(l)}$ can be computed at time step $k+l-\overline{d}$ for all $l \in [N,j-1]$, which certainly holds for $j=N$, then $\tilde{u}_{k+i}^{(d)}$ can be computed via \eqref{eq_5} for all $i\geq 0,\, d \in[\underline{d},\overline{d}]$ for which $i+d \leq j-1$. This results in
\begin{align}\label{eq_20}
	u_{k+l} = \tilde{u}_{k+l-D_{k+l}}^{(D_{k+l})} = \begin{cases}
		\hat{u}_k^{(l)}& D_{k+l} \leq l \leq N-1\\
		\kappa_{k}^{(l)}& N \leq l \leq j-1
	\end{cases}
\end{align}
where $\kappa_{k}^{(l)}$ is known at time step $k+j-\overline{d}$ for all $l\in[N,j-1]$. It remains to be shown that this implies that it is also possible to compute $\kappa_{k}^{(j)}$ at time step $k+j-\overline{d}$ in order to show that it is possible to determine $\tilde{u}_{k+i}$ via \eqref{eq_5} for all $i \geq 0$.

Due to \eqref{eq_hk_hat}, $\hat{H}_k^{(j)}+j = \min(\overline{h}+N-1, \overline{s}+N-2,\tilde{H}_k+j)$.
Since $\overline{h}\geq 0$ and $\overline{s} \geq 0$ and since $\tilde{H}_k \geq -1$ due to \eqref{eq_hk_tilde}, $\hat{H}_k^{(j)}+j = \min(\overline{h}+N-1, \overline{s}+N-2)$ for all $j \geq N$.
Since $N > \overline{d} \geq 1$, this yields $\hat{H}_k^{(j)}+j \geq 0$ for all $j \geq N$ so $x_{k+j} = A^{\hat{H}_k^{(j)}+j}x_{k-\hat{H}_k^{(j)}} + \sum\nolimits_{l = -\hat{H}_k^{(j)}}^{j-1} A^{j-1-l}Bu_{k+l}$. Inserting this result in \eqref{eq_kappa_ki_proposed} yields
\begin{align*}\kappa_{k}^{(j)} = -LA^{\hat{H}_k^{(j)}+j}x_{k-\hat{H}_k^{(j)}} - L \sum\nolimits_{l = \overline{d}}^{j-1} A^{j-1-l}Bu_{k+l}
- L \sum\nolimits_{l = -\hat{H}_k^{(j)}}^{\overline{d}-1} A^{j-1-l}B\mathbb{E}\left\{\left.u_{k+l}\right|\mathcal{I}_{k}\right\}.
\end{align*}

Due to \eqref{eq_20}, $(u_{k+l})_{\overline{d}\leq l \leq j-1}$ is known at time step $k+j-\overline{d}$; since $k+j-\overline{d} > k$ for $j \geq N$, $\mathcal{I}_{k}$ is available as well. 

$x_{k-\hat{H}_k^{(j)}}$ with $\hat{H}_k^{(j)}$ given by \eqref{eq_hk_hat} can be computed at time step $k+j-\overline{d}$ with $j \geq N$ as well since $(i)$ $x_{k-\tilde{H}_k}$ is known at time step $k < k+j-\overline{d}$ since $x_{k-H_k}$ is known and, if $k \geq K_s$, $\left(u_{\tilde{k}}\right)_{0\leq \tilde{k} \leq k-S_k}$ is also known, $(ii)$ $x_{k-\hat{H}_k^{(j)}}$ is known at time step $k+j-N+1$ if $\hat{H}_k^{(j)} = \overline{h}+N-j-1$, $(iii)$ $\left(u_{k+i}\right)_{-\tilde{H}_k\leq i \leq -(\hat{H}_k^{(j)}+1)}$ is known at time step $k+j-N+1$ if $\hat{H}_k^{(j)} = \overline{s}+N-j-2$ since then $k+j-N+1-S_{k+j-N+1} \geq k-(\hat{H}_k^{(j)}+1)$ and $(iv)$ if $x_{k-\hat{H}_k^{(j)}}$ can be computed at time step $k+j-N+1$ then it is also known at time step $k+j-\overline{d}$.

Therefore, $\kappa_{k}^{(j)}$ can be computed at time step $k+j-\overline{d}$ and therefore, \eqref{eq_uki_stochastic} can actually be applied to the plant for \eqref{stabilizing_control_law}.

%%%%%%%%%%%%%%%%%%%%%%%%%%%%%%%%%%%%%%%%%%%%%%%%%%%%%%%%%%%%%%%%%%%%%%%%%%%%%%%%%%%%%%%%%%%%%%%%%%%%%%%%
\section{}\label{Appendix_eq_kappa_k+1_N-1}
In order to show that $\kappa_{k+1}^{(N-1)} = -L \mathbb{E}\left\{\left.x_{k+N}\right|\mathcal{I}_{k+1}\right\}$ holds for \eqref{eq_uki+1}, $x_{k+N}$ is written as $x_{k+N} = A^{N+\hat{H}_{k+1}^{(N-1)}-1}x_{k+1-\hat{H}_{k+1}^{(N-1)}} + \sum\nolimits_{j = 1-\hat{H}_{k+1}^{(N-1)}}^{N-1} A^{N-1-j}Bu_{k+j}$
where $\hat{H}_{k+1}^{(N-1)} = \tilde{H}_{k+1}$ due to \eqref{eq_hk_hat} so $x_{k+1-\hat{H}_{k+1}^{(N-1)}}$ can be computed from $\mathcal{I}_{k+1}$. For $j \in [\overline{d}+1, N-1]$, $\mathbb{E}\left\{\left.u_{k+j}\right|\mathcal{I}_{k+1}\right\} = u_{k+j}$ for \eqref{eq_uki+1} so $\mathbb{E}\left\{\left.x_{k+N}\right|\mathcal{I}_{k+1}\right\} - x_{k+N}  
=\sum\nolimits_{j = 1-\hat{H}_{k+1}^{(N-1)}}^{\overline{d}} A^{N-1-j}B\left(\mathbb{E}\left\{\left.u_{k+j}\right|\mathcal{I}_{k+1}\right\} - u_{k+j}\right)$
and therefore $x_{k+N} = \mathbb{E}\left\{\left.x_{k+N}\right|\mathcal{I}_{k+1}\right\} - \sum\nolimits_{j = \hat{H}_{k+1}^{(N-1)}}^{\overline{d}-1}  A^{N-2-j}B\left(\mathbb{E}\left\{\left.u_{k+1+j}\right|\mathcal{I}_{k+1}\right\} - u_{k+1+j}\right)$.
Computing $\kappa_{k+1}^{(N-1)}$ via \eqref{eq_kappa_ki_proposed} and inserting $x_{k+N}$  determined above yields $\kappa_{k+1}^{(N-1)} = -L \mathbb{E}\left\{\left.x_{k+N}\right|\mathcal{I}_{k+1}\right\}$.

\section{}\label{Appendix_eq_kappa_k+1_of_dk_vec}
In this Section it is shown that, for \eqref{uki_k+1_rewritten},  \eqref{eq_kappa_k+1_of_dk+1_vec} yields \eqref{eq_kappa_k+1_of_dk_vec} with $\vec{d}_{k} \in \mathcal{D}_k$ for any $\vec{d}_{k+1} \in \mathcal{D}_{k+1}$. According to \eqref{eq_hk_hat}, $\hat{H}_k^{(i)} = \min(\overline{h}+N-i-1, \overline{s}+N-i-2,\tilde{H}_k)$ and $\hat{H}_{k+1}^{(i-1)}-1 = \min(\overline{h}+N-i, \overline{s}+N-i-1,\tilde{H}_{k+1})-1 = \min(\overline{h}+N-i-1, \overline{s}+N-i-2,\tilde{H}_{k+1}-1)$. Since $\tilde{H}_{k+1} \leq \tilde{H}_k +1$, i.e. since $\tilde{H}_{k+1}-1 \leq \tilde{H}_k$, this implies that $(a)$ $\hat{H}_{k+1}^{(i-1)}-1 \leq \hat{H}_k^{(i)}$ always holds and that $(b)$ if $\hat{H}_{k+1}^{(i-1)}-1 < \tilde{H}_{k+1}-1$ then also $\hat{H}_{k}^{(i)} < \tilde{H}_{k}$ which results in $\hat{H}_{k+1}^{(i-1)}-1 = \min(\overline{h}+N-i-1, \overline{s}+N-i-2) = \hat{H}_{k}^{(i)}$. This can be written as
\begin{subequations}
	\begin{align}
		&\hat{H}_{k+1}^{(i-1)}-1 < \tilde{H}_{k+1}-1\ \Rightarrow\ \hat{H}_{k+1}^{(i-1)}-1 = \hat{H}_k^{(i)}\label{eq_a}\\
		& \hat{H}_{k+1}^{(i-1)}-1 = \tilde{H}_{k+1}-1\ \Rightarrow\ \hat{H}_{k+1}^{(i-1)}-1 \leq \hat{H}_k^{(i)}.\label{eq_b}
	\end{align}
\end{subequations}

Due to \eqref{eq_a}, \eqref{eq_kappa_k+1_of_dk+1_vec} with $\hat{H}_{k+1}^{(i-1)} < \tilde{H}_{k+1}$ can be written as 
\begin{align} \label{eq_2}
	& \bar{\kappa}_{k+1}^{(i-1)}(\vec{d}_{k+1}) = -Lx_{k+i}
	+ L \sum\nolimits_{j = -\hat{H}_{k}^{(i)}}^{\overline{d}} A^{i-1-j}B\left(\bar{u}_{k}^{(j)}(D_{k+j}) -\bar{u}_{k}^{(j)}(d_{k+j})\right).
\end{align}

Since $(x_{k+j})_{-k \leq j \leq 1-\tilde{H}_{k+1}}$ can be computed from $\mathcal{I}_{k+1}$ and $x_{k+j+1}-Ax_{k+j} = Bu_{k+j}$, $Bu_{k+j}$ can be computed from $\mathcal{I}_{k+1}$ for all $j\in[-k,-\tilde{H}_{k+1}]$. Due to \eqref{eq_b}, this implies that $B\bar{u}_{k}^{(j)}(D_{k+j}) = Bu_{k+j}$ can be computed from $\mathcal{I}_{k+1}$ for all $j\in -\hat{H}_{k}^{(i)} \leq j \leq -\hat{H}_{k+1}^{(i-1)}$ if $\hat{H}_{k+1}^{(i-1)} = \tilde{H}_{k+1}$. Therefore, $\mathbb{P}\left[\left.D_{k+j} \! =\! d_{k+j}\right| \mathcal{I}_{k+1}\right] = 0$ if $B\bar{u}_{k}^{(j)}(d_{k+j}) \neq B\bar{u}_{k}^{(j)}(D_{k+j})$ for all $j\in -\hat{H}_{k}^{(i)} \leq j \leq -\hat{H}_{k+1}^{(i-1)}$. Since $\vec{d}_{k+1} \in \mathcal{D}_{k+1}$ implies that $\mathbb{P}\left[\left.D_{k+j} = d_{k+j}\right| \mathcal{I}_{k+1}\right] > 0$, this yields $B\big(\bar{u}_{k}^{(j)}(D_{k+j}) - \bar{u}_{k}^{(j)}(d_{k+j})\big) = 0$ for $\vec{d}_{k+1} \in \mathcal{D}_{k+1}$ so \eqref{eq_2} also holds for $\hat{H}_{k+1}^{(i-1)} = \tilde{H}_{k+1}$ for all $\vec{d}_{k+1} \in \mathcal{D}_{k+1}$.

Inserting that, for \eqref{uki_k+1_rewritten}, $\bar{u}_{k}^{(\overline{d})}(D_{k+\overline{d}}) = \bar{u}_{k}^{(\overline{d})}(d_{k+\overline{d}})$ for any $d_{k+\overline{d}}\in[\underline{d},\overline{d}]$ in \eqref{eq_2} yields
$\bar{\kappa}_{k+1}^{(i-1)}(\vec{d}_{k+1}) = \bar{\kappa}_{k}^{(i)}(\vec{d}_k)$
according to \eqref{kappa_k_bar} where $\vec{d}_k\in\mathcal{D}_k$ since $(i)$ $\mathcal{D}_k$ is the set of all possible $\vec{D}_k$ given $\mathcal{I}_{k}$ and $(ii)$ if $\vec{D}_k = \vec{d}_k$ is possible given $\mathcal{I}_{k+1}$ then it is also possible given $\mathcal{I}_{k} \subseteq \mathcal{I}_{k+1}$.

\section{}\label{Appendix_opt_pr_stability}
Since $-\hat{H}_k^{(i)} \geq \overline{d}$ for all $i \geq \hat{N}$ due to \eqref{eq_hk_hat}, \eqref{kappa_stability} yields $u_{k+i} = -Lx_{k+i}$ for all $i \geq \hat{N}$. Therefore, minimizing 
$\mathbb{E}\big\{ \sum\nolimits_{\tilde{k}=k+N}^{\infty} \big(x_{\tilde{k}}^TQx_{\tilde{k}} + u_{\tilde{k}}^TRu_{\tilde{k}}\big) \big|\mathcal{I}_{k+\alpha}\big\}$ with ${\alpha \in\mathbb{N}_0}$
with respect to $(v_{k+j})_{-\hat{H}^{(N)}_{k} \leq j \leq \overline{d}-1}$ subject to \eqref{kappa_stability} where ${-\hat{H}^{(N)}_{k} \leq -\hat{H}^{(i)}_{k}}$ for all $i \geq N$ can be written as
\begin{align} 
	&\underset{(v_{k+j})_{-\hat{H}^{(N)}_{k} \leq j \leq \overline{d}-1}}{\min} \mathbb{E}\left\{ \sum\nolimits_{\tilde{k}=N}^{\hat{N}-1} \left(x_{k+i}^TQx_{k+i} + u_{k+i}^TRu_{k+i}\right)
	+ x_{k+\hat{N}}^TPx_{k+\hat{N}} \big|\mathcal{I}_{k+\alpha}\right\}\nonumber\\
	& s.t. \ u_{k+i} = -Lx_{k+i} + L \sum\nolimits_{j = -\hat{H}_k^{(i)}}^{\overline{d}-1}\!\! A^{i-1-j}B\left(u_{k+j} -v_{k+j}\right).\label{eq_21}
\end{align}
With $\check{v}_k = [\begin{matrix}
	v_{k+\overline{d}-1}^T&
	\hdots&
	v_{k-\hat{H}_k^{(N)}}^T
\end{matrix}]^T$, $\check{u}_k = [\begin{matrix}
	u_{k+\overline{d}-1}&
	\hdots&
	u_{k-\hat{H}_k^{(N)}}^T
\end{matrix}]^T$ and $\check{x}_k = [\begin{matrix}
	x_{k+N}^T& \check{u}_k^T
\end{matrix}]^T$; \eqref{eq_21} can be written in the form $\underset{\check{v}_k}{\min}\ \mathbb{E}\big\{ 
\check{x}_k^T\check{Q}\check{x}_k + 2\check{v}_k^T\check{H}\check{x}_k + \check{v}_k^T\check{R}\check{v}_k
\big|\mathcal{I}_{k+\alpha}\big\}$ with $\check{R} = \check{R}^T \succeq 0$. Since $\check{x}_k$ is constant w.r.t. $\check{v}_k$ and since this optimization problem is convex, $\check{v}_k = \check{v}_k^*$ is optimal if
\begin{align}
	\check{R}\check{v}_k^* + \check{H}\mathbb{E}\left\{\left.\check{x}_k\right|\mathcal{I}_{k+\alpha}\right\} = 0. \label{eq_34}
\end{align}
In order to show that $v_{k+j}^* = \mathbb{E}\left\{\left.u_{k+j}\right|\mathcal{I}_{k+\alpha}\right\}$ is optimal, it remains to be shown that \eqref{eq_34} holds for
\begin{align} \label{eq_35}
	\check{v}_k^* &= \mathbb{E}\left\{\left.\check{u}_k\right|\mathcal{I}_{k+\alpha}\right\} = \begin{bmatrix}
		0& I
	\end{bmatrix}\mathbb{E}\left\{\left.\check{x}_k\right|\mathcal{I}_{k+\alpha}\right\}.
\end{align}

This certainly holds if $\alpha$ is large enough such that $\mathbb{E}\left\{\left.\check{x}_k\right|\mathcal{I}_{k+\alpha}\right\} = \check{x}_k$ since $\check{v}_k = \check{u}_k$ results in $u_{k+i} = -Lx_{k+i}$ for all $i\geq N$ which minimizes \eqref{eq_21}. In other words, $
\check{R}\begin{bmatrix}
	0& I
\end{bmatrix}\check{x}_k + \check{H}\check{x}_k = 0$ holds for any possible $\check{x}_k$. 

While $\check{x}_k$ is not known for small values of $\alpha$, $\check{R}\begin{bmatrix}
	0& I
\end{bmatrix}\check{x}_k + \check{H}\check{x}_k = 0$ for all possible $\check{x}_k$ so $\check{R}\begin{bmatrix}
	0& I
\end{bmatrix}\mathbb{E}\left\{\left.\check{x}_k\right|\mathcal{I}_{k+\alpha}\right\} + \check{H}\mathbb{E}\left\{\left.\check{x}_k\right|\mathcal{I}_{k+\alpha}\right\} = 0$
so \eqref{eq_34} holds for \eqref{eq_35} for all ${\alpha \in\mathbb{N}_0}$.

\section{}\label{Appendix_eq_Pk}
\subsubsection{Definitions}
%\begin{subequations}
\begin{align}
	&\tilde{P}_1(\underline{i},\overline{i},\delta_{\underline{i}}, (\Theta_i)_{\underline{i}+1 \leq i \leq \overline{i}})\label{Definition_P1_tilde}
	=\mathbb{P}\left[\left. D_{i} \in \Theta_{i} \, \forall\, i \in [\underline{i}+1, \overline{i}] \right| D_{\underline{i}} = \delta_{\underline{i}}
	\right] \quad \forall\, \overline{i} > \underline{i}
\end{align}
which can be written as \eqref{eq_P1_tilde} using \eqref{eq_Markov} and \eqref{eq_Phi}.
$\tilde{P}_2(\underline{i},\overline{i},\delta_{\underline{i}}, \delta_{\overline{i}}, (\Theta_i)_{\underline{i}+1 \leq i \leq \overline{i} - 1}) %\label{Definition_P2_tilde}
\\ =
\mathbb{P}\left[\left.D_{\overline{i}} = \delta_{\overline{i}} \right|
D_{\underline{i}} = \delta_{\underline{i}},\,
D_{i} \in \Theta_{i} \, \forall\, i \in [\underline{i}+1, \overline{i}-1]\right]$
which can be written as \\
$\frac
{\mathbb{P}\left[\left.D_{\overline{i}} = \delta_{\overline{i}},\,
	D_{i} \in \Theta_{i} \, \forall\, i \in [\underline{i}+1, \overline{i}-1] \right|
	D_{\underline{i}} = \delta_{\underline{i}} \right]}
{\mathbb{P}\left[\left.
	D_{i} \in \Theta_{i} \, \forall\, i \in [\underline{i}+1, \overline{i}-1] \right|
	D_{\underline{i}} = \delta_{\underline{i}} \right]}$
for $\underline{i}+1 \leq \overline{i} - 1$ using the definition of conditional probability. Using \eqref{Definition_P1_tilde}, this can be written as \eqref{eq_P2_tilde}.
\begin{align}
	&\tilde{P}_3(\underline{i},\overline{i}, (\Theta_i)_{\underline{i} \leq i \leq \overline{i}}) =
	\mathbb{P}\left[	D_{i} \in \Theta_{i} \, \forall\, i \in [\underline{i}, \overline{i}]\right]= \label{Definition_P3_tilde}\\
	&\sum\nolimits_{d_0 = \underline{d}}^{\overline{d}}
	\mathbb{P}\left[	D_{0} = d_0 \right]
	\mathbb{P}\left[\left.	D_{i} \in \Theta_{i} \, \forall\, i \in [\underline{i}, \overline{i}] \right| D_{0} = d_0 \right]=\nonumber\\
	& \begin{cases}
		\sum\nolimits_{d_0 \in \Theta_{0}}
		\mu(d_0)
		& 0 = \underline{i} = \overline{i}
		\\
		\sum\limits_{d_0 \in \Theta_{0}}
		\mu(d_0)
		\mathbb{P}\left[\left.	D_{i} \in \Theta_{i} \, \forall\, i \in [1, \overline{i}] \right| D_{0} = d_0 \right]
		& 0 = \underline{i} < \overline{i}
		\\
		\sum\nolimits_{d_0 = \underline{d}}^{\overline{d}}
		\mu(d_0)
		\sum\nolimits_{d_{\underline{i}} \in \Theta_{\underline{i}}}
		\mathbb{P}\left[\left.	D_{\underline{i}} = d_{\underline{i}} \right| D_{0} = d_0 \right]
		& 0 < \underline{i} = \overline{i}
		\\
		\begin{matrix}
			\sum\nolimits_{d_0 = \underline{d}}^{\overline{d}}
			\mu(d_0)
			\sum\nolimits_{d_{\underline{i}} \in \Theta_{\underline{i}}}
			\mathbb{P}\left[\left.	D_{\underline{i}} = d_{\underline{i}} \right| D_{0} = d_0 \right]\\
			\mathbb{P}\left[\left.	D_{i} \in \Theta_{i} \, \forall\, i \in [\underline{i}+1, \overline{i}] \right| D_{\underline{i}} = d_{\underline{i}} \right]
		\end{matrix}
		& 0 < \underline{i} < \overline{i}
	\end{cases} \nonumber
\end{align}
for $\underline{i} \leq \overline{i}$ using the law of total probability and \eqref{eq_mu}. Using \eqref{eq_Phi_n}, this can be written as \eqref{eq_P3_tilde}. $\tilde{P}_4(\underline{i},\overline{i}, \delta_{\overline{i}}, (\Theta_i)_{\underline{i} \leq i \leq \overline{i} - 1}) = 
\mathbb{P}\left[\left.D_{\overline{i}} = \delta_{\overline{i}} \right|
D_{i} \in \Theta_{i}  \forall i \!\in\! [\underline{i}, \overline{i}\!-\!1]\right]
= \frac
{\mathbb{P}\left[D_{\overline{i}} = \delta_{\overline{i}},\,
	D_{i} \in \Theta_{i} \forall i \in [\underline{i}, \overline{i}-1] \right]}
{\mathbb{P}\left[
	D_{i} \in \Theta_{i} \, \forall\, i \in [\underline{i}, \overline{i}-1] \right]}$
for $\underline{i} \leq \overline{i} - 1$ which can be written as \eqref{eq_P4_tilde} using \eqref{Definition_P3_tilde}.
%\end{subequations}

\subsubsection{$\mathit{k \geq K_{sd}}$}
For $\tilde{h}_k = s_k-1$, \eqref{eq_relevant_information} can be written as
$P_k(\delta) = \mathbb{P}\left[\left.D_{k-s_k+1} = \delta \right| 
D_{k-s_k} = d_{k-s_k}\right]
= \Phi(d_{k-s_k}, \delta)$
using \eqref{eq_Phi}. For $\tilde{h}_k < s_k-1$, \eqref{eq_relevant_information} can be written as
$P_k(\delta) = \mathbb{P}\big[D_{k-\tilde{h}_k} = \delta \big| 
D_{k-s_k} = d_{k-s_k},  D_{k+i} \in \Theta_{k,i}\, \forall\, i \in [-s_k+1, -\tilde{h}_k-1]
\big]
= \tilde{P}_2(k-s_k,k-\tilde{h}_k,d_{k-s_k},\delta,(\Theta_{k,i})_{-s_k+1 \leq i \leq -\tilde{h}_k - 1})$.

\subsubsection{$\mathit{k < K_{s}}$}
For $k-\tilde{h}_k \leq \underline{d}$, \eqref{eq_relevant_information} can be written as
\begin{align}
	&P_k(\delta) = \mathbb{P}\big[D_{k-\tilde{h}_k} = \delta \big]= \sum\nolimits_{d_0 = \underline{d}}^{\overline{d}}
	\mu(d_0)
	\Phi_{k-\tilde{h}_k}(d_0,\delta)\label{eq_3}
\end{align}
using \eqref{eq_Phi_n}. For $k-\tilde{h}_k > \underline{d}$, \eqref{eq_relevant_information} can be written as $P_k(\delta)=\mathbb{P}\big[D_{k-\tilde{h}_k} = \delta \big| 
D_{k+i} \in \Theta_{k,i}\, \forall\, i \in [\underline{d}-k, -\tilde{h}_k-1]
\big] 
= \tilde{P}_4(\underline{d},k-\tilde{h}_k, \delta, (\Theta_{k,i})_{\underline{d}-k \leq i \leq -\tilde{h}_k - 1})$.

\subsubsection{$\mathit{K_{s} \leq k < K_{sd}}$}
For $k-\tilde{h}_k \leq \underline{d}$, \eqref{eq_relevant_information} can be written as \eqref{eq_3}. For $k-\tilde{h}_k > \underline{d}$, \eqref{eq_relevant_information} yields $P_k(\delta)=\mathbb{P}\big[D_{k-\tilde{h}_k} = \delta \big| 
D_{k+i} \in \hat{\Theta}_{k,i}\, \forall\, i \in [\underline{d}-k, -\tilde{h}_k-1]
\big]
= \tilde{P}_4\big(\underline{d},k-\tilde{h}_k, \delta, (\hat{\Theta}_{k,i})_{\underline{d}-k \leq i \leq -\tilde{h}_k - 1}\big)$.

\newpage

\end{document}